\numberwithin{equation}{section}
\newtheorem{lemma}{Lemma}[section]
\newtheorem{prop}[lemma]{Proposition}
\newtheorem{theorem}[lemma]{Theorem}
\newtheorem{cor}[lemma]{Corollary}
\newtheorem{rem}[lemma]{Remark}
\newcommand{\re}{\begin{rem}\rm}
  \newcommand{\mar}{\end{rem}}
\newtheorem{defi}[lemma]{Definition}
\newcommand{\kla}{\left ( }
\newcommand{\mer}{\right ) }
\newcommand{\ee }{\mathrm{I}\!\!1}
\newcommand{\ketbra}[1]{|{#1}\rangle\langle{#1}|}
\renewcommand{\for}{\begin{eqnarray*}}
\newcommand{\mel}{\end{eqnarray*}}
\newcommand{\kl}{\pl \le \pl}
\newcommand{\gl}{\pl \ge \pl}
\newcommand{\lel}{\pl = \pl}
\newcommand{\ez}{{\mathbb E}}
\renewcommand{\L}{\mathcal{L}}
\newcommand{\nz}{{\mathbb N}}
\newcommand{\rz}{{\mathbb R}}
\newcommand{\Mz}{{\mathbb M}}
\newcommand{\cz}{{\mathbb C}}
\newcommand{\ten}{\otimes}
\DeclareMathOperator{\Lin}{Lin}
\newcommand{\p}{\hspace{.05cm}}
\newcommand{\pl}{\hspace{.1cm}}
\newcommand{\pll}{\hspace{.3cm}}
\newcommand{\hz}{\vspace{0.5cm}}
\newcommand{\qd}{\end{proof}\vspace{0.5ex}}
\newcommand{\Om}{\Omega}
\newcommand{\om}{\omega}
\newcommand{\al}{\alpha}
\newcommand{\si}{\sigma}
\newcommand{\Si}{\Sigma}
\newcommand{\La}{\Lambda}
\newcommand{\la}{\lambda}
\newcommand{\eps}{\varepsilon}
\newcommand{\id}{\iota_{\infty,2}^n}
\newcommand{\E}{{\mathcal E}}
\renewcommand{\P}{\mathcal P}
\newcommand{\pf}{\begin{proof}}
\newcommand{\be}{\left|{\atop}}
\newcommand{\xspace}{\hbox{\kern-2.5pt}}
\newcommand{\xyspace}{\hbox{\kern-1.1pt}}
\newcommand\bra[1]{\langle  #1|}
\newcommand\ket[1]{| #1\rangle}
\definecolor{LightGray}{rgb}{0.94,0.94,0.94}
\definecolor{VeryLightBlue}{rgb}{0.9,0.9,1}
\definecolor{LightBlue}{rgb}{0.8,0.8,1}
\definecolor{DarkBlue}{rgb}{0,0,0.6}
\definecolor{LightGreen}{rgb}{0.88,1,0.88}
\definecolor{MidGreen}{rgb}{0.6,1,0.6}
\definecolor{DarkGreen}{rgb}{0,0.6,0}
\definecolor{DarkGrreen}{rgb}{0,0.8,0}
\definecolor{VeryLightYellow}{rgb}{1,1,0.9}
\definecolor{LightYellow}{rgb}{1,1,0.6}
\definecolor{MidYellow}{rgb}{1,1,0.5}
\definecolor{DarkYellow}{rgb}{0.8,1,0.3}
\definecolor{VeryLightRed}{rgb}{1,0.9,0.9}
\definecolor{LightRed}{rgb}{1,0.8,0.8}
\definecolor{DarkRed}{rgb}{0.8,0.2,0}
\definecolor{DarkRedb}{rgb}{0.6,0.2,0}
\definecolor{DarkLila}{rgb}{0.8,0,1}
\definecolor{Beige}{rgb}{0.96,0.96,0.86}
\definecolor{Gold}{rgb}{1.,0.84,0.}
\definecolor{Goldb}{rgb}{0.7,0.3,0.5}
\definecolor{MyYellow}{rgb}{1.,0.84,0.8}
\newcommand{\lan}{\langle}
\def\CC{\mathbb{C}}
\def\RR{\mathbb{R}}
\def\NN{\mathbb{N}}
\def\LL{\mathbb{L}}
\def\11{\mathbb{I}}
\def\LL{\mathcal{L}}
\def\eps{\varepsilon}
\newcommand{\supp}{\mathop{\rm supp}\nolimits}
\newcommand{\tr}{\mathop{\rm Tr}\nolimits}
\renewcommand{\re}{\mathop{\rm Re}\nolimits}
\newcommand{\spec}{{\rm sp}}
\renewcommand{\bra}[1]{\langle#1|}
\renewcommand{\ket}[1]{|#1\rangle}
\newcommand{\cB}{{\mathcal{B}}}
\newcommand{\cD}{{\mathcal D}}
\newcommand{\cE}{{\mathcal E}}
\newcommand{\cN}{{\mathcal N}}
\newcommand{\cT}{{\mathcal T}}
\newcommand{\cH}{{\mathcal H}}
\newcommand{\cK}{{\mathcal{K}}}
\newcommand{\cJ}{{\mathcal J}}
\newcommand{\cP}{\mathcal{P}}
\newcommand{\cL}{{\mathcal L}}
\newcommand{\Id}{{\mathds{1}}}
\renewcommand{\Id}{{1}}
\def\e{\mathrm{e}}
\DeclareRobustCommand\openone{\leavevmode\hbox{\small1\normalsize\kern-.33em1}}
\renewcommand{\id}{\rm{id}}
\renewcommand{\be}{\begin{equation}}
	\renewcommand{\ee}{\end{equation}}
\newcommand{\bea}{\begin{eqnarray}}
	\newcommand{\eea}{\end{eqnarray}}
\newcommand{\beas}{\begin{eqnarray*}}
	\newcommand{\eeas}{\end{eqnarray*}}
\begin{document}
%\singlespace
\title[Noncommutative change of meausure]{
Stability of logarithmic Sobolev inequalities\\ under a noncommutative change of measure}

\thanks{LG and NL acknowledge support from NSF grant DMS-1700168. NL is supported by NSF Graduate Research Fellowship Program DMS-1144245.  MJ is partially supported by NSF grants  DMS 1800872 and Raise-TAG 1839177. CR acknowledges financial support from the TUM university Foundation Fellowship and by the DFG cluster of excellence 2111 (Munich Center for Quantum Science and Technology).}

\author[M. Junge]{Marius Junge}
\address{Department of Mathematics\\
University of Illinois, Urbana, IL 61801, USA} \email[Marius
Junge]{mjunge@illinois.edu}

\author[N. LaRacuente]{Nicholas LaRacuente}
\address{Department of Physics\\
University of Illinois, Urbana, IL 61801, USA} \email[Nicholas LaRacuente]{laracue2@illinois.edu}
\author[C. Rouz\'{e}]{Cambyse Rouz\'{e}}
\address{Department of Mathematics\\
	Technische Universit\"{a}t M\"{u}nchen} \email[Cambyse Rouz\'{e}]{rouzecambyse@gmail.com}

\maketitle
\begin{abstract}
	We generalize Holley-Stroock's perturbation argument from commutative to quantum Markov semigroups. As a consequence, results on (complete) modified logarithmic Sobolev inequalities and logarithmic Sobolev inequalities for self-adjoint quantum Markov process can be used to prove estimates on the exponential convergence in relative entropy of quantum Markov systems which preserve a fixed state. This leads to estimates for the decay to equilibrium for coupled systems and to estimates for mixed state preparation times using Lindblad operators. Our techniques also apply to discrete time settings, where we show that the strong data processing inequality constant of a quantum channel can be controlled by that of a corresponding unital channel.
\end{abstract}

\section{Introduction}

Quantum information theory concerns the study of information theoretic tasks that can be
achieved using quantum systems (e.g. photons, electrons and atoms) as information carriers, with the long-term
promise that it will revolutionize our way of computing, communicating and designing new materials. However, in
realistic settings, quantum systems undergo unavoidable interactions with their environment. This gives rise to the
phenomenon of decoherence, which leads to a loss of the information initially contained in the system \cite{joos2013decoherence}. Within the
context of emerging quantum information-processing devices, gaining quantitative knowledge about the effect of
decoherence is one of the main near-term challenges for the design of methods to achieve scalable quantum fault-tolerance. Quantifying decoherence is known to be a difficult problem
in general, already for classical systems. Two facts make the situation even more challenging
in the quantum regime: (i) the non-commutativity of quantum observables, and (ii) the
potential presence of multipartite entanglement between subsystems, whose effects are notoriously
hard to characterize in precise mathematical terms.

\hz

Quantum Markov semigroups (QMS) constitute a particularly interesting class of noise for which the time interval between each use of a given
channel can be made arbitrarily small. Most recent approaches aim at quantifying decoherence for channels arising from Markov semigroups using \textit{functional inequalities} (FIs). The latter are differential versions of strong contraction properties of
various distance measures under the action of the semigroup. For instance, the Poincar\'{e} inequality provides an estimate on the spectral gap of the semigroup. Exponentially faster convergence can be achieved via the existence of a logarithmic Sobolev inequality (LSI), which implies a strong contraction of weighted $L^p$-norms under the action of the semigroup known as hypercontractivity. Similarly, the modified logarithmic Sobolev inequality (MLSI) governs the exponential convergence in relative
entropy of any initial state evolving according to the semigroup towards equilibrium. In the commutative setting, one of the key features of logarithmic Sobolev inequalities is their stability under the action of coupling with an auxiliary system. This fact implies that many such FIs can be ultimately deduced from an inequality over a two-point space. For quantum systems the stronger notion of complete (modified) logarithmic inequalities plays a similar role for studying multiplicativity properties of the semigroup.

\hz

\paragraph{\textbf{The Holley-Stroock perturbation argument}}
Our main concern in this paper is a `quantized version' of an argument by Holley and Stroock \cite{Holley1987} which allows to transfer estimates from one measure to another. For any two probability measures $\nu\ll \mu$ on $\rz^n$, their relative entropy is given by
 \[ D(\nu\|\mu)\equiv \operatorname{Ent}_\mu(f) := \int f\ln f d\mu- \int fd\mu\ln \int fd\mu \pl, \]
 where $f$ is defined as the Radon-Nikodym derivative $\frac{d\nu}{d\mu}$. Thanks to the positivity of $G(a,b)=a\log a-b\log b+b-a$ for $a,b>0$ \cite{Holley1987}, this relative entropy admits a variational characterization
 \[ \operatorname{Ent}_\mu(f) \lel \inf_{c>0} \int (f\ln f-c\ln c+c-f)\, d\mu \pl .\]
Therefore, given any other probability measure $\mu'\ll \mu$, the positivity of $G$ implies the following stability property of the relative entropy:
 \begin{equation}\label{ch1} \operatorname{Ent}_{\mu'}(f) \kl \left\|\frac{d\mu'}{d\mu}\right\|_{\infty}\, \operatorname{Ent}_{\mu}(f) \pl ,
 \end{equation}
 whenever the Radon-Nikodym derivative $\frac{d\mu'}{d\mu}$ is uniformly bounded, where $\|.\|_\infty$ refers to the $L^\infty$ norm here. A similar argument holds for the functional derivative of the relative entropy, or \textit{Fisher Information}
 \begin{align}\label{Imu}
  I_{\mu}(f) := \int L(f)\ln f d\mu
 \lel \int \frac{|\nabla(f)|^2}{f}\, d\mu \pl ,
 \end{align}
for any ``regular enough'' $f$, whenever the generator of a diffusion semigroup $(T_t=e^{-tL})_{t\ge 0}$ is given as $L(f)=-\Delta f+\nabla V.\nabla f$ with respect to the derivation $\nabla(f)=(\frac{df}{dx_1},...,\frac{df}{dx_n})$ on $\rz^n$, and for $d\mu=\e^{-V}dx$ and $V\in C^2(\mathbb{R}^n)$. Again, thanks to the positivity of  $\frac{|\nabla f|^2}{f}$, we deduce that, if $\mu\ll \mu'$
 \begin{equation}\label{ch2} I_{\mu}(f) \kl \left\|\frac{d\mu}{d\mu'}\right\|_{\infty}\, I_{\mu'}(f) \pl .
 \end{equation}
The \textit{(modified) logarithmic Sobolev inequality} (MLSI) is defined as follows: for any regular enough function $f$:
 \[ \al \operatorname{Ent}_\mu(f)\kl I_{\mu}(f) \,.\]
The largest constant $\alpha>0$ satisfying this inequality is denoted by $\al_{\mu}$ and called the \textit{modified logarithmic Sobolev constant}. Note that, by the equivalent formulation of the Fisher information in terms of differential operators (\ref{Imu}), this inequality can be merely interpreted as a property of the measure $\mu$. Hence, using the perturbation bounds previously mentioned, the Holley-Stroock perturbation bound is formulated as follows:
\begin{theorem}[Holley-Stroock \cite{Holley1987}] Let $\mu\sim\mu'$ be equivalent measures. Then
 \[ \al_{\mu}\kl \left\|\frac{d\mu}{d\mu'}\right\|_{\infty}\, \left\|\frac{d\mu'}{d\mu}\right\|_{\infty} \al_{\mu'} \pl . \]
\end{theorem}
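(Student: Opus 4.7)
The plan is to combine the two perturbation estimates \reff{ch1} and \reff{ch2} with the assumed MLSI for $\mu$ in order to derive an MLSI for $\mu'$ with a controlled degradation of the constant. Since $\mu\sim\mu'$, both Radon-Nikodym derivatives $d\mu/d\mu'$ and $d\mu'/d\mu$ are well-defined and, by hypothesis, lie in $L^\infty$, so both estimates are available at the same time.

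First I would invoke the definition of $\alpha_\mu$ to write, for every sufficiently regular $f$,
\[ \alpha_\mu\,\operatorname{Ent}_\mu(f) \kl I_\mu(f) \pl .\]
Next I would rewrite the left-hand side by using \reff{ch1} in the rearranged form $\operatorname{Ent}_\mu(f)\geq \|d\mu'/d\mu\|_\infty^{-1}\operatorname{Ent}_{\mu'}(f)$, and the right-hand side by using \reff{ch2} directly, namely $I_\mu(f)\leq \|d\mu/d\mu'\|_\infty\,I_{\mu'}(f)$. Chaining these bounds produces
\[ \frac{\alpha_\mu}{\|d\mu/d\mu'\|_\infty\,\|d\mu'/d\mu\|_\infty}\,\operatorname{Ent}_{\mu'}(f) \kl I_{\mu'}(f) \pl . \]
Since $\alpha_{\mu'}$ is by definition the largest constant for which such an inequality can hold for all admissible $f$, this yields $\alpha_{\mu'}\geq \alpha_\mu/(\|d\mu/d\mu'\|_\infty\,\|d\mu'/d\mu\|_\infty)$, which upon rearrangement is precisely the asserted estimate $\alpha_\mu\leq \|d\mu/d\mu'\|_\infty\,\|d\mu'/d\mu\|_\infty\,\alpha_{\mu'}$.

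There is essentially no obstacle in this commutative setting: the two ingredients \reff{ch1} and \reff{ch2} follow respectively from the pointwise positivity of the relative-entropy integrand $G(a,b)=a\log a-b\log b+b-a$ and of $|\nabla f|^2/f$, after which the proof reduces to an elementary algebraic manipulation. The genuine difficulty--and the motivation for the rest of the paper--is to find the correct substitutes for \reff{ch1} and \reff{ch2} once the measures $\mu,\mu'$ are replaced by noncommuting density operators, where the underlying pointwise positivity arguments break down and one must instead deploy operator-valued inequalities and KMS-type symmetries in order to recover analogous change-of-measure bounds.
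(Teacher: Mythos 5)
Your proposal is correct and follows exactly the route the paper intends: the theorem is presented as an immediate consequence of the two perturbation bounds \reff{ch1} and \reff{ch2}, obtained by chaining $\operatorname{Ent}_{\mu'}(f)\le\|d\mu'/d\mu\|_\infty\operatorname{Ent}_\mu(f)$ and $I_\mu(f)\le\|d\mu/d\mu'\|_\infty I_{\mu'}(f)$ with the defining inequality for $\alpha_\mu$ and optimizing over $f$. The directions of both estimates and the final rearrangement are all handled correctly.
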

We refer to \cite{ledoux2001logarithmic} for a wealth of interesting examples, in particular a derivation of logarithmic Sobolev inequalities at finite temperature using known estimates at infinite temperature. From a more applied angle the most impressive application of MLSI is the entropic exponential convergence of the corresponding semigroup $(\mathcal{P}_t)_{t\ge 0}$: $$\operatorname{Ent}_\mu(\mathcal{P}_t(f))\le \e^{-\alpha t}\operatorname{Ent}_\mu(f)\,.$$
The best constant working for all $f$ and $t\ge 0$ is exactly the MLSI constant $\al_{\mu}$.
\hz

Another standard procedure previously used to obtain estimates for the above entropy decay is to use an equivalent differential formulation of the notion of hypercontractivity, also known as \textit{logarithmic Sobolev inequalities} (or LSI) \cite{ledoux2001logarithmic,bakry2013analysis}. Introduced by Bobkov and Tetali \cite{Bobkov2003} for the study of Markov chains over discrete configuration spaces, the modified logarithmic Sobolev inequality turns out to be more stable. It is equivalent to LSI for classical diffusions, but provides estimates on the entropy decay of non-hypercontractive jump processes \cite{[D02]}.

\hz

Quantum functional inequalities are notoriously harder to derive than their classical analogues. For instance, only the Poincar\'{e} inequality has been shown to hold for lattice spin systems subject to the so-called \textit{heat-bath} and \textit{Davies} semigroups, under some conditions on the equilibrium Gibbs state of these evolutions \cite{Kastoryano2014,Temme2014}. The positivity of the (modified) logarithmic Sobolev constant in these settings is still unknown, and the subject of recent work \cite{bardet2019modified}. Hence, it would be very desirable to have a quantum version of Holley-Stroock's argument, because it would allow to transfer results from one reference state (say the completely mixed state) to another (say a Gibbs state at finite temperature). As we have seen, the main ingredients for the classical proof are (i) \emph{variational principle}, (ii) a good understanding of the notion of \emph{gradient}, and (iii) the  pointwise positivity of the \emph{Fisher information function} $(\nabla f,\nabla\ln f)$. Generalizing them to the quantum setting requires additional deep insight from the theory of quantum Markov semigroups and operator algebras. Such an approach is facilitated by recent developments of trace inequalities in quantum information theory.
\hz

\paragraph{\textbf{Quantum (modified) logarithmic Sobolev inequalities}} Before we state the main contribution of this paper, we first briefly recapitulate the current landscape of quantum functional inequalities: despite the existence of logarithmic Sobolev inequalities \cite{[OZ99],[TPK14]} for \textit{primitive} quantum Markov semigroups, that is for semigroups possessing a unique invariant state, it was shown in \cite{bardet2018hypercontractivity} that these inequalities cannot be derived for non-primitive semigroups. In particular, the natural notion of a logarithmic Sobolev inequality for semigroups of the form $(\cP_t\otimes\id_R)_{t\ge 0}$ given some reference system $R$, as previously introduced in \cite{[BK16a]}, is known to fail at providing entropic convergence. Just as in the classical setting, this reason motivates the introduction of a modified logarithmic Sobolev inequality for quantum Markov semigroups. The quantum MLSI was introduced by Kastoryano and Temme in \cite{[KT13]} for primitive evolutions. In \cite{BarEID17}, Bardet showed that the MLSI can also be extended to non-primitive semigroups. This led the authors of \cite{gao2018fisher} to define a notion of complete MLSI (CLSI) for the study of the convergence of the tensor product evolution of a given quantum Markov semigroup with the identity map on an arbitrarily large system. The CLSI is conjectured to hold in full generality for finite dimensional quantum Markov semigroups. Progress has been made in that direction in \cite{gao2018fisher} where it was shown that almost all finite dimensional quantum Markov semigroups that are self-adjoint with respect to the Hilbert-Schmidt inner product satisfy CLSI. These semigroups can be interpreted as evolutions occurring at infinite temperature, i.e. semigroups $(\mathcal{P}_t=e^{-t\L})_{t\ge 0}$ which are self-adjoint with respect to the trace.

\hz

For a semigroup of completely positive unital maps $\mathcal{P}_t:\mathcal{B}(\cH) \to \mathcal{B}(\cH)$ and generator $\cL:=\left.\frac{d}{dt}\right|_{t=0}\cP_t$, we denote by $\mathcal{P}_{t*}$ the adjoint with respect to the trace $\tr(\mathcal{P}_{t*}(\rho)X)=\tr(\rho \mathcal{P}_t(X))$, and $\E_*=\lim_{t\to \infty}\mathcal{P}_{t*}$ \cite{carbone2013decoherence}. In analogy with the classical setting, $\L$ is said to satisfy a \textit{modified logarithmic Sobolev inequality} if there exists a constant $\alpha>0$ such that for all density matrices $\rho$:
\begin{align*}
 D(\mathcal{P}_{t*}(\rho)\|\E_{*}(\rho))\kl \,\e^{-\alpha t} D(\rho\|\E_{*}(\rho))\pl .
 \end{align*}
The largest constant $\al$ such that this inequality holds for all $\rho$ is denoted by $\al_{\operatorname{MLSI}}(\L)$. Similarly, we denote by $\al_{\operatorname{CLSI}}(\L)$ the largest constant $\tilde{\alpha}$ such that
\begin{align*}
 D((\mathcal{P}_{t*}\otimes \id_R)(\rho)\|(\E_{*}\otimes\id_R)(\rho))\kl \,\e^{-\tilde{\alpha} t} D(\rho\|(\E_{*}\otimes\id_R)(\rho))\pl
\end{align*}
holds for all $t\ge 0$, any reference system $\cH_{R}$, and any density matrix $\rho$ on $\cH\ten \cH_R$. The advantage of the complete version is that for any two generators $\cL_1$ and $\cL_2$:
 \[ \al_{\operatorname{CLSI}}(\L_1\ten \operatorname{id}+\operatorname{id} \ten \L_2)\gl \min\{\al_{\operatorname{CLSI}}(\L_1),\al_{\operatorname{CLSI}}(\L_2)\} \pl .\]
In this article, we make another step towards proving CLSI for any finite dimensional quantum Markov semigroup by adapting the Holley-Stroock argument to the quantum setting, based  on the seminal work of Carleen and Maas \cite{Carlen20171810}. Following Carlen-Maas, the generator of a QMS satisfying the so-called \textit{detailed balance condition} (see Section \ref{sec:QMS} for more details) is given by
 \[
\cL(X)=-\sum_{j\in\cJ}\,\Big(\e^{-\omega_j/2}\,A_j^*[X,A_j]+\,\e^{\omega_j/2}[A_j,X]A_j^*\Big)\,. \]
Here, the Bohr frequencies $\omega_j\in\RR$ are determined by the additional condition $\si A_j\si^{-1}=e^{-\om_j}A_j$, for some full-rank state $\sigma$ such that $\cL_*(\sigma)=0$. Choosing these frequencies to be equal to $0$, we end up with the corresponding noncommutative \textit{heat semigroup}:
 \[
\cL_0(X)=-\sum_{j\in\cJ} \Big(A_j^*[X,A_j]+[A_j,X]A_j^*\Big)
 \lel \sum_j A_j^*A_jX+XA_jA_j^*-A_jXA_j^*-A_j^*XA_j \pl. \]

In its simplest form, our noncommutative Holley-Stroock argument can be stated as follows:
\begin{theorem}\label{main} Assume that $(\cP_t=\e^{-t\cL})_{t\ge 0}$ is a primitive quantum Markov semigroup with corresponding full-rank invariant state $\si=\sum_k \si_k |k\rangle \langle k|$ and satisfies the detailed balance condition. Then
 \[ \al_{\operatorname{CLSI}}(\L_0)\kl \frac{\max_k \si_k}{\min_k \si_k} \,\max_j e^{\om_j/2} \al_{\operatorname{CLSI}}(\L) \pl .\]
\end{theorem}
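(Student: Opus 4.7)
The plan is to mirror the three-step classical Holley--Stroock argument by quantizing each of its ingredients: the variational characterization of the relative entropy, the algebraic notion of gradient, and the pointwise positivity of the Fisher information density. The key structural observation is that $\cL_0$ and $\cL$ share the same jump operators $\{A_j\}$ and differ only through the modular weights $\e^{\pm\om_j/2}$ imposed by the detailed balance condition $\si A_j \si^{-1}=\e^{-\om_j}A_j$. Accordingly, I aim to prove the equivalent reformulation
\[
\al_{\CLSI}(\cL_0)\, D(\rho\,\|\,\si) \;\leq\; C\cdot I_\cL(\rho)\qquad\text{for all states }\rho\text{ on }\cH\otimes\cH_R,
\]
with $C=\tfrac{\max_k\si_k}{\min_k\si_k}\max_j \e^{\om_j/2}$, which upon passing to the infimum over $\rho$ yields the stated inequality.

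The first main step is a Fisher information comparison via the Carlen--Maas framework. I would write $I_\cL(\rho)$ in its modular double-integral representation
\[
I_\cL(\rho)\;=\;\sum_{j\in\cJ} \e^{\om_j/2}\int_0^1 \tr\!\Bigl([A_j,\log\rho-\log\si]^\ast\,\Omega_{\rho,\si}^{(s)}\!\bigl([A_j,\log\rho-\log\si]\bigr)\Bigr)\, ds,
\]
where $\Omega_{\rho,\si}^{(s)}$ is the positive modular-interpolation superoperator, and similarly for $I_{\cL_0}$ obtained by $\si\to\tau=\Id/d$ and $\om_j\to 0$. The two-sided operator inequality $(d\max_k\si_k)^{-1}\si\leq\tau\leq(d\min_k\si_k)^{-1}\si$, combined with the pointwise positivity of the Carlen--Maas integrand---the noncommutative counterpart of $|\nabla f|^2/f\geq 0$---then produces a multiplicative comparison $I_{\cL_0}(\rho)\leq A\cdot I_\cL(\rho)$ in which a power of the eigenvalue ratio $\max_k\si_k/\min_k\si_k$ and the modular-weight factor involving $\max_j \e^{\om_j/2}$ appear with the exponents demanded by the integrand.

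The second main step is a relative entropy comparison. Because the naive inequality $D(\rho\|\si)\leq B\,D(\rho\|\tau)$ fails (each side vanishes at its own reference), I would work instead with the Carlen--Maas entropy functional $\mathrm{Ent}_\si(X)=\tr(\si X\log X)-\tr(\si X)\log\tr(\si X)$ evaluated on the modular density of $\rho$. This functional agrees with $D(\rho\|\si)$ when $X$ and $\si$ commute and admits the noncommutative variational formula
\[
\mathrm{Ent}_\si(X)\;=\;\inf_{c>0}\tr\!\bigl(\si\,G(X,c)\bigr),\qquad G(a,b)=a\log a-b\log b-(\log b)(a-b),
\]
with operator positivity of $G$ (via Klein's inequality) being exactly the ingredient that Holley--Stroock exploit in the commutative case. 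Substituting the $\si$-minimizer into the $\tau$-variational expression and absorbing the difference through the operator inequalities above then yields the desired stability with the missing power of the eigenvalue ratio. Combined with the CLSI of $\cL_0$ and the Fisher information comparison, this gives the inequality with the stated constant $C$.

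The complete version of the bound follows automatically because every operator inequality and integral representation used tensorizes with $\id_R$: the spectral bounds $(d\max_k\si_k)^{-1}\si\otimes\Id_R\leq\tau\otimes\Id_R\leq (d\min_k\si_k)^{-1}\si\otimes\Id_R$ retain the same constants, the Carlen--Maas integrand for $\cL\otimes\id_R$ inherits positivity, and the variational entropy formula is manifestly stable. The main obstacle lies in the Fisher information step: the modular integrand $\Omega_{\rho,\si}^{(s)}$ entangles $\rho$ and $\si$ in a way that is not a simple Radon--Nikodym substitution, and controlling the modular weights $\e^{\om_j/2}$ uniformly in the interpolation parameter $s\in[0,1]$ and in the jump index $j$---while preserving positivity of the integrand---is where genuine quantum input beyond the classical argument is required.
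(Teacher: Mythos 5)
Your overall architecture is the right one --- a two-step Holley--Stroock argument, a change of variable through the modular map $\Gamma_\si(X)=\si^{1/2}X\si^{1/2}$, and a double-operator-integral representation of the Fisher information --- but both steps, as you have set them up, have genuine gaps. For the entropy comparison, you correctly observe that $D(\rho\|\si)\le B\,D(\rho\|\tau)$ fails and propose to repair this with the functional $\mathrm{Ent}_\si(X)=\tr(\si X\log X)-\tr(\si X)\log\tr(\si X)$ and an operator-Klein variational formula. But, as you yourself note, this functional equals $D(\Gamma_\si(X)\|\si)$ only when $X$ and $\si$ commute; in the noncommuting case it is a different quantity, and you supply no bridge back to the relative entropy that actually appears in \ref{MLSI}. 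The paper's substitute for the classical pointwise positivity of $G$ is not a variational formula at all: it is the data processing inequality for Lindblad's relative entropy $D_{\operatorname{Lin}}$ (extended to non-normalized positive operators), applied to the completely positive, trace non-increasing map $\Phi=\Lambda^{-1}\Gamma_\si$ with $\Lambda=\max_k\si_k$, completed to a channel by adjoining the classical block $\tr\big((1-\si/\Lambda)\,\cdot\,\big)$, together with the chain rule $D_{\operatorname{Lin}}(X\|Y)=D_{\operatorname{Lin}}(X\|\cE_*(X))+D_{\operatorname{Lin}}(\cE_*(X)\|Y)$ of Lemma \ref{chaintule}. This yields $D(\Gamma_\si(X)\|\E_*\circ\Gamma_\si(X))\le \max_k\si_k\, D_{\operatorname{Lin}}(X\|\E_0(X))$, which is the inequality the argument actually needs.

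The second gap is that your Fisher-information comparison is taken at the wrong pair of arguments. You compare $I_{\cL_0}(\rho)$ with $I_{\cL}(\rho)$ at the same state, but the chain of inequalities only closes if the CLSI hypothesis for $\cL_0$ is applied to the modular density $X=\Gamma_\si^{-1}(\rho)$, so what is required is
\begin{equation*}
\operatorname{EP}_{\cL_0}(X)\;\le\;\max_k\si_k^{-1}\,\max_j \e^{|\om_j|/2}\;\operatorname{EP}_{\cL}(\Gamma_\si(X))\,,
\end{equation*}
a comparison of the two entropy productions at \emph{different} operators related by $\Gamma_\si$. A two-sided spectral inequality between $\si$ and the normalized trace cannot produce this, because the vectors entering the two quadratic forms, $\delta_{A_j}(X)$ and $\Gamma_\si(\delta_{A_j}(X))$, are themselves different; the correct tool is the joint monotonicity of $(V,A)\mapsto\langle V,[A]_0^{-1}(V)\rangle_{\operatorname{HS}}$ under completely positive trace non-increasing maps (Theorem 5 of Hiai--Petz, Proposition \ref{lieb}), applied to $\Phi'=(\max_k\si_k^{-1})^{-1}\Gamma_\si^{-1}$ after first absorbing the Bohr frequencies through $[\rho]_{\om_j}\le\max_j\e^{|\om_j|/2}[\rho]_0$. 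Your constant is nonetheless correct, since $\{A_j\}=\{A_j^*\}$ forces the frequencies to occur in $\pm$ pairs, so $\max_j\e^{\om_j/2}=\max_j\e^{|\om_j|/2}$; and your remark that everything tensorizes with $\id_R$ is accurate and is exactly how the paper passes from MLSI to CLSI.
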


\hz
As an application of this result, we consider a primitive quantum Markov semigroups $(\mathcal{P}_{t*}=e^{-t\L_*})_{t\ge 0}$ on $\mathcal{B}(\cH)$  for finite dimensional $\cH$, which produces a certain full-rank state
$\si=\sum_k \si_k |k\rangle \langle k|$ :
 \[ \forall \rho:\,\lim_{t\to \infty} \mathcal{P}_{t*}(\rho) \lel \si  \quad \mbox{and}\quad
 \al_{\operatorname{CLSI}}(\L)>0 \pl .\]
 Our lower bound for $\al_{\operatorname{CLSI}}$ depends in an explicit way on the ratios $\frac{\si_k}{\si_l}$. On a suitable chosen inner product $(.,.)$ the derivations stabilizing $\si$ are exactly given by commutators with respect to matrix units $|k\rangle \langle j|$. In other words the density `determines' its own derivation $\delta$ and the corresponding gradient form $(\delta(f),\delta(f))$, in contrast to the above classical setting.
In our construction, we have to work with invertible densities if we want to have complete logarithmic Sobolev inequalities, and hence our results are complementary to the results in \cite{verstraete2009quantum,kraus_preparation_2008} on quantum Markov semigroups producing pure states.

\hz

\paragraph{\textbf{Outline of the paper}} In Section \ref{sec:QMS}, we recall basic aspects of the theory of quantum Markov semigroups and complete modified logarithmic Sobolev inequalities. In particular, we derive a useful form for the entropy production of a semigroup by means of noncommutative differential calculus. The essence of the quantum Holley-Stroock perturbation argument is first provided in Section \ref{sec:qHolley} where we compare a non-unital quantum Markov semigroup to a corresponding unital one. In Section \ref{HSMLSILSI}, we extend the previous argument to (i) non-primitive quantum Markov semigroups and (ii) the logarithmic Sobolev inequality. A similar argument is given in Section \ref{sec:SDPI} in order to derive strong data processing inequalities for non self-adjoint quantum channels. Sections \ref{sec:stateprep} and \ref{sec:therm} focus on applications to the dissipative preparation of mixed state and Gibbs samplers. 

\paragraph{\textbf{Notations and definitions}}
Let $(\cH,\langle .|.\rangle)$ be a finite dimensional Hilbert space of dimension $d_\cH$. We denote by $\mathcal{B}(\cH)$ the space of bounded operators on $\cH$,  by $\mathcal{B}_{\operatorname{sa}}(\cH)$ the subspace of self-adjoint operators on $\cH$, i.e. $\mathcal{B}_{\operatorname{sa}}(\cH)=\left\{X\in\mathcal{B}(\cH);\ X=X^*\right\}$, and by $\mathcal{B}_+(\cH)$ the cone of positive semidefinite operators on $\cH$, where the adjoint of an operator $Y$ is written as $Y^*$. The identity operator on $\cH$ is denoted by $\Id_\cH$, dropping the index
$\cH$ when it is unnecessary. In the case when $\cH\equiv \CC^{\ell}$, $\ell\in\NN$, we will
also use the notation $\Id$ for $\Id_{\CC^\ell}$. Similarly, we will denote by $\operatorname{id}_{\cH}$, or simply $\operatorname{id}$, resp. $\operatorname{id}_\ell$, the identity superoperator on $\mathcal{B}(\cH)$ and $\mathcal{B}(\CC^\ell)$, respectively. We denote by $\mathcal{D}(\cH)$ the set of positive semidefinite, trace one operators on $\cH$, also called \emph{density operators}, and by $\cD_+(\cH)$ the subset of full-rank density operators. In the following, we will often identify a density matrix $\rho\in\mathcal{D}(\cH)$ and the \emph{state} it defines, that is the positive linear functional $\mathcal{B}(\cH)\ni X\mapsto\tr(\rho \,X)$.

Given two positive operators $\rho,\sigma\in \mathcal{B}_+(\cH)$, the relative entropy between $\rho$ and $\sigma$ is defined as follows:
\begin{align*}
	D(\rho\|\sigma):=\left\{\begin{aligned}
		&\tr(\rho\,(\ln\rho-\ln\sigma))\,\,\,\,\supp(\rho)\subseteq\supp(\sigma)\\
		&+\infty\,\,\qquad\qquad\qquad\text{else}
	\end{aligned}\right.
\end{align*}

We recall that, given $\cN\subset \mathcal{B}(\cH)$ a finite dimensional von-Neumann subalgebra of $\mathcal{B}(\cH)$ and a full-rank state $\sigma\in\cD(\cH)$, a linear map $\cE:\mathcal{B}(\cH)\to\cN$ is called a \textit{conditional expectation} with respect to $\sigma$ of $\mathcal{B}(\cH)$ onto $\cN$ if the following conditions are satisfied:
\begin{itemize}
	\item[(i)] For all $X\in\mathcal{B}(\cH)$, $\|\cE(X)\|_\infty\le \|X\|_\infty$;
	\item[(ii)] For all $X\in \cN$, $\cE(X)=X$;
	\item[(iii)] For all $X\in\mathcal{B}(\cH)$, $\tr(\sigma\, \cE(X))=\tr(\sigma X)$.
\end{itemize}

\section{Quantum Markov semigroups and entropy decay}\label{sec:QMS}

In this section, we briefly review the notions of quantum Markov semigroups and their related noncommutative derivations on the algebra $\mathcal{B}(\cH)$ of bounded operators on a finite-dimensional Hilbert space, and explain how in this framework, the generator of a QMS should be interpreted as a noncommutative second order differential operator. We will have to recall and adapt some of the notations from the seminal papers by Carlen and Maas \cite{Carlen20171810,carlen2018non} for Lindblad generators satisfying the detailed balance condition (see also \cite{fagnola2007generators}).

\hz

\textbf{Quantum Markov semigroups and noncommutative derivations:} The basic model for the evolution of an open system in the Markovian regime is given by a quantum Markov semigroup (or QMS) $(\cP_t)_{t\ge0}$ acting on $\mathcal{B}(\cH)$. Such a semigroup is characterised by its generator, called the Lindbladian $\LL$, which is defined on $\mathcal{B}(\cH)$ by $\cL(X)={\lim}_{t\to 0}\,\frac{1}{t}\,(X-\cP_t(X))$ for all $X\in\mathcal{B}(\cH)$, so that $\cP_t=\e^{-t\cL}$\footnote{Let us note that our sign convention is opposite to the one usually used in the community of open quantum systems, but more common in abstract semigroup theory.}. The QMS is said to be \textit{primitive} if it admits a unique full-rank invariant state $\sigma$. In this paper, we exclusively study QMS that satisfy the following \textit{detailed balance condition} with respect to some full-rank invariant state $\sigma$ (also referred to as GNS-symmetry): for any $X,Y\in\mathcal{B}(\cH)$ and any $t\ge 0$:
\begin{align}\label{eq:DBC}\tag{$\sigma$-DBC}
\tr(\sigma\,X^*\cP_t(Y))=\tr(\sigma\,\cP_t(X)^*Y)\,.
\end{align}
In particular, this condition is known to be equivalent to (i) self-adjointness of the generator with respect to the so-called KMS inner product 
\begin{align}\label{def:KMS}
\langle A,B\rangle_{\sigma}:=\tr(\sigma^{\frac{1}{2}}A^*\sigma^{\frac{1}{2}}B)
\end{align}
and (ii) commutation with the modular group of $\sigma$: $\Delta_\sigma^{it}\circ \cL=\cL\circ \Delta_\sigma^{it}$ for all $t\in \RR$, where $\Delta_\sigma(X):=\sigma X\sigma^{-1}$. It was also shown in \cite{Carlen20171810} that the generator of such semigroups can take the following GKLS form (\cite{Lind,[GKS76]}): for all $X\in\mathcal{B}(\cH)$,
\begin{align}\label{eqlindblad}
\cL(X)=-\sum_{j\in\cJ}\,\Big(\e^{-\omega_j/2}\,A_j^*[X,A_j]+\,\e^{\omega_j/2}[A_j,X]A_j^*\Big)\,.
\end{align}
where the sum runs over a finite number of \textit{Lindblad operators} $\{A_j\}_{j\in\mathcal{J}}=\{A_j^*\}_{j\in\mathcal{J}}$ and $[\cdot,\cdot]$ denotes the commutator defined as $[X,Y]:=XY-YX$, $\forall X,Y\in\mathcal{B}(\cH)$, and $\omega_j\in\RR$.  Moreover, the Lindblad operators $A_j$ satisfy the following relations:
\begin{align}\label{eq}
\forall s\in\mathbb{R},\, \Delta^s_\sigma(A_j):=\sigma^s\,A_j\,\sigma^{-s}=\e^{-\omega_js}\,A_j\,\qquad \Rightarrow \qquad \delta_{A_j}(\ln\sigma)=-\omega_j A_j\,,
\end{align}
where the second identity comes from differentiability of the first one at $s=0$. Therefore, the reals $\omega_j$ can be interpreted as differences of eigenvalues of the Hamiltonian corresponding to the Gibbs state $\sigma$, also called \textit{Bohr frequencies}. It is important to note that $\L$ is the generator in the Heisenberg picture. The generator $\cP_{t*}=e^{-t\L_{*}}$ in the Schr\"{o}dinger picture is defined via
%\textcolor{blue}{I had taken conventions of $\cL_*$ for (pre)-adjoint semigroups}
 \[ \tr(\L_{*}(\rho)X) \lel \tr(\rho \,\L(X)) \pl. \]
According to \cite[Remark 3.3]{Carlen20171810} the adjoint has the form
 \begin{align}\label{adjoint}
  \L_{\textcolor{blue}{*}}(\rho)&= -\sum_{j} \Big(e^{-\om_j/2}[A_j\rho,A_j^*]+e^{\om_j/2}[A_j^*,\rho A_j] \Big) \nonumber \\
  &= \sum_j e^{-\om_j/2}(A_j^*A_j\rho-A_j\rho A_j^*)+ e^{\om_j/2}(\rho A_jA_j^*-A_j^*\rho A_j)  \pl .
  \end{align}

The generator $\cL_0:=\sum_{j\in\cJ}\cL_{A_j}$, corresponding to taking all the Bohr frequencies to $0$, satisfies the detailed balance condition with respect to the completely mixed state $\Id/d_\cH$. Because of its analogy with the classical diffusive case, its corresponding QMS is usually called the \textit{heat semigroup}. In fact, given a Lindblad operator $A$, the generators $\cL_A:=[A^*,[A,.]]$ satisfies the following non-commutative integration by parts:
\[ \tr(X^*\cL_A(X)) \lel \tr(\delta_A(X)^*\delta_A(Y)) \pl =\tr(\cL_A(X)^*\,Y).\]
where $\delta_A(X) := [A,X]$ is a noncommutative \textit{derivation}. We may also consider $B=\frac{A+A^*}{\sqrt{2}}$ and $C=\frac{A-A^*}{\sqrt{2}i}$ and observe that
\begin{equation}\label{saver1}
\cL_A(X):=(B^2+C^2)X+X(B^2+C^2)-2(BXB+CXC)\,.
\end{equation}
has the standard form of a self-adjoint Lindbladian, with corresponding self-adjoint Lindblad operators $B$ and $C$, and in particular is $^*$-preserving.
In the GNS-symmetric case, the integration by parts formula reads as follows:
\begin{align}\label{intbypart}
\langle \cL(X),\,Y\rangle_\sigma=\sum_{j\in\cJ}\,\langle \delta_{A_j}(X),\,\delta_{A_j}(Y)\rangle_\sigma\,,
\end{align}
where the KMS inner product was defined in (\ref{def:KMS}). Because of their particular symmetry property, self-adjoint semigroups (that is w.r.t. the Hilbert-Schmidt inner product) are currently better understood than their GNS-symmetric generalizations \cite{[MHFW15],bardet2019group}. The purpose of this paper is to derive a technique to transfer estimates on the entropic rate of convergence towards equilibrium of $(\e^{-t\cL})_{t\ge 0}$ in terms of that of $(\e^{-t\cL_0})_{t\ge 0}$. The idea is to use the following commuting diagram
 \begin{equation}\label{dia}
 \begin{array}{ccc} \mathcal{B}(\cH) &\stackrel{e^{-t\cL_0}}{\to} & \mathcal{B}(\cH) \\
\downarrow_{\Gamma_{\sigma}}  & & \downarrow_{\Gamma_{\sigma}}\\
\cT_1(\cH)
 &\stackrel{e^{-t\cL}}{\to}  & \cT_1(\cH)
\end{array}
\end{equation}
where $\Gamma_{\sigma}(x)=\sigma^{1/2}x\sigma^{1/2}$ is the canonical completely positive map from the algebra $\mathcal{B}(\cH)$ to the space $ \cT_1(\cH)$ which can be interpreted as the predual $\mathcal{B}(\cH)_*$ of $\mathcal{B}(\cH)$ \cite{[OZ99],[TPK14],[KT13],[MF16]}. Indeed, we recall from \cite{Carlen20171810} that $\L$ is also self-adjoint with the KMS inner product $\langle X,Y\rangle_{\si}=\tr(\Gamma_{\si}(X^*)Y)$, and hence
 \begin{align*}
 \tr(\L_*(\Gamma_{\si}(X^*))Y) = \tr(\Gamma_{\si}(X^*)\L(Y)) \lel \langle X,\L(Y)\rangle_{\si}
 =\langle\L(X),Y\rangle_{\si} \lel \tr(\Gamma_{\si}(\L(X^*))Y) \pl
 \end{align*}
shows that indeed 
\begin{align}\label{eq:KMSsa}
\L_*(\Gamma_{\si}(X))=\Gamma_{\si}(\L(X))\,.
\end{align}

\hz

\textbf{Entropic convergence of QMS:} Under the condition of GNS-symmetry, the semigroup $(\e^{-t\cL})_{t\ge 0}$ is known to be \textit{ergodic} \cite{frigerio1982long}: there exists a \textit{conditional expectation} $\cE$ onto the fixed-point algebra $\mathcal{F}(\cL):=\{X\in\mathcal{B}(\cH):\, \cL(X)=0\}$ such that
\begin{align*}
\e^{-t\cL}\underset{t\to\infty}{\to}\cE\,.
\end{align*}
In this paper, we are interested in the exponential convergence in relative entropy of the semigroup towards its corresponding conditional expectation. The \textit{entropy production} (also known as \textit{Fisher information}) of $(\cP_t=\e^{-t\cL})_{t\ge 0}$ is defined as the opposite of the derivative of the relative entropy with respect to the invariant state: for any $\rho\in\cD(\cH)$,
\begin{align*}
\operatorname{EP}_{\cL}(\rho):=-\left.\frac{d}{dt}\right|_{t=0}\,D(\cP_{t*}(\rho)\|\cE_*(\rho))=\tr(\cL_*(\rho)(\ln\rho-\ln\sigma))\,,
\end{align*}
where the expression on the right hand side of the above equation was first proved in \cite{S78} in the primitive setting. We will also need to extend the definition of the entropy production to non-normalized states $\rho$ using the same expression as on the right-hand side of the above equation. In this paper, we are interested in the uniform exponential convergence in relative entropy of systems evolving according to a QMS towards equilibrium: more precisely, we ask the question of the existence of a positive constant $\alpha>0$ such that the following holds, for any $\rho\in\cD(\cH)$,
\begin{align*}
D(\cP_{t*}(\rho)\|\cE_*(\rho))\le \e^{-\alpha t}D(\rho\|\cE_*(\rho))\,.
\end{align*}
After differentiation at $t=0$ and using the semigroup property, this inequality is equivalent to the following \textit{modified logarithmic Sobolev constant} (MLSI) \cite{[KT13],beigi2018quantum,bardet2019modified,capel2018quantum,muller2016relative}: for any $\rho\in\cD(\cH)$:
\begin{align}\label{MLSI}\tag{MLSI}
\alpha \,D(\rho\|\cE_*(\rho))\le \operatorname{EP}_{\cL}(\rho)\,.
\end{align}
The best constant $\alpha$ achieving this bound is called the modified logarithmic Sobolev constant of the semigroup, and is denoted by $\alpha_{\operatorname{MLSI}}(\cL)$. We may also consider the complete version which requires
\begin{align}\label{CLSI}\tag{CLSI}
\alpha_{\operatorname{CLSI}}(\cL) \,D(\rho\|(\cE_*\ten \id) (\rho))\le \operatorname{EP}_{(\cL\ten \id)}(\rho)\,.
\end{align}
to hold for all $\rho \in \mathcal{B}(\cH_A\ten \cH_B)$ for any system $B$ (or even $\mathcal{B}(\cH_B)$ replaced by a finite-dimensional von Neumann algebra).
	
\hz
\textbf{Primitive semigroups:}
Our main goal is to establish MLSI and CLSI for primitive semigroups, given similar knowledge for self-adjoint semigroups. Recall that $(\P_{t}=e^{-t\L})_{t\ge 0}$ is called primitive if $\P_{t*}(\rho)=\rho$ for all $t$ implies $\rho=\si$. This is equivalent to
  \[ \L_{*}(\rho) \lel 0  \quad \Rightarrow \quad \rho=\si \pl .\]
   We recall that $\L$ in Equation \eqref{eqlindblad} is self-adjoint with respect to the inner product
$\langle A,B\rangle_{\sigma}=\tr(A^*\si^{1/2} B\si^{1/2})$. Therefore, we deduce that
  \[  0\lel
  \langle \L(X),X\rangle_{\sigma} \lel \tr(\si^{1/2}\L(X)^*\si^{1/2}X) \lel \tr(\L(X^*)\si^{1/2}X\si^{1/2}) \lel \tr(X^*\L_{*}(\si^{1/2}X\si^{1/2}))
     \]
if and only if $[A_j,X]=0$ for all $j$, by Equation (\ref{intbypart}). This implies that $\L_{*}(\si^{1/2}X\si^{1/2})=0$ if and only if $X\in \{A_j:j\in J\}'$. Let us state this for later references.

\begin{lemma}\label{ergodic} Let $\L_*$ be given by Equation \eqref{adjoint}. The following are equivalent.
\begin{enumerate}
\item[i)] $\L$ is primitive with respect to $\si$;
\item[ii)] $\{A_j:j\in J\}'\lel \cz \Id$;
\item[iii)] $\L_0 \lel \sum_j \L_{A_j}$  is ergodic, i.e.
$\L_0(X)=0$ implies $X=\la \Id$.
\end{enumerate}
\end{lemma}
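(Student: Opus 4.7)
The plan is to identify $\{A_j:j\in\cJ\}'$ as the common kernel governing both $\L$ and $\L_0$, and then to link primitivity to this kernel via the canonical intertwining \eqref{eq:KMSsa} between $\L$ and $\L_*$ through $\Gamma_\si$.

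First I would specialize the integration-by-parts formula \eqref{intbypart} to $Y=X$, which yields
\[
\langle X,\L(X)\rangle_\si \,=\, \sum_{j\in\cJ}\,\|\delta_{A_j}(X)\|_\si^2.
\]
Since $\si$ has full rank, $\langle\cdot,\cdot\rangle_\si$ is positive definite, so $\L(X)=0$ forces $[A_j,X]=0$ for every $j$; the converse is immediate from the GKLS form \eqref{eqlindblad}. Hence $\ker\L=\{A_j\}'$. The same argument with the Hilbert--Schmidt inner product, for which $\L_0$ is self-adjoint (as one reads off \eqref{saver1}), gives
\[
\tr(X^*\L_0(X))\,=\,\sum_{j\in\cJ}\|[A_j,X]\|_{\mathrm{HS}}^2,
\]
so $\ker\L_0=\{A_j\}'$ as well. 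This makes (ii)$\Leftrightarrow$(iii) immediate: $\L_0$ is ergodic exactly when its kernel reduces to $\CC\Id$.

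For (i)$\Leftrightarrow$(ii) I would then invoke the KMS-duality identity \eqref{eq:KMSsa}, which reads $\L_*\circ\Gamma_\si=\Gamma_\si\circ\L$. Because $\si$ is invertible, $\Gamma_\si$ is a linear bijection of $\mathcal{B}(\cH)$, so $\ker\L_*=\Gamma_\si(\ker\L)=\Gamma_\si(\{A_j\}')$. If $\{A_j\}'=\CC\Id$, then $\ker\L_*=\CC\si$, and $\L$ is plainly primitive. Conversely, if $\{A_j\}'\supsetneq\CC\Id$, then because $\{A_j\}_{j\in\cJ}$ is closed under involution, $\{A_j\}'$ is a nontrivial von Neumann subalgebra of $\mathcal{B}(\cH)$ and hence contains a projection $P\neq 0,\Id$. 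The positive element $\Gamma_\si(P)=\si^{1/2}P\si^{1/2}$ lies in $\ker\L_*$, and normalizing it produces an invariant state $\rho_P:=\Gamma_\si(P)/\tr(\si P)$ of rank $\rank(P)<\dim\cH$, which is necessarily distinct from the full-rank state $\si$, contradicting primitivity.

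The subtlest step is this last one: primitivity is a statement about invariant \emph{states}, whereas (ii) is a condition on arbitrary operators, so one must manufacture a genuine second invariant density matrix out of a non-scalar element of $\{A_j\}'$. The von Neumann algebra structure of the commutant is what makes this possible, since it supplies a spectral projection that, once pushed through $\Gamma_\si$, yields a non-full-rank fixed state and hence the required contradiction.
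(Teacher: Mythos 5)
Your proof is correct and follows essentially the same route as the paper: integration by parts with the KMS (resp.\ Hilbert--Schmidt) inner product identifies $\ker\L$ and $\ker\L_0$ with the commutant $\{A_j\}'$, and the intertwining $\L_*\circ\Gamma_\si=\Gamma_\si\circ\L$ transports this to $\ker\L_*=\Gamma_\si(\{A_j\}')$. Your final step, extracting a nontrivial projection from the commutant to manufacture a genuine second invariant density matrix, is a detail the paper leaves implicit, and it is a worthwhile addition.
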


\hz
\textbf{Noncommutative differential calculus via double operator integrals:}
The entropy production can be written in a different form that will be more convenient for our purpose. In order to derive it, we first need to recall some notions of non-commutative differential calculus (see \cite{daleckii1951formulas,daletskii1965integration,Birman1967,Birman1993,de2002double,DEPAGTER200428,Potapovsudo,Potapov}). Given an operator $L\in\mathcal{B}(H)$, as well as any two self-adjoint operators $X,Y\in\mathcal{B}_{sa}(\cH)$, define the operator
\begin{align*}
C^{X,Y}_A:=AY-XA\,.
\end{align*}
In particular $C^{X,X}_A:=\delta_A(X)$. Next, given a Borel function $h:\spec(X)\times \spec(Y)\to\RR$, and writing by $P_X$ and $P_Y$ the spectral measures of $X$ and $Y$, define the so-called \textit{double operator integral}
\begin{align*}
\mathcal{T}_h:=\int\,h\,L_{P_X}\,R_{P_Y}\,.
\end{align*}
where $L_{Z}$, resp.~$R_{Z}$, is the operator of left, resp.~right multiplication by $Z$. Given a differentiable function $f:\RR\to\RR$, we are exclusively interested in the restriction of the difference quotient $\tilde{f}$  associated with $f$ given by
\begin{align}\label{phiform}
\tilde{f}(x,y):=\left\{
\begin{aligned}
&\frac{f(x)-f(y)}{x-y}\,\,~~~~~~~~~~~~~~~~~~~~\text{ if }(x,y)\in\spec(X)\times \spec(Y)\\
&\frac{\partial f(x)}{\partial x}~~~~~~~~~~~~~~~~~~~~~~~~~~~~~~~~~~~~~~~~~\text{ else}\
\end{aligned}	
\right.~~.
\end{align}

\begin{theorem}[Noncommutative chain rule for differentiation, see \cite{Birman1993}]
	Given an operator $A\in\mathcal{B}(\cH)$, any two self-adjoint operators $X,Y\in\mathcal{B}_{sa}(\cH)$ and a Borel function $f:\RR\to\RR$, the following holds:
	\begin{align*}
	C_A^{f(X),f(Y)} &= \mathcal{T}_{\tilde{f}}^{X,Y}(C_A^{X,Y})\,.
	\end{align*}
\end{theorem}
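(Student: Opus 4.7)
Since the entire paper takes place in finite-dimensional Hilbert spaces, I will carry out the verification using the spectral decompositions of $X$ and $Y$; the double operator integral then reduces to a finite sum over pairs of spectral projections and the identity becomes a pointwise arithmetic statement about the difference quotient.

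First I would write $X=\sum_i x_i P_i$ and $Y=\sum_j y_j Q_j$ with $P_i=P_X(\{x_i\})$ and $Q_j=P_Y(\{y_j\})$, so that the spectral measures become atomic and
\[
\mathcal{T}^{X,Y}_h(B)\,=\,\sum_{i,j} h(x_i,y_j)\,P_i B Q_j
\]
for any Borel $h$ on $\operatorname{sp}(X)\times\operatorname{sp}(Y)$ and any $B\in\mathcal{B}(\cH)$. Applied to $B=C^{X,Y}_A=AY-XA$ and using $XP_i=x_iP_i$, $YQ_j=y_jQ_j$, a direct computation gives $P_i(AY-XA)Q_j=(y_j-x_i)\,P_iAQ_j$, hence
\[
\mathcal{T}^{X,Y}_{\tilde f}\!\bigl(C^{X,Y}_A\bigr)\,=\,\sum_{i,j}\tilde f(x_i,y_j)\,(y_j-x_i)\,P_iAQ_j .
\]

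The second step is the key pointwise identity: for any $(x,y)\in\operatorname{sp}(X)\times\operatorname{sp}(Y)$, by the very definition \eqref{phiform} of $\tilde f$,
\[
\tilde f(x,y)\,(y-x)\,=\,f(y)-f(x),
\]
which also holds when $x=y$ because both sides vanish (so the ambiguity in the value of $\tilde f$ at diagonal points is irrelevant). Substituting this into the previous display yields
\[
\mathcal{T}^{X,Y}_{\tilde f}\!\bigl(C^{X,Y}_A\bigr)\,=\,\sum_{i,j}(f(y_j)-f(x_i))\,P_iAQ_j\,=\,A\,f(Y)-f(X)\,A\,=\,C^{f(X),f(Y)}_A ,
\]
where in the penultimate equality I used $f(X)=\sum_i f(x_i)P_i$ and $f(Y)=\sum_j f(y_j)Q_j$ together with $\sum_i P_i=\sum_j Q_j=\Id$.

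The only subtlety, and the one I would pay closest attention to, is the treatment of the diagonal $\{x=y\}\cap(\operatorname{sp}(X)\times\operatorname{sp}(Y))$: the formula for $\tilde f$ prescribes a derivative there, but since the factor $(y-x)$ annihilates those terms the choice of value is immaterial. This is why the result holds for Borel (rather than $C^1$) functions $f$. In infinite dimensions one would have to replace the finite sums by honest spectral integrals and invoke the Birman--Solomyak machinery to justify interchanging $\mathcal{T}_{\tilde f}$ with the algebraic manipulations, but for the finite-dimensional setting used throughout this paper the argument above is complete.
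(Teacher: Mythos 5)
Your proof is correct. Note, however, that the paper itself offers no proof of this statement at all: it is imported wholesale from Birman--Solomyak via the citation, so there is nothing to compare your argument against line by line. What you have supplied is an elementary, self-contained verification in the finite-dimensional setting, which is exactly the level of generality the paper actually uses. Your three steps are all sound: the reduction of $\mathcal{T}^{X,Y}_{\tilde f}$ to the atomic sum $\sum_{i,j}\tilde f(x_i,y_j)P_i(\cdot)Q_j$, the computation $P_i(AY-XA)Q_j=(y_j-x_i)P_iAQ_j$, and the pointwise cancellation $\tilde f(x,y)(y-x)=f(y)-f(x)$ (with the correct observation that the diagonal terms are annihilated by the factor $y-x$, so the derivative branch in the definition of $\tilde f$ is never actually used and $f$ need only be Borel). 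Your closing remark is also well judged: in infinite dimensions one genuinely needs the Birman--Solomyak machinery to make sense of the double operator integral and to justify the formal manipulations, which is presumably why the authors chose to cite rather than prove; for the purposes of this paper your finite-dimensional argument is complete and arguably more informative than the bare citation.
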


With this theorem at hand, the following result can be proved:
\begin{lemma}
Assume that the QMS $(\cP_t=\e^{-t\cL})_{t\ge 0}$ satisfies \ref{eq:DBC}. Then, for any positive operator $\rho=\Gamma_\sigma(X)$,
	\begin{align}
	\operatorname{EP}_\cL(\rho)=\sum_{j\in\cJ}\,\langle \Gamma_{\sigma}(\delta_{A_j}(X)),\,[\Gamma_\sigma(X)]_{\omega_j}^{-1}(\Gamma_{\sigma}(\delta_{A_j}(X)))\rangle_{\operatorname{HS}}\,.
	\end{align}
Moreover, the same formula holds for positive $\rho\in \mathcal{B}(\cH\ten \cK)$ and $\cL$ replaced by $\cL\otimes\id_\cK$.
\end{lemma}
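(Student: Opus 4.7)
The plan is to reduce the entropy production to a Hilbert--Schmidt pairing via the KMS self-adjointness of $\cL$ and the integration-by-parts formula \eqref{intbypart}, and then to identify the resulting weight operator with $[\rho]_{\omega_j}^{-1}$ through a block-by-block spectral calculation. Starting from $\operatorname{EP}_\cL(\rho)=\tr(\cL_*(\rho)(\ln\rho-\ln\sigma))$, the commuting relation \eqref{eq:KMSsa} gives $\cL_*(\rho)=\Gamma_\sigma(\cL(X))$; since $\langle A,B\rangle_\sigma=\tr((\Gamma_\sigma A)^*B)$, the KMS self-adjointness of $\cL$ together with \eqref{intbypart} yields
\begin{align*}
\operatorname{EP}_\cL(\rho)=\langle \cL(X),\,\ln\rho-\ln\sigma\rangle_\sigma=\sum_{j\in\cJ}\langle \delta_{A_j}(X),\,\delta_{A_j}(\ln\rho-\ln\sigma)\rangle_\sigma=\sum_{j\in\cJ}\tr\!\big(T_j^*\,\delta_{A_j}(\ln\rho-\ln\sigma)\big),
\end{align*}
where $T_j:=\Gamma_\sigma(\delta_{A_j}(X))$ and $X=\Gamma_\sigma^{-1}(\rho)$ is self-adjoint.

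Next, I would push $\sigma^{1/2}$ through $A_j$ via the modular eigenvector identity $\sigma^{1/2}A_j=\e^{-\omega_j/2}A_j\sigma^{1/2}$ coming from \eqref{eq}, obtaining the \emph{twisted commutator} $T_j=\e^{-\omega_j/2}A_j\rho-\e^{\omega_j/2}\rho A_j$. Diagonalizing $\rho=\sum_m\lambda_m P_m$, the spectral block reads $P_m T_j P_n=(\e^{-\omega_j/2}\lambda_n-\e^{\omega_j/2}\lambda_m)\,P_m A_j P_n$. Since $\delta_{A_j}(\ln\sigma)$ is a scalar multiple of $A_j$ (fixed by differentiating \eqref{eq} at $s=0$), and $P_m[A_j,\ln\rho]P_n=(\ln\lambda_n-\ln\lambda_m)\,P_m A_j P_n$ by the spectral calculus, we obtain
\begin{align*}
P_m\,\delta_{A_j}(\ln\rho-\ln\sigma)\,P_n=(\ln\lambda_n-\ln\lambda_m-\omega_j)\,P_m A_j P_n.
\end{align*}

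Finally, taking as definition $[\rho]_\omega(B):=\int_0^1(\e^{\omega/2}\rho)^s\,B\,(\e^{-\omega/2}\rho)^{1-s}\,ds$, the operator $[\rho]_\omega$ acts on the $(m,n)$-spectral block of $\rho$ as multiplication by the weighted logarithmic mean $\tfrac{\e^{\omega/2}\lambda_m-\e^{-\omega/2}\lambda_n}{\omega+\ln\lambda_m-\ln\lambda_n}$. A short scalar calculation then gives $[\rho]_{\omega_j}^{-1}(T_j)\big|_{(m,n)}=(\ln\lambda_n-\ln\lambda_m-\omega_j)\,P_m A_j P_n$, matching $\delta_{A_j}(\ln\rho-\ln\sigma)$ on every spectral block. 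Hence $\delta_{A_j}(\ln\rho-\ln\sigma)=[\rho]_{\omega_j}^{-1}(T_j)$, and substitution into the first display yields the claim. The extension to $\cL\otimes\id_\cK$ is immediate: the Lindblad operators become $A_j\otimes I_\cK$ with identical Bohr frequencies, $\Gamma_\sigma$ ampliates trivially, and every step goes through verbatim.

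The main obstacle is the last spectral matching: one must verify that the twists $\e^{\pm\omega_j/2}$ in $T_j$ cancel precisely against the numerator $\e^{\omega/2}\lambda_m-\e^{-\omega/2}\lambda_n$ of the logarithmic-mean weight defining $[\rho]_\omega$, while the $-\omega_j$ shift in the log-difference compensates exactly for the contribution of $\delta_{A_j}(\ln\sigma)$. This interlocking cancellation, which in effect dictates the definition of $[\rho]_\omega$, is forced by the detailed balance condition \eqref{eq:DBC}.
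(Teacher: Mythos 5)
Your proposal is correct and follows essentially the same route as the paper: reduce $\operatorname{EP}_\cL(\rho)$ to $\sum_j\langle\delta_{A_j}(X),\delta_{A_j}(\ln\rho-\ln\sigma)\rangle_\sigma$ via KMS self-adjointness and integration by parts, then identify $\delta_{A_j}(\ln\rho-\ln\sigma)=[\rho]_{\omega_j}^{-1}\bigl(\Gamma_\sigma(\delta_{A_j}(X))\bigr)$. The only (cosmetic) difference is that where the paper invokes the Birman--Solomyak chain rule for double operator integrals and the resolvent representation of $\mathcal{T}_{\tilde\ln}$, you verify the same identity by an explicit block computation in the eigenbasis of $\rho$ --- which in finite dimensions is just the chain rule made concrete.
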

\begin{proof}
	By definition, for all positive $\rho\in\mathcal{B}(\cH)$, and any $\sigma\in\mathcal{F}(\cL)$, letting $X:=\Gamma_\sigma^{-1}(\rho)$ we have
	\begin{align*}
	\operatorname{EP}_{\cL}(\rho)&=\tr(\cL_*(\rho)(\ln\rho-\ln\sigma))\\
	&=\langle \cL(X),\,\ln\rho-\ln\sigma\rangle_{\sigma}\\
	&=\sum_{j\in\cJ}\,\langle\delta_{A_j}(X),\,\delta_{A_j}(\ln\rho-\ln\sigma)\rangle_\sigma\,.
	\end{align*}
Here the second line follows by Equation \ref{eq:KMSsa}, whereas the third line follows by the integration by parts formula (\ref{intbypart}). Now, due to (\ref{eq}), $\delta_{A_j}(\ln\sigma)=-\omega_j A_j$, so that, denoting $Y_j:=\rho\,\e^{-\omega_j/2}$ and $Z_j:=\rho\,\e^{\omega_j/2}$, we have
	\begin{align*}
	\delta_{A_j}(\ln\rho-\ln\sigma)&=A_j\ln(Y_j)-\ln(Z_j)A_j\\
	&=C_{A_j}^{\ln(Z_j)\,, \ln(Y_j)}\\
	&=\mathcal{T}_{\tilde{\ln}}^{Z_j,\,Y_j}(C^{Z_j,\,Y_j}_{A_j})\\
	&=\mathcal{T}_{\tilde{\ln}}^{Z_j,Y_j}(\sigma^{1/2}\delta_{A_j}(X)\sigma^{1/2})\,,
	\end{align*}
	where the last equation follows once again from Equation (\ref{eq}). Therefore, using that $\mathcal{T}_{\tilde{\ln}}^{Z_j,Y_j}:=\int_0^\infty\,(r+\e^{-\omega_j/2}L_{\Gamma_{\sigma}(X)})^{-1}(r+\e^{\omega_j/2}R_{\Gamma_{\sigma}(X)})^{-1}dr\,=: [\Gamma_{\sigma}(X)]_{\omega_j}^{-1}$ in the notations of \cite{Carlen20171810}, we end up with
\begin{align}\label{entropyprodFisher}
\operatorname{EP}_{\cL}(\rho)=\sum_{j\in\cJ}\,\langle \Gamma_{\sigma}(\delta_{A_j}(X)),\,[\Gamma_\sigma(X)]_{\omega_j}^{-1}(\Gamma_{\sigma}(\delta_{A_j}(X)))\rangle_{\operatorname{HS}}\,.
\end{align}
For $\rho\in \mathcal{B}(\cH\ten \cK)$, we observe that $\cE_*\otimes\id_{\cK}(\rho)=\sigma\otimes \tr_{\cH}(\rho)$. Since all the $A_j$'s act on the first register, the calculation above remains true.
	\end{proof}

\section{From unital to non-unital quantum Markov semigroups}\label{sec:qHolley}

We are now able to provide a quantum extension of the Holley-Stroock argument. In this section, we restrict ourselves to the primitive case and assume that $\si=\sum_k \si_k |k\rangle\lan k|$ is a positive definite density matrix of corresponding eigenbasis $\{|k\rangle\}$ of $\cH$. 

\begin{theorem}\label{theocomp}
	Let $\cL$ be the generator of a primitive, $\operatorname{GNS}$-symmetric QMS with respect to a full-rank state $\sigma$, $\cL_0$ be generator of its corresponding heat semigroup, and $\om_j$ its Bohr frequencies.
%$\cN:=\bigoplus_{i\in %\mathcal{I}}\,
%\mathcal{B}(\cH_i)\otimes\Id_{K_i}$, that is self-adjoint with respect to the
%$\operatorname{GNS}$
%$\sigma:=\bigoplus_{i\in\mathcal{I}} \frac{d_{\cK_i}}{d_\cH}\,\Id_{\cH_i}\otimes \tau_i$,
Then
	\begin{align*}
\al_{\operatorname{MLSI}}(\L_0) &\le \max_{k,l} \frac{\si_k}{\si_l} \,\max_{j} e^{|\om_j|/2}\alpha_{\operatorname{MLSI}}(\cL)\pl ,
\end{align*}
Similarly,
	\begin{align*}
\al_{\operatorname{CLSI}}(\L_0) &\le \max_{k,l} \frac{\si_k}{\si_l} \,\max_{j} e^{|\om_j|/2}\alpha_{\operatorname{CLSI}}(\cL)\pl .
\end{align*}
\end{theorem}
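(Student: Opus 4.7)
The strategy is a noncommutative adaptation of the two-step Holley--Stroock comparison. For each state $\rho$ (or bipartite state on $\cH\otimes\cH_R$ in the complete version), I plan to separately compare (i) the entropy production of $\cL_0$ and $\cL$ and (ii) the relative entropy with respect to their corresponding invariant states, then chain these two comparisons with the (C)LSI assumption for $\cL_0$ to conclude.

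For the entropy production side, the key input is the formula
\[
\operatorname{EP}_{\cL}(\rho)=\sum_j\langle W_j,[\rho]_{\omega_j}^{-1}W_j\rangle_{\mathrm{HS}},\qquad W_j=\e^{-\omega_j/2}A_j\rho-\e^{\omega_j/2}\rho A_j,
\]
derived in Section~\ref{sec:QMS}, and its specialisation to $\cL_0$ (where $\sigma=\Id/d_{\cH}$ and $\omega_j=0$, so that $W_j$ becomes $[A_j,\rho]$ and $[\rho]_{\omega_j}^{-1}$ becomes $[\rho]_0^{-1}$). Diagonalising in the eigenbasis of $\rho$, both expressions become weighted sums over matrix units $|k\rangle\!\langle l|$ with scalar weights expressible in terms of the (weighted) logarithmic means appearing in the double operator integrals. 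A direct scalar comparison of these weights, combined with the operator bounds $(\min_l\sigma_l)\Id\le\sigma\le(\max_k\sigma_k)\Id$, should yield an inequality of the form $\operatorname{EP}_{\cL_0}(\rho)\le K_1\operatorname{EP}_{\cL}(\rho)$ whose constant $K_1$ packages $\max_j\e^{|\omega_j|/2}$ together with a contribution from the extremal ratio of eigenvalues of $\sigma$. The tensorised version extends verbatim since each $A_j$ acts trivially on the auxiliary factor.

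For the relative-entropy side, the pointwise identity
\[
D(\rho\|\sigma\otimes\rho_R)-D(\rho\|(\Id/d_\cH)\otimes\rho_R)=-\tr\bigl(\rho\log(d_\cH\sigma\otimes\Id_R)\bigr)
\]
shows that the two relative entropies differ only by an additive term of modulus at most $\log(\max_k\sigma_k/\min_l\sigma_l)$. This is \emph{not} the multiplicative comparison required for chaining: the naïve ratio $D(\rho\|\sigma)/D(\rho\|\Id/d_\cH)$ is unbounded, e.g.\ at $\rho=\Id/d_\cH$ where the denominator vanishes but the numerator remains positive. Upgrading the additive bound to a genuine multiplicative one is the main obstacle of the proof. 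I would tackle it by exploiting the variational representation of the relative entropy together with the positivity of the noncommutative analog of Holley--Stroock's $G(a,b)=a\log a - b\log b + b - a$, applied not to $\rho$ itself but to the KMS density $X=\Gamma_\sigma^{-1}(\rho)$: this is the natural noncommutative substitute for the classical Radon--Nikodym derivative $d\nu/d\mu$, and the intertwining property~\eqref{eq:KMSsa} $\cL_*\circ\Gamma_\sigma=\Gamma_\sigma\circ\cL$ is what allows the change-of-density bound to produce the required $\max_k\sigma_k/\min_l\sigma_l$ factor.

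Combining the two comparisons with the CLSI hypothesis applied to $\rho$,
\[
\alpha_{\operatorname{CLSI}}(\cL_0)\,D(\rho\|(\Id/d_\cH)\otimes\rho_R)\le \operatorname{EP}_{\cL_0\otimes\id}(\rho)\le K_1\operatorname{EP}_{\cL\otimes\id}(\rho),
\]
and then substituting the multiplicative entropy comparison yields, upon rearrangement, the stated bound with total constant $\max_{k,l}(\sigma_k/\sigma_l)\max_j\e^{|\omega_j|/2}$. The MLSI statement is the special case where the auxiliary system is trivial.
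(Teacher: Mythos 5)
Your overall architecture---a two-step comparison of entropy productions and of relative entropies, using the double-operator-integral formula for $\operatorname{EP}$ and the change of variables $X=\Gamma_\sigma^{-1}(\rho)$ to rescue the multiplicative entropy bound---is exactly the paper's strategy, and you have correctly diagnosed the central obstacle: the ratio $D(\rho\|\sigma)/D(\rho\|\Id/d_\cH)$ is unbounded, so no same-argument comparison of relative entropies can work. But the final assembly as written does not close. Your displayed chain applies the CLSI hypothesis for $\cL_0$ to $\rho$ itself, producing $D(\rho\|(\Id/d_\cH)\otimes\rho_R)$ on the left and $\operatorname{EP}_{\cL_0\otimes\id}(\rho)$ in the middle, whereas the multiplicative entropy comparison that actually holds (and that you yourself propose to prove) relates $D(\Gamma_\sigma(X)\|\E_*(\Gamma_\sigma(X)))$ to $D_{\operatorname{Lin}}(X\|\E_0(X))$ with $X=\Gamma_\sigma^{-1}(\rho)\neq\rho$. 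These quantities do not match, so ``substituting the multiplicative entropy comparison'' into your chain is not a legal step. The correct chain (Proof of Theorem \ref{theocomp}) applies the (C)LSI of $\cL_0$ at $X$---which requires extending MLSI/CLSI to non-normalized positive operators via the homogeneity of $D_{\operatorname{Lin}}$ and $\operatorname{EP}$---and, correspondingly, the entropy-production comparison must also be made \emph{across} the change of variables: one needs $\operatorname{EP}_{\cL_0}(X)\le \max_k\sigma_k^{-1}\,\max_j \e^{|\om_j|/2}\,\operatorname{EP}_{\cL}(\Gamma_\sigma(X))$ as in Proposition \ref{EPequiv}, not the same-argument inequality $\operatorname{EP}_{\cL_0}(\rho)\le K_1\operatorname{EP}_{\cL}(\rho)$ that you state.

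This mismatch also affects your proposed proofs of the two comparison lemmas. For the entropy production, diagonalising in the eigenbasis of $\rho$ and comparing scalar weights cannot handle the cross-argument inequality, because $\Gamma_\sigma^{-1}(\rho)=\sigma^{-1/2}\rho\sigma^{-1/2}$ does not commute with $\rho$ in general, so $[\Gamma_\sigma^{-1}(\rho)]_0^{-1}$ and $[\rho]_{\om_j}^{-1}$ do not diagonalise in a common basis; the paper instead absorbs the Bohr frequencies via $[\rho]_{\om_j}\le \max_j \e^{|\om_j|/2}[\rho]_0$ and then invokes the monotonicity of the quadratic form $\langle X,[A]_0^{-1}X\rangle_{\operatorname{HS}}$ under completely positive trace non-increasing maps (Proposition \ref{lieb}, i.e.\ Hiai--Petz), applied to $\Phi'=(\max_k\sigma_k^{-1})^{-1}\Gamma_\sigma^{-1}$. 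For the relative entropies, the ``positivity of a noncommutative $G(a,b)$'' you invoke is replaced in the paper by a concrete mechanism: $\Phi=(\max_k\sigma_k)^{-1}\Gamma_{\Id_\cK\otimes\sigma}$ is completely positive and trace non-increasing, one completes it to a trace-preserving map by adjoining a block recording the trace deficit, and the data processing inequality for $D_{\operatorname{Lin}}$ together with its homogeneity produces exactly the factor $\max_k\sigma_k$ (Proposition \ref{propequivrelent}). With those two corrections the argument assembles into the stated bound, the constant factoring as $\max_k\sigma_k\cdot\max_k\sigma_k^{-1}\cdot\max_j\e^{|\om_j|/2}$.
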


\begin{rem}
Using interpolation techniques, the authors of \cite{[TPK14]} showed lower bounds on the logarithmic Sobolev constant $\alpha_2$ of primitive QMS that are self-adjoint with respect to the KMS inner product. Moreover, since we further assume the detailed balance condition, our semigroups satisfy $\alpha_{\operatorname{MLSI}}(\Phi)\ge 2\alpha_2(\Phi)$, by the so-called $L_p$-regularity of Dirichlet forms proved in \cite{BarEID17}. Combining these two results, we can find that 
	\begin{align*}
	\alpha_{\operatorname{MLSI}}(\cL)\ge\frac{2\lambda(\cL)}{\ln\|\sigma^{-1}\|_\infty+2}\,,\qquad\alpha_{\operatorname{MLSI}}(\cL^{(n)})\ge \frac{2\lambda(\cL)}{\ln(d_\cH^4\,\|\sigma^{-1}\|_\infty)+11}\,,
	\end{align*} 
	where $\cL^{(n)}$ stands for the generator of the $n$-fold product of the semigroup $(\e^{-t\cL})_{t\ge 0}$, and where $\lambda(\cL)$ denotes the spectral gap of $\cL$. In the primitive setting, this means that the bounds that we derived are potentially worse than the ones provided in \cite{[TPK14]}. However, it was shown in \cite{bardet2018hypercontractivity} that the logarithmic Sobolev inequality does not hold for non-primitive QMS. This in particular means that a $\operatorname{CLSI}$ constant cannot be obtained by $L_p$-regularity. In the next section, we provide a more general result for any finite dimensional, non-primitive GNS symmetric QMS.
	\end{rem}

As in the classical case, the proof is separated in two parts: a comparison of the relative entropies, and a comparison of the entropy productions. We review these separately in the next two paragraphs.

\hz
\textbf{Comparison of relative entropies:}
We are now concerned with the left-hand side of the MLSI/CLSI. First, we need to extend the definition of the relative entropy to the case where $\rho$ and $\sigma$ are (possibly non-normalized) positive operators \cite{lindblad1974expectations}: $$D_{\operatorname{Lin}}(\rho\|\sigma):=\tr(\rho\,(\ln\rho-\ln\sigma))+\tr(\sigma)-\tr(\rho)\,,$$
where the right-hand side can be equal to infinity. As for its restriction to normalized density matrices, this relative entropy is positive. Moreover:
\begin{lemma} Lindblad's relative entropy satisfies the following properties \cite{lindblad1974expectations}:
	\begin{itemize}
		\item[(i)] \underline{Data processing inequality}:	For any positive operators $X,Y$, and any CPTP map $\Phi$, $$D_{\operatorname{Lin}}(\Phi(X)\|\Phi(Y))\le D_{\operatorname{Lin}}(X\|Y)\,.$$
		\item[(ii)] \underline{Addition under direct sums}: For any positive operators $X_1,Y_1$, resp. $X_2,Y_2$, on $\cH_1$, resp. $\cH_2$, $$D_{\operatorname{Lin}}(X_1\oplus X_2\|Y_1\oplus Y_2)=D_{\operatorname{Lin}}(X_1\|Y_1)+D_{\operatorname{Lin}}(X_2\|Y_2)\,.$$
		\item[(iii)] \underline{Normalization}: For any operators $X,Y\ge 0 $ and any constant $\lambda> 0$ $$D_{\operatorname{Lin}}(\lambda X\|\lambda Y)=\lambda\,D_{\operatorname{Lin}}(X\|Y)\,.$$
\end{itemize}
\end{lemma}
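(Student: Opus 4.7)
The plan is to dispatch properties (ii) and (iii) by direct algebraic manipulation and to reduce (i), the data-processing inequality, to the standard DPI for Umegaki's relative entropy between density matrices via a normalization trick. The only delicate point, and the step I expect to require care, is the handling of support conditions and zero-trace edge cases in (i).

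For the scaling property (iii), I would expand $\tr(\lambda X\ln(\lambda X)) = \lambda\tr(X\ln X) + \lambda\tr(X)\ln\lambda$ and the analogous identity for $\tr(\lambda X\ln(\lambda Y))$. The two $\ln\lambda$ contributions cancel and every remaining term carries an overall factor $\lambda$, yielding $D_{\operatorname{Lin}}(\lambda X\|\lambda Y) = \lambda\,D_{\operatorname{Lin}}(X\|Y)$. For the direct-sum property (ii), the key observation is that functional calculus respects the block structure, so $\ln(X_1\oplus X_2) = \ln X_1 \oplus \ln X_2$ (with the convention $0\ln 0 = 0$), and the trace distributes over the direct sum. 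Every summand appearing in $D_{\operatorname{Lin}}(X_1\oplus X_2\,\|\,Y_1\oplus Y_2)$ therefore splits additively into the two block contributions, giving (ii) at once.

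For (i), I would first establish the identity
\[
D_{\operatorname{Lin}}(X\|Y) = \tr(X)\,D\!\left(\frac{X}{\tr(X)}\,\Big\|\,\frac{Y}{\tr(Y)}\right) + \tr(X)\ln\frac{\tr(X)}{\tr(Y)} + \tr(Y) - \tr(X),
\]
valid whenever $\tr(X),\tr(Y)>0$; it follows by a one-line rearrangement using $\tr(X\ln(X/\tr(X))) = \tr(X\ln X) - \tr(X)\ln\tr(X)$ and the analogous identity for $Y$. Since $\Phi$ preserves the trace, replacing $X,Y$ by $\Phi(X),\Phi(Y)$ leaves the three trace-dependent summands on the right-hand side unchanged and converts the normalized arguments of $D$ to $\Phi(X/\tr(X))$ and $\Phi(Y/\tr(Y))$. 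The claim therefore collapses to the familiar DPI $D(\Phi(\rho)\|\Phi(\sigma)) \le D(\rho\|\sigma)$ for density matrices $\rho,\sigma$, which is Lindblad's original theorem.

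It remains to address the edge cases. If $X=0$ both sides equal $\tr(Y) = \tr(\Phi(Y))$; if $Y=0$ and $X\ne 0$ both sides are $+\infty$; and in general the reduction above requires $\supp(X)\subseteq\supp(Y)$ in order to avoid $+\infty$. But since $\Phi$ is completely positive one automatically has $\supp(\Phi(X))\subseteq\supp(\Phi(Y))$ under this hypothesis, so $\Phi$ introduces no new support obstruction and the normalization argument closes cleanly.
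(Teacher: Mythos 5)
Your proof is correct. Note that the paper itself does not prove this lemma at all: it simply attributes all three properties to Lindblad's 1974 paper \cite{lindblad1974expectations}, so any self-contained argument is "a different route" by default. Your route is the standard one and is essentially how these facts are established in the literature: (ii) and (iii) are immediate from functional calculus respecting block structure and from $\ln(\lambda X)=\ln X+\ln\lambda$ on $\supp(X)$, and (i) reduces to the DPI for the normalized (Umegaki) relative entropy via the identity
\begin{equation*}
D_{\operatorname{Lin}}(X\|Y) = \tr(X)\,D\!\left(\tfrac{X}{\tr(X)}\,\Big\|\,\tfrac{Y}{\tr(Y)}\right) + \tr(X)\ln\tfrac{\tr(X)}{\tr(Y)} + \tr(Y) - \tr(X),
\end{equation*}
whose trace-dependent terms are invariant under a trace-preserving map. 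The one place you gloss is the claim that $\supp(X)\subseteq\supp(Y)$ implies $\supp(\Phi(X))\subseteq\supp(\Phi(Y))$ "since $\Phi$ is completely positive": the clean justification is that in finite dimensions the support condition gives $X\le cY$ for some $c>0$ (take $c=\|X\|_\infty/\lambda_{\min}$ with $\lambda_{\min}$ the smallest nonzero eigenvalue of $Y$), whence $\Phi(X)\le c\,\Phi(Y)$ by positivity alone. With that one line added, your argument is complete, and it has the advantage over the paper's bare citation of making the lemma verifiable on the spot; the paper's choice buys only brevity.
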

We will also need the following:
\begin{lemma}[Chain rule for $D_{\operatorname{Lin}}$]\label{chaintule}
Let $\mathcal{E}:\mathcal{B}(\cH)\to\cN$ be a conditional expectation onto a $*$-subalgebra of $\mathcal{B}(\cH)$. Then, for any $X,Y\in\mathcal{B}_+(\cH)$ such that $Y=\cE_*(Y)$, we have
\begin{align*}
D_{\operatorname{Lin}}(X\|Y)=D_{\operatorname{Lin}}(X\|\cE_*(X))+D_{\operatorname{Lin}}(\cE_*(X)\|Y)\,.
\end{align*}
	\end{lemma}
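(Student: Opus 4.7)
The plan is to verify the identity by direct expansion, reducing it to a single cross-term that vanishes by duality between $\cE$ and $\cE_*$.

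\textbf{Step 1 (algebraic reduction).} Unfolding each of the three Lindblad relative entropies via $D_{\operatorname{Lin}}(A\|B) = \tr(A\ln A) - \tr(A\ln B) + \tr B - \tr A$, the copies of $\tr(X\ln X)$, $\tr Y$, $\tr X$, $\tr \cE_*(X)$ and $\tr(\cE_*(X)\ln\cE_*(X))$ cancel or combine, and the claimed identity is seen to be equivalent to
\begin{equation*}
\tr\bigl((X - \cE_*(X))\,(\ln \cE_*(X) - \ln Y)\bigr) = 0.
\end{equation*}

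\textbf{Step 2 (the difference of logarithms lies in $\cN$).} The crucial observation is that $Z := \ln \cE_*(X) - \ln Y$ belongs to $\cN$, even though neither logarithm separately does in general. I would establish this through the standard structure theorem for a conditional expectation $\cE:\cB(\cH)\to\cN$ preserving a full-rank state: writing $\cH \cong \bigoplus_i \cH_i^A\otimes \cH_i^B$ and $\cN \cong \bigoplus_i \cB(\cH_i^A)\otimes \Id_{\cH_i^B}$, the invariant state decomposes as $\sigma = \bigoplus_i p_i\,\sigma_i^A\otimes \tau_i^B$, and the image of $\cE_*$ consists of positive operators of the form $\bigoplus_i R_i\otimes \tau_i^B$, all sharing the same ``$B$-factors'' $\tau_i^B$. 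Since both $\cE_*(X)$ and $Y = \cE_*(Y)$ sit in this image, the $\ln\tau_i^B$ contributions cancel in $\ln \cE_*(X) - \ln Y$, leaving an element of $\bigoplus_i \cB(\cH_i^A)\otimes \Id_{\cH_i^B} = \cN$.

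\textbf{Step 3 (duality).} With $Z \in \cN$, property (ii) of a conditional expectation gives $\cE(Z) = Z$, and the defining adjoint relation yields
\begin{equation*}
\tr(\cE_*(X)\,Z) = \tr(X\,\cE(Z)) = \tr(X Z),
\end{equation*}
whence $\tr((X - \cE_*(X))\,Z) = 0$, closing the argument.

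\textbf{The main obstacle} is Step 2. Neither $\ln \cE_*(X)$ nor $\ln Y$ individually sits inside $\cN$ (already in the primitive case $\cN = \C\Id$, one has $\ln \cE_*(X) = \ln\tr(X) + \ln\sigma \notin \C\Id$ unless $\sigma$ is tracial), so the vanishing must come from the \emph{difference}, which morally represents the Connes cocycle between $\cE_*(X)$ and $Y$. Establishing that this difference lies in $\cN$ relies on the shared ``modular component'' $\tau_i^B$ inherited from $\sigma$ via the structure theorem; once this is in hand, the remaining algebra is routine.
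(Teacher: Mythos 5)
Your proof is correct and follows essentially the same route as the paper's: reduce to the vanishing of the cross term $\tr\bigl((X-\cE_*(X))(\ln\cE_*(X)-\ln Y)\bigr)$, show that $\ln\cE_*(X)-\ln Y\in\cN$, and conclude by the duality $\tr(\cE_*(X)Z)=\tr(X\,\cE(Z))$. The only cosmetic difference is in Step 2, where you invoke the block-decomposition structure theorem explicitly, while the paper derives the same fact from the intertwining relation $\Gamma_\sigma\circ\cE=\cE_*\circ\Gamma_\sigma$ together with $\sigma\in\cN'$ — these are two phrasings of the same underlying fact.
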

\begin{proof}
	A simple calculation gives:
	\begin{align}\label{eq:Dlinchainrule}
	D_{\operatorname{Lin}}(X\|Y)&=D_{\operatorname{Lin}}(X\|\cE_*(X))+\tr(X(\ln (\cE_*(X))-\ln Y))+\tr(Y)-\tr(\cE_*(X))\,.
	\end{align}
	Moreover, since $Y=\cE_*(Y)$, one can show that $\ln (\cE_*(X))-\ln Y\in\cN$. Indeed, we define the state $\sigma\in\cD(\cH)$ such that $\cE$ is a conditional expectation with respect to $\sigma$. This implies in particular that $\Gamma_\sigma\circ \cE=\cE_*\circ \Gamma_{\sigma}$. Moreover, one can take without loss of generality $\sigma\in\cN'$. Then,
	\begin{align*}
	\ln\cE_*(X)-\ln Y=	\ln\cE_*(X)-\ln \cE_*(Y)=\ln\Gamma_{\sigma}\circ\cE(X)-\ln\Gamma_{\sigma}\circ \cE(Y)=\ln\cE(X)-\ln\cE(Y)\in\cN\,,
	\end{align*}
	where the last identity comes from the fact that $\sigma$ commutes with $\cE(X)$ and $\cE(Y)$. Then, we can replace the traces on the right hand side of Equation (\ref{eq:Dlinchainrule}) by
	\begin{align*}
	\tr(\cE_*(X)(\ln (\cE_*(X))-\ln Y))+\tr(Y)-\tr(\cE_*(X))=D_{\operatorname{Lin}}(\cE_*(X)\|Y)\,.
	\end{align*}
	 The result follows.
	\end{proof}

\begin{prop}[Noncommutative change of measure argument]\label{propequivrelent} Let $\cK$ be an additional Hilbert space,
$\E_0\lel \,\tr_{\cH}\otimes \frac{\Id_\cH}{d_\cH}$ a conditional expectation onto $ \mathcal{B}(\cK)\otimes \Id_\cH$ and
 \[ \E_*(\rho)\lel d_\cH\,d_\cK\,\Gamma_{\frac{\Id_{\cK}}{d_\cK}\otimes \si}\circ\E_0(\rho) \]
the conditional expectation on the space of densities. Then for all $X\ge0$:
 \[  D_{\operatorname{Lin}}(\Gamma_{\Id_{\cK}\otimes \si}(X)\|\,\E_*\circ\Gamma_{\Id_\cK\otimes\sigma}(X)) \kl \max_k \{\si_k\} \,D_{\operatorname{Lin}}(X\|\, \E_{0}(X)) \pl . \]
\end{prop}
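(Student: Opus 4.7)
The strategy mimics the classical Holley--Stroock argument: introduce an intermediate operator $Y$ fixed by $\E_*$ to split the LHS via the chain rule (Lemma~\ref{chaintule}), then compare $D_{\operatorname{Lin}}(\rho\|Y)$ with $D_{\operatorname{Lin}}(X\|\E_0(X))$, up to the scaling factor $M := \max_k\sigma_k$, by data processing together with the normalization property $D_{\operatorname{Lin}}(\lambda A\|\lambda B)=\lambda D_{\operatorname{Lin}}(A\|B)$.

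First, using $\Gamma_{\Id_\cK/d_\cK\otimes\sigma}=d_\cK^{-1}\Gamma_{\Id_\cK\otimes\sigma}$ together with $\E_0(\rho)=\tr_\cH(\rho)\otimes \Id_\cH/d_\cH$, a direct calculation yields, for $\rho := \Gamma_{\Id_\cK\otimes\sigma}(X)$,
\[
\E_*(\rho) \;=\; \tr_\cH(\rho)\otimes \sigma.
\]
This suggests the natural auxiliary operator
\[
Y \;:=\; \Gamma_{\Id_\cK\otimes\sigma}(\E_0(X)) \;=\; \tr_\cH(X)\otimes \sigma/d_\cH,
\]
for which $\E_*(Y)=Y$ holds by inspection (both sides equal $\tr_\cH(X)/d_\cH\otimes\sigma$). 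Applying the chain rule of Lemma~\ref{chaintule} with this $Y$ gives $D_{\operatorname{Lin}}(\rho\|Y) = D_{\operatorname{Lin}}(\rho\|\E_*(\rho)) + D_{\operatorname{Lin}}(\E_*(\rho)\|Y)$, and since $D_{\operatorname{Lin}}$ is non-negative on positive operators (Klein's inequality), we conclude
\[
D_{\operatorname{Lin}}(\rho\|\E_*(\rho)) \;\le\; D_{\operatorname{Lin}}(\rho\|Y).
\]

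For the second half, consider the completely positive map
\[
\Phi(Z) \;:=\; \Gamma_{\Id_\cK\otimes\sigma/M}(Z),
\]
which satisfies $\Phi(MX)=\rho$ and $\Phi(M\E_0(X))=Y$ by linearity of the conjugation. Combined with the normalization of $D_{\operatorname{Lin}}$, a direct application of data processing to $\Phi$ would finish the proof. The main obstacle is that $\Phi^\dagger(\Id)=\Id_\cK\otimes\sigma/M\le \Id$ only sub-unitally, so $\Phi$ is trace non-increasing rather than CPTP, while the excerpt states data processing for CPTP maps only. I would remove this obstruction by a direct-sum Stinespring-type dilation: set $K:=(\Id-\Id_\cK\otimes\sigma/M)^{1/2}$, $\Psi(Z):=KZK$, and $\tilde\Phi(Z):=\Phi(Z)\oplus \Psi(Z)$, which is genuinely CPTP into $\mathcal{B}(\cK\otimes\cH)\oplus \mathcal{B}(\cK\otimes\cH)$. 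Applying data processing to $\tilde\Phi$ together with additivity of $D_{\operatorname{Lin}}$ under direct sums and non-negativity of the discarded summand then yields
\[
D_{\operatorname{Lin}}(\rho\|Y) \;\le\; D_{\operatorname{Lin}}(MX\|M\E_0(X)) \;=\; M\,D_{\operatorname{Lin}}(X\|\E_0(X)),
\]
which combined with the chain-rule bound above closes the proof.
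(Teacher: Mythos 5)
Your proposal is correct and follows essentially the same route as the paper: reduce via the chain rule for $D_{\operatorname{Lin}}$ to a comparison against $\Gamma_{\Id_\cK\otimes\sigma}(\E_0(X))$, then apply data processing to the trace non-increasing map $M^{-1}\Gamma_{\Id_\cK\otimes\sigma}$ after completing it to a CPTP map by a direct sum, and finish with homogeneity. The only (immaterial) difference is the choice of completing block --- you use the Kraus-type block $Z\mapsto KZK$ where the paper appends the scalar functional $Z\mapsto\tr\bigl(\bigl(\Id-\tfrac{\Id_\cK\otimes\sigma}{M}\bigr)Z\bigr)$; both are valid trace-preserving completions.
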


\begin{proof}
Since the following inequality holds by Lemma \ref{chaintule}: $$D_{\operatorname{Lin}}(d_\cH\Gamma_{\Id_{\cK}\otimes \si}(X)\|\,d_\cH\E_*\circ\Gamma_{\Id_\cK\otimes\sigma}(X))\le D_{\operatorname{Lin}}(d_\cH\Gamma_{\Id_\cK\otimes \sigma}(X)\|\cE_*(X))\,,$$
it is enough to prove that
$$
D_{\operatorname{Lin}}(d_\cH\Gamma_{\Id_\cK\otimes \sigma}(X)\|\cE_*(X))\le d_\cH\, \max_k \{\si_k\} \,D_{\operatorname{Lin}}(X\|\, \E_{0}(X)) \,.
$$
Define the map $\Phi(X)=\Lambda^{-1}\,{\Gamma_{\Id_\cK\otimes \sigma}(X)}$, where $\Lambda:=\max_k \si_k$. This map is completely positive, trace non-increasing. We may also define
 \[ \Psi(X) := \kla \begin{array}{cc} \Phi(X)& 0\\
                                    0& \tr((1-\frac{\Id_\cK\otimes \si}{\La})X)\end{array}  \mer \pl .\]
Then $\Psi$ is trace-preserving and hence, see \cite{lindblad1974expectations}, we know that
 \[ D_{\operatorname{Lin}}(\Psi(X)\|\Psi(\E_{0}(X)) \kl D_{\operatorname{Lin}}(X\|\E_{0}(X)) \pl .\]
Since $D_{\operatorname{Lin}}(\pl\|\pl)$ is positive, we deduce from the diagonal output of $\Psi$ that indeed,
\begin{align*}
 D_{\operatorname{Lin}}(\Lambda^{-1}\,\Gamma_{\Id_{\cK}\otimes\sigma}(X) \|\Lambda^{-1}\,\Gamma_{\Id_\cK\otimes\sigma}\circ \E_{0}(X))
 &=  D_{\operatorname{Lin}}(\Phi(X)\|\Phi(\E_{0}(X)) \kl
  D_{\operatorname{Lin}}(\Psi(X)\|\Psi(\E_{0}(X)) \pl .
 \end{align*}
 By definition, $\E_{*}=d_\cH\,\Gamma_{\Id_\cK\otimes\sigma}\circ \E_0$. Therefore,
	\begin{align*}
	D_{\operatorname{Lin}}(d_\cH\Gamma_{\Id_\cK\otimes\sigma}(X)\|\E_{*}(X))
	&=D_{\operatorname{Lin}}(d_\cH\Gamma_{\Id_\cK\otimes\sigma}(X)\|d_\cH\Gamma_{\Id_\cK\otimes\sigma}\circ\E_{0}(X))\\
	&=d_\cH\,\Lambda\,D_{\operatorname{Lin}}(\Lambda^{-1}\,\Gamma_{\Id_\cK\otimes \sigma}(X)\|  \Lambda^{-1}\,\Gamma_{\Id_\cK\otimes \sigma}\circ \E_{0}(X))\\
	&\le d_\cH\,\Lambda\,D_{\operatorname{Lin}}(X\|  \E_{0}(X) )\\
	&=d_\cH\,\max_k \si_k \pl D_{\operatorname{Lin}}(X\| \E_{0}(X) )  \pl . \qedhere
	\end{align*}
\qd
%\end{proof}

\textbf{Comparison of entropy productions:} We are now interested in controlling the entropy production of $\cL_0$ in terms of that of $\cL$. Extending the expression derived in Equation \ref{entropyprodFisher} for the entropy production to non-normalized states, we find for any positive operator $X\in\mathcal{B}(\cK\otimes\cH)$:
\begin{align*}
\operatorname{EP}_{\cL}(\Gamma_{\Id_\cK\otimes\sigma}(X))&=\sum_{j\in\cJ}\,\langle \Gamma_{\Id_{\cK}\otimes \sigma}(\delta_{A_j}(X)),\,[\Gamma_{\Id_\cK\otimes \sigma}(X)]_{\omega_j}^{-1}(\Gamma_{\Id_{\cK}\otimes\sigma}(\delta_{A_j}(X)))\rangle_{\operatorname{HS}}\\
&\ge \,\inf_j\,\e^{-|\omega_j|/2}\,\sum_{j\in\cJ}\,\langle \Gamma_{\Id_\cK\otimes \sigma}(\delta_{A_j}(X)),\,[\Gamma_{\Id_{\cK}\otimes\sigma}(X)]_{0}^{-1}\,(\Gamma_{\Id_\cK\otimes\sigma}(\delta_{A_j}(X)))\rangle_{\operatorname{HS}}
\end{align*}
where we denoted $\delta_{A_i}=\Id_\cK\otimes \delta_{A_i}$ by slight abuse of notations, and where we used that, by definition $[\rho]_{\omega_j}\le \max_j\,\e^{|\omega_j|/2}\,[\rho]_0$ as self-adjoint operators in $\langle.,.\rangle_{\operatorname{HS}}$. Moreover, we need the following direct extension of Theorem 5 of \cite{hiai2012quasi} to the case of trace non-increasing maps:
\begin{prop}\label{lieb}
	Let $\Phi:\mathcal{B}(\cH)\to\mathcal{B}(\cH)$ be a completely-positive, trace non-increasing map, then the following holds for any $A>0$ and any $X\in\mathcal{B}(\cH)$:
	\begin{align*}
	\langle \Phi(X),\,[\Phi(A)]^{-1}_0(\Phi(X))\rangle_{\operatorname{HS}}\,\le \,	\langle X,\,[A]^{-1}_0(X)\rangle_{\operatorname{HS}}\,.
	\end{align*}
\end{prop}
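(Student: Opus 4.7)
The plan is to reduce the claim to the completely positive trace preserving (CPTP) version treated in Theorem 5 of \cite{hiai2012quasi}, by dilating $\Phi$ into a CPTP map on a one-dimensional enlargement of $\cH$ and exploiting the additivity of the quadratic form $X\mapsto \langle X,[A]_0^{-1}(X)\rangle_{\operatorname{HS}}$ under direct sum decompositions.

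First, I would introduce the dilation
\[
\Psi:\mathcal{B}(\cH)\longrightarrow \mathcal{B}(\cH)\oplus \C,\qquad \Psi(X):=\Phi(X)\,\oplus\,\tr\bigl((\Id-\Phi^*(\Id))\,X\bigr),
\]
where $\Phi^*$ denotes the Hilbert-Schmidt adjoint of $\Phi$. Trace non-increasingness of $\Phi$ is equivalent to $\Phi^*(\Id)\leq \Id$, so $\Id-\Phi^*(\Id)\geq 0$ and the second component is a positive (hence completely positive) linear functional from $\mathcal{B}(\cH)$ to $\C$. A short computation gives $\tr(\Psi(X))=\tr(\Phi^*(\Id)X)+\tr(X)-\tr(\Phi^*(\Id)X)=\tr(X)$, so $\Psi$ is CPTP into the block-diagonal subalgebra $\mathcal{B}(\cH)\oplus \C\subseteq \mathcal{B}(\cH\oplus\C)$.

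Next I would show that the quadratic form decomposes cleanly on this block-diagonal structure. Since $\Psi(A)$ is block-diagonal, the left and right multiplications $L_{\Psi(A)}$ and $R_{\Psi(A)}$, and therefore each resolvent $(r+L_{\Psi(A)})^{-1}(r+R_{\Psi(A)})^{-1}$ appearing in the double operator integral representation of $[\Psi(A)]_0^{-1}$, preserve block-diagonal inputs. Combined with the elementary identity $\int_0^\infty(r+c)^{-2}\,dr=c^{-1}$ for any $c>0$, this yields
\[
\langle \Psi(X),\,[\Psi(A)]_0^{-1}(\Psi(X))\rangle_{\operatorname{HS}}
=\langle \Phi(X),\,[\Phi(A)]_0^{-1}(\Phi(X))\rangle_{\operatorname{HS}}+\frac{|\tr((\Id-\Phi^*(\Id))\,X)|^2}{\tr((\Id-\Phi^*(\Id))\,A)}.
\]
Applying the CPTP data processing inequality of Theorem 5 of \cite{hiai2012quasi} to the CPTP map $\Psi$ then gives $\langle \Psi(X),[\Psi(A)]_0^{-1}(\Psi(X))\rangle_{\operatorname{HS}}\leq \langle X,[A]_0^{-1}(X)\rangle_{\operatorname{HS}}$, and discarding the non-negative second term on the left-hand side produces the stated inequality.

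I expect the main technical point to be the block-diagonal decomposition of $[\Psi(A)]_0^{-1}$: one must verify, via the double operator integral, that $(r+L_{\Psi(A)})^{-1}$ and $(r+R_{\Psi(A)})^{-1}$ commute and restrict compatibly to the block-diagonal subspace of $\mathcal{B}(\cH\oplus\C)$ so that the integral factorizes as above. The only degenerate case is $\tr((\Id-\Phi^*(\Id))\,A)=0$, which, because $A>0$, forces $\Phi^*(\Id)=\Id$; in that case $\Phi$ is already CPTP and the claim follows directly from \cite{hiai2012quasi} without any dilation.
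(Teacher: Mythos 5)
Your proof is correct and is precisely the argument the paper intends: the paper states this proposition without written proof as a ``direct extension'' of the CPTP case of Theorem 5 of \cite{hiai2012quasi}, but it uses the identical completion trick --- appending a scalar block carrying the lost trace --- in Proposition \ref{propequivrelent} and remarks in the proof of Proposition \ref{propEPcontrol} that it is easy to complete $\Phi$ to a trace-preserving completely positive map. Your block-diagonal factorization of the double operator integral and your separate treatment of the degenerate case $\Phi^*(\Id)=\Id$ supply exactly the details the paper leaves implicit.
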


Choose this time $\Phi'(X):=(\Lambda')^{-1}\,{\Gamma_{\Id_\cK\otimes\sigma}^{-1}(X)}$, where $\Lambda':= \max_k \{\si_k^{-1}\}$. This map is completely positive and trace non-increasing.
Then
 \begin{align*}
 \operatorname{EP}_{\cL}(\Gamma_{\Id_\cK\otimes\sigma}(X))&\ge \, \min_j\,\e^{-|\omega_j|/2}\,\sum_{j\in\cJ}\,\langle \Gamma_{\Id_\cK\otimes \sigma}(\delta_{A_j}(X)),\,[\Gamma_{\Id_\cK\otimes \sigma}(X)]_{0}^{-1}\,(\Gamma_{\Id_\cK\otimes \sigma}(\delta_{A_j}(X)))\rangle_{\operatorname{HS}}\\
 &\ge  \, \min_j\,\e^{-|\omega_j|/2}\,\sum_{j\in\cJ}\,\langle (\Lambda')^{-1}\,\delta_{A_j}(X),\,[(\Lambda')^{-1}\,X]_{0}^{-1}\,((\Lambda')^{-1} \,\delta_{A_j}(X))\rangle_{\operatorname{HS}}\\
 &=\,(\Lambda')^{-1}\,\min_j\,\e^{-|\omega_j|/2}\,\sum_{j\in\cJ}\,\langle \delta_{A_j}(X),\,[X]_{0}^{-1}\,( \delta_{A_j}(X))\rangle_{\operatorname{HS}}\\
 &\ge (\La')^{-1} \min_j\,\e^{-|\omega_j|/2}
 \,\operatorname{EP}_{\L_0}(X) \pl .
 \end{align*}

We have proved the following:
\begin{prop}\label{EPequiv}
	Let $\cL$ be the generator of a QMS that is self-adjoint with respect to the GNS inner product associated to a full-rank state $\sigma=\sum_k \si_k |k\rangle\lan k|$. Then, the following comparison of entropy productions holds: for any positive operator $X\in\mathcal{B}(\cK\otimes\cH)$,
	\begin{align*}
	\operatorname{EP}_{\cL}(\Gamma_{\Id_\cK\otimes \si}(X))\ge \min_k \{\si_k\} \min_j \{e^{-|\om_j|/2}\} \,\operatorname{EP}_{\L_0}(X)\,.
	\end{align*}
\end{prop}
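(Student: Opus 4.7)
The plan is to combine the double-operator-integral formula for the entropy production derived in the previous lemma,
\[
\operatorname{EP}_{\cL}(\Gamma_{\Id_\cK\otimes\si}(X)) = \sum_{j\in\cJ}\langle \Gamma_{\Id_\cK\otimes\si}(\delta_{A_j}(X)),\, [\Gamma_{\Id_\cK\otimes\si}(X)]_{\omega_j}^{-1}\Gamma_{\Id_\cK\otimes\si}(\delta_{A_j}(X))\rangle_{\operatorname{HS}},
\]
with two separate estimates that bring the right-hand side into the heat-semigroup form $\operatorname{EP}_{\cL_0}(X) = \sum_j \langle \delta_{A_j}(X),\, [X]_0^{-1}\delta_{A_j}(X)\rangle_{\operatorname{HS}}$. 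The first estimate will remove the Bohr frequencies $\omega_j$, and the second will remove the conjugation by $\Gamma_{\Id_\cK\otimes\si}$ via the data-processing inequality of Proposition~\ref{lieb}.

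For the frequency step, I would use the operator inequality $[\rho]_{\omega_j}^{-1} \ge \e^{-|\omega_j|/2}[\rho]_0^{-1}$ on the Hilbert-Schmidt Hilbert space. Since $[\rho]_{\omega_j}^{-1}$ is defined as the commuting double-operator integral $\int_0^\infty (r + \e^{-\omega_j/2}L_\rho)^{-1}(r + \e^{\omega_j/2}R_\rho)^{-1}\, dr$, its eigenvalue on the matrix unit $|k\rangle\langle l|$ is the reciprocal log-mean of $\e^{-\omega_j/2}\rho_k$ and $\e^{\omega_j/2}\rho_l$, which elementary manipulations show is bounded below by $\e^{-|\omega_j|/2}$ times the corresponding log-mean quantity for $\rho_k$ and $\rho_l$. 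Extending spectrally, this contributes the overall factor $\min_j \e^{-|\omega_j|/2}$.

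For the twist step, I would set $\Lambda' := \max_k \si_k^{-1}$ and consider the map $\Phi'(Y) := (\Lambda')^{-1}\Gamma_{\Id_\cK\otimes\si}^{-1}(Y) = (\Lambda')^{-1}(\Id_\cK\otimes\si^{-1/2})\,Y\,(\Id_\cK\otimes\si^{-1/2})$. It is completely positive by construction and, with this normalization, trace non-increasing, since $(\Lambda')^{-1}(\Id_\cK\otimes\si^{-1}) \le \Id_{\cK\otimes\cH}$ is exactly the inequality $\Lambda' \ge \max_k \si_k^{-1}$. Proposition~\ref{lieb} then yields, for each $j$,
\[
\langle \Phi'(\Gamma_{\Id_\cK\otimes\si}\delta_{A_j}X),\,[\Phi'(\Gamma_{\Id_\cK\otimes\si}X)]_0^{-1}\Phi'(\Gamma_{\Id_\cK\otimes\si}\delta_{A_j}X)\rangle_{\operatorname{HS}} \le \langle \Gamma_{\Id_\cK\otimes\si}\delta_{A_j}X,\,[\Gamma_{\Id_\cK\otimes\si}X]_0^{-1}\Gamma_{\Id_\cK\otimes\si}\delta_{A_j}X\rangle_{\operatorname{HS}}.
\]
Since $\Phi'\circ\Gamma_{\Id_\cK\otimes\si} = (\Lambda')^{-1}\id$ and $[\lambda Y]_0^{-1} = \lambda^{-1}[Y]_0^{-1}$ by the change of variable $r\mapsto \lambda r$ in the defining integral, the left-hand side collapses to $(\Lambda')^{-1}\langle \delta_{A_j}X,\,[X]_0^{-1}\delta_{A_j}X\rangle_{\operatorname{HS}}$. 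Summing over $j$ and combining with the frequency step produces the claimed factor $\min_k\si_k \cdot \min_j \e^{-|\omega_j|/2}$.

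The main subtlety, such as it is, is the matching of the three scalar factors in the twist step: the two $(\Lambda')^{-1}$'s coming from the two occurrences of $\Phi'$ in the sesquilinear form must be partially cancelled by the single factor of $\Lambda'$ coming from the degree $-1$ homogeneity of $Y\mapsto [Y]_0^{-1}$, leaving a net $(\Lambda')^{-1} = \min_k \si_k$. Apart from this bookkeeping, the argument is just a direct concatenation of the two operator-level inequalities, each applied term by term inside the sum over $j$.
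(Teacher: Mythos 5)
Your proof follows the paper's argument exactly: the same two-step estimate (first removing the Bohr frequencies via $[\rho]_{\omega_j}^{-1}\ge \e^{-|\omega_j|/2}[\rho]_0^{-1}$, then applying Proposition~\ref{lieb} to the trace non-increasing map $(\Lambda')^{-1}\Gamma_{\Id_\cK\otimes\si}^{-1}$), with the same scalar bookkeeping producing the net factor $(\Lambda')^{-1}=\min_k\si_k$. The only difference is that you spell out the log-mean justification of the frequency inequality, which the paper asserts without proof.
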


Combining Propositions \ref{propequivrelent} and \ref{EPequiv}, we are now ready to prove the main result of this section, namely Theorem \ref{theocomp}:

\begin{proof}[Proof of Theorem \ref{theocomp}]
First notice the following: for any $X\in\mathcal{B}_+(\cH\ten \cK)$ and thanks to the homogeneity of the entropy production and relative entropy
	\begin{align*}
	\al_{\operatorname{CLSI}}(\cL_0) \,D_{\operatorname{Lin}}(X\|(\E_{0}\otimes\id_\cK)(X) ) \kl \operatorname{EP}_{\L_0}(X)
\pl .
\end{align*}
The result then comes directly from the following chain of inequalities:
for any $\rho\lel \Gamma_{\Id_{\cK}\otimes \si}(X)\in\cD(\cH)$:
	\begin{align*}
	\al_{\operatorname{CLSI}}(\L_0) \,D(\rho\|(\E_{*}\otimes\id_\cK)(\rho))
	&\le \,\max_k \si_k\,  \al_{\operatorname{CLSI}}(\L_0)
  D_{\operatorname{Lin}}(X\| (\E_{0}\otimes\id_\cK)(X)) \\
	&\le
\max_{k} \sigma_k \operatorname{EP}_{\L_0\otimes\id_\cK}(X)  \\
&\le \max_{k} \sigma_k  \,\max_k \sigma_k^{-1} \max e^{|\om_j|/2}
\operatorname{EP}_{\L\otimes\id_\cK}(\rho) \pl.
\end{align*}
	The first inequality follows from Proposition \ref{propequivrelent}, the second one by the definition of the CLSI constant of $\cL_0$, and the last one by Proposition \ref{EPequiv}.
\end{proof}

\section{A non-primitive Holley-Stroock perturbation argument}\label{HSMLSILSI}
In this section, we extend Theorem \ref{theocomp} in essentially two directions: first, we do not assume that the reference semigroup $(\e^{-t\cL_0})_{t\ge 0}$ is the heat semigroup. Second, we relax the condition of primitivity for $(\mathcal{P}_t)_{t\ge 0}$. This situation will in particular extend the argument for CLSI. We will be interested in both MLSI and LSI inequalities.
\subsection{Perturbing the modified logarithmic Sobolev inequality}\label{sec:HSnp}

 More precisely, we want to compare the MLSI constants of the following two generators safistying the detailed balance condition with respect to two different, though commuting states:
\begin{align}\label{eqlindbladsigma}
\cL_{\sigma}(X)=-\sum_{j\in\cJ}\,\Big(\e^{-\omega_j/2}\,A_j^*[X,A_j]+\,\e^{\omega_j/2}[A_j,X]A_j^*\Big)\,.
\end{align}
and
\begin{align}\label{eqlindbladsigmaprim}
\cL_{\si'}(X)=-\sum_{j\in\cJ}\,\Big(\e^{-\nu_j/2}\,A_j^*[X,A_j]+\,\e^{\nu_j/2}[A_j,X]A_j^*\Big)\,.
\end{align}
Remark that these generators are given by the same Lindblad operators $\{A_j\}_{j\in\mathcal{J}}$, and only differ at the level of their Bohr frequencies. This implies in particular that they share the same fixed-point algebra $\mathcal{F}=\{K_j:\,j\in\mathcal{J}\}'$, which we decompose into matrix blocks:  
\begin{align*}
\mathcal{F}=\bigoplus_{i\in
	\mathcal{I}}\,\mathcal{B}(\cH_i)\otimes\Id_{\cK_i}\,.
\end{align*}
By the detailed balance condition, the operators $A_j$ are eigenvectors of the modular groups corresponding to two full-rank invariant states $\si$, resp. $\si'$, which can without loss of generality be taken as follows: given two families of full-rank states $\{\tau_i\}_{i\in\mathcal{I}}$ and $\{\tau_i'\}_{i\in\mathcal{I}}$,
\begin{align*}
\si:=\sum_{i\in\mathcal{I}}\,\frac{d_{\cK_i}}{d_\cH}\Id_{\cH_i}\otimes\tau_i\,,\qquad \si':=\sum_{i\in\mathcal{I}}\,\frac{d_{\cK_i}}{d_\cH}\Id_{\cH_i}\otimes\tau_i'\,,
\end{align*}
 In particular,
\[ \Delta_{\si}(A_j)  \lel e^{-\om_j} A_j ,\qquad
\Delta_{\si'}(A_j) \lel e^{-\nu_j} A_j \pl . \]
This implies in particular that the states $\tau_i$ and $\tau_i'$ commute so that:
\[\tau_i=\sum_{k}\,\lambda_k^{(i)}\,P_k^{(i)}\,,\qquad \tau_i'=\sum_{k}\,\lambda_k^{(i)'}\,P_k^{(i)}\,.\]
Our general perturbation theorem follows:
\begin{theorem}[Non-primitive Holley-Stroock for MLSI]\label{theononpr}
	With the above notations, the following holds:
		\begin{align*}
	\al_{\operatorname{MLSI}}(\L_{\si'}) &\le  \max_{i\in\mathcal{I},k} \,\frac{\la_k^{(i)} }{\la_k^{(i)'}}
	\, \max_{i\in\mathcal{I},k} \,\frac{\la_k^{(i)'} }{\la_k^{(i)}}
	\,\max_{j\in\mathcal{J}} e^{|\om_j-\nu_j|/2}\alpha_{\operatorname{MLSI}}(\cL_\si)\pl .
	\end{align*}
	Similarly,
	\begin{align*}
	\al_{\operatorname{CLSI}}(\L_{\si'}) &\le  \max_{i\in\mathcal{I},k} \,\frac{\la_k^{(i)} }{\la_k^{(i)'}}
	\,\,\max_{i\in\mathcal{I},k} \,\frac{\la_k^{(i)'} }{\la_k^{(i)}}
	\,\max_{j\in\mathcal{J}} e^{|\om_j-\nu_j|/2}\alpha_{\operatorname{CLSI}}(\cL_\si)\pl .
	\end{align*}
	\end{theorem}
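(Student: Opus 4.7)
The plan is to mirror the two-step structure of the proof of Theorem \ref{theocomp} — a relative-entropy comparison combined with an entropy-production comparison — but replacing the ``heat'' reference $\cL_0$ by a general $\cL_\sigma$ and simultaneously tracking the two sets of Bohr frequencies. My first observation is that the constant $C:=\max_{i,k}\tfrac{\lambda_k^{(i)}}{\lambda_k^{(i)'}}\max_{i,k}\tfrac{\lambda_k^{(i)'}}{\lambda_k^{(i)}}\max_j e^{|\omega_j-\nu_j|/2}$ is invariant under the swap $(\sigma,\omega)\leftrightarrow(\sigma',\nu)$, so proving either direction of the claimed bound automatically gives the other. Since the $\tau_i$ and $\tau_i'$ commute blockwise, $\sigma$ and $\sigma'$ commute, which will be used throughout. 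Every map introduced below will commute with tensoring by the identity on an auxiliary system $\cR$, so CLSI will follow from MLSI by the same argument tensored with $\id_\cR$. Write $\Lambda:=\max_{i,k}\lambda_k^{(i)}/\lambda_k^{(i)'}$ and $\Lambda':=\max_{i,k}\lambda_k^{(i)'}/\lambda_k^{(i)}$.

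For the relative-entropy comparison, I would introduce the self-adjoint single-Kraus operator $V:=(\sigma')^{1/2}\sigma^{-1/2}$, which satisfies $V^*V=\sigma^{-1}\sigma'\le\Lambda'\Id$, and set $\tilde\Phi(Y):=(\Lambda')^{-1}VYV$, a CP trace-non-increasing map sending $\Gamma_\sigma(X)\mapsto(\Lambda')^{-1}\Gamma_{\sigma'}(X)$ and $\Gamma_\sigma(\cE_\sigma(X))\mapsto(\Lambda')^{-1}\Gamma_{\sigma'}(\cE_\sigma(X))$. Extending $\tilde\Phi$ to a CPTP map $\Psi$ exactly as in Proposition \ref{propequivrelent}, the data processing inequality for $D_{\operatorname{Lin}}$ together with positivity of $D_{\operatorname{Lin}}$ on the ancillary block produces $D_{\operatorname{Lin}}(\Gamma_{\sigma'}(X)\|\Gamma_{\sigma'}(\cE_\sigma(X)))\le\Lambda'\,D(\Gamma_\sigma(X)\|\cE_{\sigma*}(\Gamma_\sigma(X)))$. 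Since $\Gamma_{\sigma'}(\cE_\sigma(X))$ is a fixed point of $\cE_{\sigma'*}$ (as both $\cE_\sigma$ and $\cE_{\sigma'}$ project onto the common fixed-point algebra $\cF$), applying Lemma \ref{chaintule} on the left and dropping the non-negative remainder will yield
\[
D(\Gamma_{\sigma'}(X)\|\cE_{\sigma'*}(\Gamma_{\sigma'}(X)))\le\Lambda'\,D(\Gamma_\sigma(X)\|\cE_{\sigma*}(\Gamma_\sigma(X))).
\]

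The crucial identity for the entropy-production side is the intertwining $V\Gamma_\sigma(\delta_{A_j}(X))V=\Gamma_{\sigma'}(\delta_{A_j}(X))$, which is a one-line calculation from $VA_jV^{-1}=e^{(\omega_j-\nu_j)/2}A_j$ (itself immediate from $\sigma^{\pm 1/2}A_j\sigma^{\mp 1/2}=e^{\pm\omega_j/2}A_j$ and its $\sigma'$-analogue). Applying the weighted analogue of Proposition \ref{lieb} to the same extension $\Psi$ — valid because the function $f_{\omega_j}(x,y)=(\log e^{\omega_j/2}y-\log e^{-\omega_j/2}x)/(e^{\omega_j/2}y-e^{-\omega_j/2}x)$ remains operator concave, so the perspective form $\langle\cdot,[\rho]_{\omega_j}^{-1}\cdot\rangle_{\operatorname{HS}}$ is CPTP-monotone (Hiai–Petz / Kubo–Ando) — gives for each $j$
\[
\langle\Gamma_{\sigma'}(\delta_{A_j}(X)),[\Gamma_{\sigma'}(X)]_{\omega_j}^{-1}\Gamma_{\sigma'}(\delta_{A_j}(X))\rangle_{\operatorname{HS}}\le\Lambda'\,\langle\Gamma_\sigma(\delta_{A_j}(X)),[\Gamma_\sigma(X)]_{\omega_j}^{-1}\Gamma_\sigma(\delta_{A_j}(X))\rangle_{\operatorname{HS}}.
\]
The mismatch between $\omega_j$ and $\nu_j$ on the two sides will then be absorbed via the eigenvalue-level estimate $[\rho]_\nu^{-1}\le e^{|\omega-\nu|/2}[\rho]_\omega^{-1}$, which follows from the homogeneity $L(ta,tb)=tL(a,b)$ and the monotonicity of the logarithmic mean in its two arguments. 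Summing over $j$ produces $\operatorname{EP}_{\cL_{\sigma'}}(\Gamma_{\sigma'}(X))\le\Lambda'\max_j e^{|\omega_j-\nu_j|/2}\operatorname{EP}_{\cL_\sigma}(\Gamma_\sigma(X))$.

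To conclude, apply the MLSI of $\cL_{\sigma'}$ at $\rho=\Gamma_{\sigma'}(X)$, bound the relative entropy from below by $\Lambda^{-1}D(\Gamma_\sigma(X)\|\cE_{\sigma*}(\Gamma_\sigma(X)))$ using the $(\sigma,\sigma')$-swapped version of the first display, bound the entropy production from above by $\Lambda'\max_j e^{|\omega_j-\nu_j|/2}\operatorname{EP}_{\cL_\sigma}(\Gamma_\sigma(X))$ via the third display, and take the infimum over $X\ge 0$ to recover the ratio defining $\alpha_{\operatorname{MLSI}}(\cL_\sigma)$. This yields $\alpha_{\operatorname{MLSI}}(\cL_{\sigma'})\le\Lambda\Lambda'\max_j e^{|\omega_j-\nu_j|/2}\alpha_{\operatorname{MLSI}}(\cL_\sigma)$, and the CLSI statement follows by repeating every step with all maps tensored by $\id_\cR$. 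The step I expect to be the main obstacle is the weighted analogue of Proposition \ref{lieb}: verifying cleanly that the Hiai-type contractivity for the BKM Fisher form $[\cdot]_0^{-1}$ carries over to $[\cdot]_{\omega_j}^{-1}$ with the correct constant is precisely what makes the Bohr-frequency intertwining buy the sharp exponent $|\omega_j-\nu_j|/2$ rather than the loose $|\omega_j|+|\nu_j|$ produced by the naive two-step reduction through $[\cdot]_0^{-1}$.
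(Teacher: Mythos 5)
Your proposal is correct and follows essentially the same route as the paper: a relative-entropy comparison via the CP trace-non-increasing map $\frac{1}{r}\Gamma_{\sigma}\circ\Gamma_{\sigma'}^{-1}$ (your conjugation by $V=(\sigma')^{1/2}\sigma^{-1/2}$ is the same map since $\sigma$ and $\sigma'$ commute) combined with data processing and the chain rule, followed by an entropy-production comparison using the eigenvalue bound $[\rho]_{\nu}^{-1}\le e^{|\omega-\nu|/2}[\rho]_{\omega}^{-1}$ and the Hiai--Petz monotonicity of the weighted form $\langle\cdot,[\cdot]_{\omega}^{-1}\cdot\rangle_{\operatorname{HS}}$ under CP trace-non-increasing maps, which is exactly the citation of Theorem 5 of \cite{hiai2012quasi} the paper relies on in Proposition \ref{propEPcontrol}. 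The only cosmetic deviations are that you exploit the symmetry of the constant under swapping $(\sigma,\omega)\leftrightarrow(\sigma',\nu)$ and apply the $\omega_j\to\nu_j$ reweighting on the $\sigma'$ side rather than the $\sigma$ side.
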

The proof of the theorem follows the same lines as for that of Theorem \ref{theocomp}. We compare relative entropies and entropy productions separately:

\hz
\textbf{Comparison of relative entropies:} 
\begin{prop}[Change of measure]\label{propequivrelent1} Denote $\cE_{\si}:=\lim_{t\to \infty}\e^{-t\cL_{\si}}$ and $\cE_{\si'}:=\lim_{t\to \infty}\e^{-t\cL_{\si'}}$. Then, for all $X\ge0$:	
 \[ D_{\Lin}(\Gamma_{\si}(X)||\cE_{\si*}\circ\Gamma_{\si}(X))\kl  \max_{i\in\mathcal{I},k} \,\frac{\la_k^{(i)} }{\la_k^{(i)'}}
\pl D_{\Lin}(\Gamma_{\si'}(X)||\cE_{\si'*}\circ\Gamma_{\si'}(X)) \pl .\]	
\end{prop}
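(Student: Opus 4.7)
The plan is to mimic the strategy of the proof of Proposition \ref{propequivrelent}: construct a completely positive, trace non-increasing map $\Phi$ that transports the pair $(\Gamma_{\si'}(X),\,\cE_{\si'*}(\Gamma_{\si'}(X)))$ to a constant multiple of $(\Gamma_{\si}(X),\,\Gamma_{\si}(\cE_{\si'}(X)))$, invoke the data processing inequality for $D_{\Lin}$, and finally use the chain rule of Lemma \ref{chaintule} to pass from the ``wrong'' reference $\Gamma_{\si}(\cE_{\si'}(X))$ to the correct conditional expectation $\cE_{\si*}(\Gamma_{\si}(X))$.

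Concretely, set $\Lambda:=\max_{i\in\mathcal{I},k}\la_k^{(i)}/\la_k^{(i)'}$ and $A:=\si^{1/2}\si'^{-1/2}$. Since $\si$ and $\si'$ share the eigenbasis given by the block decomposition and the projections $P_k^{(i)}$, $A$ is a well-defined element of $\mathcal{B}(\cH)$, so $\Phi(W):=\Lambda^{-1}\,A\,W\,A^*$ is completely positive with a single Kraus operator. A block computation yields
\[
A^*A=\si\si'^{-1}=\sum_{i\in\mathcal{I}}\Id_{\cH_i}\otimes\sum_k\frac{\la_k^{(i)}}{\la_k^{(i)'}}\,P_k^{(i)},
\]
whose largest eigenvalue is exactly $\Lambda$; therefore $\Lambda^{-1}A^*A\le\Id$ and $\Phi$ is trace non-increasing. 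It thus dilates to a CPTP map on $\mathcal{B}(\cH)\to\mathcal{B}(\cH)\oplus\CC$ by appending the missing trace in a diagonal ``garbage'' block, and by positivity of $D_{\Lin}$ on that block, DPI transfers back to $\Phi$ itself: $D_{\Lin}(\Phi(Y_1)\,\|\,\Phi(Y_2))\le D_{\Lin}(Y_1\,\|\,Y_2)$ for all positive $Y_1,Y_2$.

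Two key identities then follow immediately. First, because $\si$ and $\si'$ commute, $\Phi(\Gamma_{\si'}(X))=\Lambda^{-1}\Gamma_{\si}(X)$. Second, taking $t\to\infty$ in the identity $\cP_{t*}\circ\Gamma_{\si'}=\Gamma_{\si'}\circ\cP_t$ (obtained by exponentiating Equation (\ref{eq:KMSsa}) applied to $\cL_{\si'}$) gives $\cE_{\si'*}\circ\Gamma_{\si'}=\Gamma_{\si'}\circ\cE_{\si'}$, from which $\Phi(\cE_{\si'*}(\Gamma_{\si'}(X)))=\Lambda^{-1}\Gamma_{\si}(\cE_{\si'}(X))$. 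Applying DPI to the pair $(\Gamma_{\si'}(X),\cE_{\si'*}(\Gamma_{\si'}(X)))$ and using the homogeneity of $D_{\Lin}$ yields
\[
D_{\Lin}(\Gamma_{\si}(X)\,\|\,\Gamma_{\si}(\cE_{\si'}(X)))\le\Lambda\,D_{\Lin}(\Gamma_{\si'}(X)\,\|\,\cE_{\si'*}(\Gamma_{\si'}(X))).
\]

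To finish, observe that $\cL_{\si}$ and $\cL_{\si'}$ share the same Lindblad operators and therefore the same fixed-point algebra $\mathcal{F}$; in particular $\cE_{\si'}(X)\in\mathcal{F}$, so $\Gamma_{\si}(\cE_{\si'}(X))\in\Gamma_{\si}(\mathcal{F})$, which is precisely the image and fixed-point set of $\cE_{\si*}$. Lemma \ref{chaintule} applied with $Y:=\Gamma_{\si}(\cE_{\si'}(X))$ gives
\[
D_{\Lin}(\Gamma_{\si}(X)\,\|\,\Gamma_{\si}(\cE_{\si'}(X)))=D_{\Lin}(\Gamma_{\si}(X)\,\|\,\cE_{\si*}(\Gamma_{\si}(X)))+D_{\Lin}(\cE_{\si*}(\Gamma_{\si}(X))\,\|\,\Gamma_{\si}(\cE_{\si'}(X))),
\]
and dropping the nonnegative second term closes the argument. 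The main subtlety is exactly that $\cE_{\si}$ and $\cE_{\si'}$ genuinely differ (same range $\mathcal{F}$ but different action, since they are conditional expectations with respect to two different states), so $\Phi$ cannot directly intertwine $\cE_{\si'*}$ with $\cE_{\si*}$; the chain rule is the tool that absorbs this mismatch at no cost thanks to the nonnegativity of $D_{\Lin}$.
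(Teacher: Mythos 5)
Your proof is correct and uses essentially the same ingredients as the paper's: the same rescaled map $\Phi=\Lambda^{-1}\,\Gamma_{\si}\circ\Gamma_{\si'}^{-1}$ (written via the Kraus operator $A=\si^{1/2}\si'^{-1/2}$), its completion to a CPTP map to get data processing, and Lemma \ref{chaintule} to discard the mismatch between $\Gamma_{\si}(\cE_{\si'}(X))$ and $\cE_{\si*}(\Gamma_{\si}(X))$. The only difference is the order of steps (the paper invokes the chain rule first, via the identity $\Gamma_{\si'}^{-1}\circ\cE_{\si'*}=\Gamma_{\si}^{-1}\circ\cE_{\si*}$, and DPI last), which is immaterial.
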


\begin{proof}
	The conditional expectations $\cE_\si$ and $\cE_{\si'}$ take the following form: 
	\begin{align*}
	\cE_{\si*}= \sum_{i}\,\tr_{\cK_i}(P_i\,.\,P_i)\otimes \tau_i\,,\qquad 	\cE_{\si'*}= \sum_{i}\,\tr_{\cK_i}(P_i\,.\,P_i)\otimes \tau_i'\,,
	\end{align*}
	where for each $i$, $P_i$ is the projection onto the block $i$ in the decomposition of $\mathcal{F}$. This implies in particular the following relation: $\Gamma_{\si'}^{-1}\circ\cE_{\si'*}=\Gamma_{\si}^{-1}\circ\cE_{\si*}$. Next, consider the completely positive map $\Phi:= \frac{1}{r}\Gamma_\sigma\circ\Gamma_{\sigma'}^{-1} $, with $r:=\max_{i\in\mathcal{I},k}{\la^{(i)}_k}/{\la^{(i)'}_{k}}$. One can readily verify that $\Phi$ is trace non-increasing. Moreover, by Lemma \ref{chaintule}:
	\begin{align*}
	D_{\operatorname{Lin}}(\Gamma_\sigma(X)\|\cE_{\si*}\circ\Gamma_\si(X))&=D_{\operatorname{Lin}}(r\,\Phi\circ\Gamma_{\si'}(X)\|r\,\cE_{\si*}(\Phi\circ\Gamma_{\si'}(X)))\\
	&\le D_{\operatorname{Lin}}(r\,\Phi\circ\Gamma_{\si'}(X)\|\,\cE_{\si*}(\Gamma_{\si'}(X)))\\
	&=D_{\operatorname{Lin}}(r\,\Phi\circ \Gamma_{\si'}(X)\|\,r\,\Phi\circ \cE_{\si'*}\circ\Gamma_{\si'}(X))\,.
	\end{align*}
	Next, by homogeneity and data processing inequality after proper normalization of the channel as in the proof of Theorem \ref{propequivrelent}, we find that
	\begin{align*}
	D_{\operatorname{Lin}}(\Gamma_\sigma(X)\|\cE_{\si*}\circ\Gamma_\si(X))\le r\,D_{\operatorname{Lin}}( \Gamma_{\si'}(X)\| \cE_{\si'*}\circ\Gamma_{\si'}(X))\,,
	\end{align*}
	which is what needed to be proved.
	\qd

	\hz
\textbf{Comparison of entropy productions:}	We are now interested in comparing the entropy productions of $\cL_\si$ and $\cL_{\si'}$.

\begin{prop}\label{propEPcontrol}In the above notations, for any $X\ge 0$:
	\[\operatorname{EP}_{\cL_{\si'}}(\Gamma_{\si'}(X))
	\kl \max_{i\in\mathcal{I},k} \,\frac{\la_k^{(i)'} }{\la_k^{(i)}}\pl \,\max_j e^{|\om_j-\nu_j|/2}
	\operatorname{EP}_{\cL_{\si}}(\Gamma_{\si}(X)) \pl .\]
\end{prop}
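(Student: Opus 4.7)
My plan mirrors the two-step structure of the proof of Proposition \ref{EPequiv}, but now both the Bohr frequencies and the underlying density change between the two sides. Starting from the noncommutative entropy production formula derived at the end of Section \ref{sec:QMS}, I would write
\[ \operatorname{EP}_{\cL_{\sigma'}}(\Gamma_{\sigma'}(X)) \lel \sum_{j\in\cJ} \langle \Gamma_{\sigma'}(\delta_{A_j}(X)),\, [\Gamma_{\sigma'}(X)]_{\nu_j}^{-1}(\Gamma_{\sigma'}(\delta_{A_j}(X)))\rangle_{\operatorname{HS}}\pl, \]
and bound each term in two stages: first swap $\nu_j \leadsto \omega_j$ with the density $\Gamma_{\sigma'}(X)$ held fixed, contributing the factor $\max_j e^{|\omega_j - \nu_j|/2}$; then swap $\Gamma_{\sigma'}(X) \leadsto \Gamma_\sigma(X)$ with the frequency $\omega_j$ held fixed, contributing the factor $\Lambda':=\max_{i,k}\lambda_k^{(i)'}/\lambda_k^{(i)}$.

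For the first stage, the key is the operator inequality $[\rho]_{\nu_j}^{-1} \le e^{|\omega_j - \nu_j|/2}\,[\rho]_{\omega_j}^{-1}$, valid for any positive $\rho$. Diagonalising $\rho=\sum_a\mu_a\,|a\rangle\langle a|$, the double operator integral $[\rho]_\omega^{-1}$ acts on $|a\rangle\langle b|$ as multiplication by $L(e^{-\omega/2}\mu_a, e^{\omega/2}\mu_b)^{-1}$, where $L(x,y):=(x-y)/(\ln x - \ln y)$ is the scalar logarithmic mean, so the operator inequality reduces to a short scalar computation using only the homogeneity $L(cx,cy)=cL(x,y)$ and the monotonicity of $L$ in each variable.

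For the second stage, I would apply a weighted analogue of Proposition \ref{lieb} to the map $\Phi(Y) := (\Lambda')^{-1}\,\Gamma_{\sigma'}\Gamma_\sigma^{-1}(Y)$. Since $\sigma$ and $\sigma'$ commute and share the block decomposition of $\mathcal{F}$, one has $\sigma^{-1}\sigma' = \bigoplus_i \Id_{\cH_i}\otimes \tau_i^{-1}\tau_i' \le \Lambda'\,\Id$, making $\Phi$ completely positive and trace non-increasing; moreover $\Phi\circ\Gamma_\sigma = (\Lambda')^{-1}\,\Gamma_{\sigma'}$. The weighted analogue of Proposition \ref{lieb} that I would invoke reads
\[ \langle \Phi(Y),\, [\Phi(A)]_\omega^{-1}(\Phi(Y))\rangle_{\operatorname{HS}} \le \langle Y,\, [A]_\omega^{-1}(Y)\rangle_{\operatorname{HS}}\pl. \]
Applied with $A=\Gamma_\sigma(X)$ and $Y = \Gamma_\sigma(\delta_{A_j}(X))$, and combined with the scaling identity $[c\rho]_\omega^{-1} = c^{-1}[\rho]_\omega^{-1}$, this yields
\[ \langle \Gamma_{\sigma'}(\delta_{A_j}(X)),\, [\Gamma_{\sigma'}(X)]_{\omega_j}^{-1}(\Gamma_{\sigma'}(\delta_{A_j}(X)))\rangle_{\operatorname{HS}} \le \Lambda'\, \langle \Gamma_\sigma(\delta_{A_j}(X)),\, [\Gamma_\sigma(X)]_{\omega_j}^{-1}(\Gamma_\sigma(\delta_{A_j}(X)))\rangle_{\operatorname{HS}}\pl. \]
Summing over $j$ and multiplying by the factor from stage one completes the proof. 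The main technical obstacle is justifying this $\omega\neq 0$ version of Proposition \ref{lieb}; it follows from the same Lieb-type joint operator convexity that underlies the $\omega=0$ case, once one writes $[A]_\omega^{-1}$ as the inverse of the two-density operator mean $\Omega_{A_1,A_2}(Y) := \int_0^1 A_1^{s}Y A_2^{1-s}\,ds$ with $A_1=e^{-\omega/2}A$, $A_2=e^{\omega/2}A$, and applies the well-known monotonicity of $(A_1,A_2,Y)\mapsto \langle Y, \Omega_{A_1,A_2}^{-1}(Y)\rangle_{\operatorname{HS}}$ under CP trace non-increasing maps (as in the Hiai--Petz quasi-entropy framework).
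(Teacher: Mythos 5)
Your proposal is correct and follows essentially the same route as the paper: the paper likewise splits the bound into a frequency swap, via the operator inequality $[\rho]_{\omega_j}\le \max_j e^{|\omega_j-\nu_j|/2}[\rho]_{\nu_j}$, and a density swap via the completely positive trace non-increasing map $R^{-1}\Gamma_{\si'}\circ\Gamma_{\si}^{-1}$ together with Theorem 5 of \cite{hiai2012quasi}. The only cosmetic differences are that the paper performs the frequency swap on the $\si$ side at the end rather than on the $\si'$ side at the start, and that the ``weighted'' ($\omega\neq 0$) monotonicity you flag as the main obstacle is exactly what the paper gets by applying the two-variable quasi-entropy monotonicity to the pair $\bigl(e^{-\nu_j/2}\Gamma_{\si}(X),\,e^{\nu_j/2}\Gamma_{\si}(X)\bigr)$, as you sketch.
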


	\begin{proof}
	Using the expression derived in \ref{entropyprodFisher} for the entropy production for non-normalized states, we find for any positive operator $X\in\mathcal{B}(\cH)$:
\begin{align*}
\operatorname{EP}_{\cL_\si}(\Gamma_{\sigma}(X))&=\sum_{j\in\cJ}\,\langle \Gamma_{\sigma}(\delta_{A_j}(X)),\,[\Gamma_{\sigma}(X)]_{\omega_j}^{-1}(\Gamma_{\sigma}(\delta_{A_j}(X)))\rangle_{\operatorname{HS}}\\
&\ge \,\min_j\,\e^{-|\omega_j-\nu_j|/2}\,\sum_{j\in\cJ}\,\langle \Gamma_{\sigma}(\delta_{A_j}(X)),\,[\Gamma_{\sigma}(X)]_{\nu_j}^{-1}\,(\Gamma_{\sigma}(\delta_{A_j}(X)))\rangle_{\operatorname{HS}}
\end{align*}
where we used that, by definition $[\rho]_{\omega_j}\le \max_j\,\e^{|\omega_j-\nu_j|/2}\,[\rho]_{\nu_j}$ as self-adjoint operators in $\langle.,.\rangle_{\operatorname{HS}}$. Next, we observe that $\Phi(X)=\frac{1}{R}\Gamma_{\si'}\circ\Gamma_{\si}^{-1}(X))$ is a completely positive trace non-increasing map for $R:= \max_{i\in\mathcal{I},k} \,{\la_k^{(i)'} }/{\la_k^{(i)}}$. It is easy to `complete' $\Phi$ to a trace preserving completely positive map and hence, we deduce from Theorem 5 of \cite{hiai2012quasi} that, given $Y_j:=\Gamma_{\si}(X)\,\e^{-\nu_j/2}$ and $Z_j:=\Gamma_{\si}(X)\,\e^{\nu_j/2}$, and  $Y_j':=\Gamma_{\si'}(X)\,\e^{-\nu_j/2}$ and $Z_j':=\Gamma_{\si'}(X)\,\e^{\nu_j/2}$:
	\begin{align*}
	R^{-1}
	\operatorname{EP}_{\cL_{\si'}}(\Gamma_{\si'}(X))
	&= R^{-1}\sum_{j\in J}   \langle \Gamma_{\si'}(\delta_{A_j}(X)),[\Gamma_{\si'}(X)]_{\nu_j}^{-1}(\Gamma_{\si'}(\delta_{A_j}(X)))\rangle_{\operatorname{HS}}\\
	&= R^{-1}\sum_{j\in J}   \langle \Gamma_{\si'}(\delta_{A_j}(X)), \mathcal{T}_{\tilde{\ln}}^{Y_j',Z_j'}\circ
	\Gamma_{\si'}(\delta_{A_j}(X))\rangle_{\operatorname{HS}}\\
	&= \sum_{j\in J}   \langle \Phi\circ\Gamma_{\si}(\delta_{A_j}(X)), \mathcal{T}_{\tilde{\ln}}^{\,\Phi(Y_j),\Phi(Z_j)}\circ
	\Phi\circ\Gamma_{\si}(\delta_{A_j}(X))\rangle_{\operatorname{HS}}\\
	&\le \sum_{j\in J}   \langle \Gamma_{\si}(\delta_{A_j}(X)), \mathcal{T}_{\tilde{\ln}}^{\,Y_j,Z_j}\circ
	\Gamma_{\si}(\delta_{A_j}(X))\rangle_{\operatorname{HS}}\pl \\
	&=\sum_{j\in J}   \langle \Gamma_{\si}(\delta_{A_j}(X)), [\Gamma_{\sigma}(X)]^{-1}_{\nu_j}
	\Gamma_{\si}(\delta_{A_j}(X)))\rangle_{\operatorname{HS}}\pl\,.
	\end{align*}
The assertion follows. 
	\qd

\begin{proof}[Proof of Theorem \ref{theononpr}] Combine the previous two propositions as in the proof of Theorem \ref{theocomp}.\qd

\subsection{Perturbing the logarithmic Sobolev inequality}
The above Holley-Stroock argument can be easily adapted to the setting of the logarithmic Sobolev inequality. Such an inequality was shown to provide similar decoherence times as the MLSI in the primitive \cite{[OZ99],[TPK14],[KT13]} and non-primitive \cite{bardet2018hypercontractivity} settings. 

Recall that a faithful quantum Markov semigroup $(\cP_t:=\e^{-t\cL})_{t\ge 0}$, of corresponding conditional expectation $\cE$ towards its fixed-point algebra and full-rank invariant state $\sigma$, satisfies a \textit{weak logarithmic Sobolev inequality} (LSI) with constants $c>0$ and $d\ge 0$ if the following holds: for any positive definite state $\rho$,
\begin{align}\tag{LSI($c,d$)}\label{LSI}
D(\rho\|\cE_*(\rho))\le \,c\,\cE_{\cL}(\si^{-\frac{1}{4}}\rho^{\frac{1}{2}}\si^{-\frac{1}{4}})+d\,\|\si^{-\frac{1}{4}}\rho^{\frac{1}{2}}\si^{-\frac{1}{4}}\|_{\mathbb{L}_2(\sigma)}^2\,.
\end{align}
Here, the \textit{Dirichet form} $\cE_{\cL}$ is defined as $\cE_\cL(X):=\langle \cL(X),\,X\rangle_\si$, whereas the non-commutative $\mathbb{L}_p$ norms are defined as 
\begin{align*}
\|X\|_{\mathbb{L}_p(\si)}:=\Big(  \tr|\Gamma_\si^{\frac{1}{p}}(X)|^p\,\Big)^{\frac{1}{p}}\,.
\end{align*}
\begin{theorem}
	Let $\cL_\si$ and $\cL_{\si'}$ be defined as in Section \ref{sec:HSnp}. Assume that $\cL_{\si'}$ satisfies $\operatorname{LSI}$ with constants $(c',d')$. Then, $\cL_\si$ satisfies $\operatorname{LSI}$ with constants $(c,d)$ such that
	\begin{align*}
	c\le\, \max_{i\in\mathcal{I},k} \,\frac{\la_k^{(i)} }{\la_k^{(i)'}}
	\, \max_{i\in\mathcal{I},k} \,\frac{\la_k^{(i)'} }{\la_k^{(i)}}
	\,\max_{j\in\mathcal{J}} e^{|\om_j-\nu_j|/2}\,c'\,,\qquad d\le \max_{i\in\mathcal{I},k} \,\frac{\la_k^{(i)} }{\la_k^{(i)'}}
	\, \max_{i\in\mathcal{I},k} \,\frac{\la_k^{(i)'} }{\la_k^{(i)}}d'\,.
	\end{align*}
	\end{theorem}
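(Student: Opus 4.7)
My plan is to mirror the three-part template of the proof of Theorem \ref{theononpr}: starting from the assumed $\operatorname{LSI}(c',d')$ applied to $\rho$, I would separately compare the three quantities appearing in the LSI --- the relative entropy, the Dirichlet form, and the $\mathbb{L}_2$-norm. The relative entropy side is immediate from Proposition \ref{propequivrelent1}: $D(\rho\|\cE_{\sigma*}(\rho)) \le K_1\, D(\rho\|\cE_{\sigma'*}(\rho))$ with $K_1 := \max_{i,k}\lambda_k^{(i)}/\lambda_k^{(i)'}$. The $\mathbb{L}_2$-norm terms coincide exactly: since $\Gamma_\sigma^{1/2}(Y_\sigma)=\rho^{1/2}=\Gamma_{\sigma'}^{1/2}(Y_{\sigma'})$ by the definition of $Y$, we have $\|Y_\sigma\|^2_{\mathbb{L}_2(\sigma)} = \tr(\rho) = \|Y_{\sigma'}\|^2_{\mathbb{L}_2(\sigma')}$, so the $d$-constant really picks up only the factor $K_1$, and the stated bound $d\le K_1 K_2 d'$ follows from $K_2\ge 1$.

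The key step is the Dirichlet form comparison
$$\cE_{\cL_{\sigma'}}(Y_{\sigma'}) \le K_2\,\max_{j\in\cJ}\e^{|\omega_j - \nu_j|/2}\,\cE_{\cL_\sigma}(Y_\sigma),\qquad K_2 := \max_{i,k}\frac{\lambda_k^{(i)'}}{\lambda_k^{(i)}}.$$
Combining integration by parts (\ref{intbypart}) with the modular eigenvalue relation (\ref{eq}) yields the explicit expression
$$\cE_{\cL_\sigma}(Y_\sigma) = \sum_{j\in\cJ}\bigl\|\e^{-\omega_j/4}A_j\rho^{1/2} - \e^{\omega_j/4}\rho^{1/2}A_j\bigr\|_2^2,$$
and analogously for $\cL_{\sigma'}$ with $\nu_j$ in place of $\omega_j$. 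I would then mirror Proposition \ref{propEPcontrol}: introduce the completely positive, trace non-increasing map $\Phi(X):=K_2^{-1}\Gamma_{\sigma'}\circ\Gamma_\sigma^{-1}(X)$ and use monotonicity of the associated quadratic form under $\Phi$ to absorb the spectral factor $K_2$, while a pointwise operator inequality at the level of the KMS inner products (analogous to the bound $[\rho]_{\omega_j}\le\max_j\e^{|\omega_j|/2}[\rho]_0$ from Proposition \ref{EPequiv}) absorbs the Bohr-frequency mismatch into the factor $\max_j\e^{|\omega_j-\nu_j|/2}$.

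Chaining these three ingredients together,
\begin{align*}
D(\rho\|\cE_{\sigma*}(\rho)) &\le K_1\bigl[c'\,\cE_{\cL_{\sigma'}}(Y_{\sigma'})+d'\,\|Y_{\sigma'}\|^2_{\mathbb{L}_2(\sigma')}\bigr]\\
&\le K_1 K_2 c'\,\max_{j}\e^{|\omega_j-\nu_j|/2}\,\cE_{\cL_\sigma}(Y_\sigma) + K_1 d'\,\|Y_\sigma\|^2_{\mathbb{L}_2(\sigma)},
\end{align*}
which is exactly $\operatorname{LSI}(c,d)$ for $\cL_\sigma$ with the claimed bounds on $c$ and $d$.

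The main obstacle is the Dirichlet form comparison. Unlike the Fisher information, which admits a natural double-operator-integral representation that feeds directly into Hiai--Ruskai monotonicity (cf.\ Propositions \ref{lieb} and \ref{propEPcontrol}), the Dirichlet form is a bare KMS-inner-product quadratic form, and the naive pointwise ratios $\|F_j^{\sigma'}\|_2^2/\|F_j^\sigma\|_2^2$ need not be uniformly bounded in individual matrix directions. To bypass this, one passes through the $L^2$-isometry $\Gamma_\sigma^{1/2}\colon(\mathcal{B}(\cH),\|\cdot\|_{\mathbb{L}_2(\sigma)})\to(\mathcal{T}_2(\cH),\|\cdot\|_{\operatorname{HS}})$, compares the rotated generators $\tilde\cL_\sigma:=\Gamma_\sigma^{1/2}\cL_\sigma\Gamma_\sigma^{-1/2}$ and $\tilde\cL_{\sigma'}$ as quadratic forms evaluated at the common argument $\rho^{1/2}$, and carefully tracks the Bohr-frequency prefactors and spectral ratios throughout; the positivity of $\rho^{1/2}$ is what makes the global factor $K_2\max_j\e^{|\omega_j-\nu_j|/2}$ sufficient.
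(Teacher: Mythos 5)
There is a genuine gap, and it sits at the very first step of your architecture. You fix a single density $\rho$ and try to compare the three LSI terms for $\cL_\sigma$ and $\cL_{\sigma'}$ evaluated at that \emph{same} $\rho$. But the entropy comparison you invoke, $D(\rho\|\cE_{\sigma*}(\rho))\le K_1\,D(\rho\|\cE_{\sigma'*}(\rho))$ with $K_1=\max_{i,k}\lambda_k^{(i)}/\lambda_k^{(i)'}$, is false and is \emph{not} what Proposition \ref{propequivrelent1} states. In the primitive case $\cE_{\sigma*}(\rho)=\sigma$ and $\cE_{\sigma'*}(\rho)=\sigma'$, so taking $\rho=\sigma'$ gives $D(\sigma'\|\sigma)>0$ on the left and $K_1\,D(\sigma'\|\sigma')=0$ on the right. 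Proposition \ref{propequivrelent1} instead compares two \emph{different} states built from a common relative density: $D_{\operatorname{Lin}}(\Gamma_\sigma(X)\|\cE_{\sigma*}\Gamma_\sigma(X))\le K_1\,D_{\operatorname{Lin}}(\Gamma_{\sigma'}(X)\|\cE_{\sigma'*}\Gamma_{\sigma'}(X))$ — same $X$, different $\Gamma$'s. This mirrors the classical Holley–Stroock setup, where the common object is the Radon–Nikodym derivative $f$, not the measure $f\,d\mu$. The same defect propagates to your Dirichlet-form step: with a common $\rho$ the two forms $\sum_j\|\e^{-\nu_j/4}A_j\rho^{1/2}-\e^{\nu_j/4}\rho^{1/2}A_j\|_{\operatorname{HS}}^2$ and $\sum_j\|\e^{-\omega_j/4}A_j\rho^{1/2}-\e^{\omega_j/4}\rho^{1/2}A_j\|_{\operatorname{HS}}^2$ admit no uniform comparison (one can vanish while the other does not, e.g.\ when $[A_j,\rho]=0$ and $\omega_j=0\ne\nu_j$), and the map $\Phi=K_2^{-1}\Gamma_{\sigma'}\circ\Gamma_\sigma^{-1}$ you propose does not intertwine the two expressions because $\rho$ is held fixed rather than transported. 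The $\mathbb{L}_2$ identity $\|Y_\sigma\|^2_{\mathbb{L}_2(\sigma)}=\tr(\rho)=\|Y_{\sigma'}\|^2_{\mathbb{L}_2(\sigma')}$ is correct as a computation but is an artifact of the wrong parametrization.

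The fix is to adopt the paper's parametrization throughout: fix $X\ge0$ and apply $\operatorname{LSI}(c',d')$ to $\Gamma_{\sigma'}(X)$, then compare back to $\Gamma_\sigma(X)$. The entropy term is handled by Proposition \ref{propequivrelent1}; the $\mathbb{L}_2$ term becomes $\|\si'^{-1/4}\Gamma_{\si'}(X)^{1/2}\si'^{-1/4}\|^2_{\mathbb{L}_2(\si')}=\tr(\si'X)\le K_2\,\tr(\si X)$ with $K_2=\max_{i,k}\lambda_k^{(i)'}/\lambda_k^{(i)}$ (so the factor $K_2$ on $d$ is genuinely needed, not a throwaway from $K_2\ge1$); and the Dirichlet form is rewritten as a double operator integral $\sum_j\langle \Gamma_{\si'}(\delta_{A_j}(X)),\mathcal{T}^{Y_j',Z_j'}_{\tilde f_{1/2}^2}(\Gamma_{\si'}(\delta_{A_j}(X)))\rangle_{\operatorname{HS}}$, at which point the operator inequality $\mathcal{T}^{\e^{\nu_j/2}\tilde X,\e^{-\nu_j/2}\tilde X}_{\tilde f_{1/2}^2}\le\max_j\e^{|\nu_j-\omega_j|/2}\,\mathcal{T}^{\e^{\omega_j/2}\tilde X,\e^{-\omega_j/2}\tilde X}_{\tilde f_{1/2}^2}$ absorbs the Bohr-frequency mismatch and the Hiai--Ruskai monotonicity (Proposition \ref{lieb}) applied to $\Phi=K_2^{-1}\Gamma_{\si'}\circ\Gamma_\si^{-1}$ — which now genuinely maps the $\sigma$-objects to the $\sigma'$-objects because both are functions of the common $X$ — yields the factor $K_2$. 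Your closing paragraph correctly senses that the naive pointwise comparison fails, but the proposed $L^2$-isometry detour does not repair it; the resolution is the change of variables, not a finer analysis at fixed $\rho$.
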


\begin{proof}
 Given $X\ge 0$, the entropic term on the left-hand side of \ref{LSI} is taken care of the exact same way as in Proposition \ref{propequivrelent1}: 
\[D_{\Lin}(\Gamma_{\si}(X)||\cE_{\si*}\circ\Gamma_{\si}(X))\kl  \max_{i\in\mathcal{I},k} \,\frac{\la_k^{(i)} }{\la_k^{(i)'}}
\pl D_{\Lin}(\Gamma_{\si'}(X)||\cE_{\si'*}\circ\Gamma_{\si'}(X)) \pl .\]	
	Next, by assumption and homogeneity of the LSI, we have that
	\begin{align*}
	D_{\Lin}(\Gamma_{\si'}(X)||\cE_{\si'*}\circ\Gamma_{\si'}(X))\le c'\,\cE_{\cL_{\si'}}(\si^{'-\frac{1}{4}}\Gamma_{\si'}(X)^{\frac{1}{2}}\si^{'-\frac{1}{4}})+d'\,\|\si^{'-\frac{1}{4}}\Gamma_{\si'}(X)^{\frac{1}{2}}\si^{'-\frac{1}{4}}\|_{\mathbb{L}_2(\si')}^2\,.
	\end{align*}
First, notice that
	\begin{align*}
	\|\si^{'\frac{-1}{4}}\Gamma_{\si'}(X)^{\frac{1}{2}}\si^{'\frac{-1}{4}}\|_{\mathbb{L}_2(\si')}^2=\tr(\si' X)\le  \max_{i\in\mathcal{I},k} \,\frac{\la_k^{(i)'} }{\la_k^{(i)}}\tr(\si X)= \max_{i\in\mathcal{I},k} \,\frac{\la_k^{(i)'} }{\la_k^{(i)}}\|\si^{\frac{-1}{4}}\Gamma_{\si}(X)^{\frac{1}{2}}\si^{\frac{-1}{4}}\|_{\mathbb{L}_2(\si)}^2\,,
	\end{align*}
	which directly gives the expected upper bound on $d$. For $c$, we control the Dirichlet form in a way that is completely analogous to what we did for the entropy production in the proof of Proposition \ref{propEPcontrol}: we have 
	\begin{align*}
	\cE_{\cL_{\si'}}(\si^{'-\frac{1}{4}}\Gamma_{\si'}(X)^{\frac{1}{2}}\si^{'-\frac{1}{4}})&=\sum_{j\in\mathcal{J}}\,\langle C_{A_j}^{Y_j',\,Z_j'},\,\mathcal{T}_{\tilde{f}_{1/2}^2}^{Y_j',\,Z_j'}(C_{A_j}^{Y_j,\,Z_j'})\rangle_{\operatorname{HS}}\\
	&=\sum_{j\in\mathcal{J}}\,\langle \Gamma_{\si'}(\delta_{A_j}(X)),\,\mathcal{T}_{\tilde{f}_{1/2}^2}^{Y_j',\,Z_j'}(\Gamma_{\si'}(\delta_{A_j}(X)))\rangle_{\operatorname{HS}}\,,
	\end{align*}
	where $Y_j':=\e^{\frac{\nu_j}{2}}\Gamma_{\si'}(X)$, $Z_j':=\e^{-\frac{\nu_j}{2}}\Gamma_{\si'}(X)$ and $f_{1/2}:x\mapsto x^{1/2}$. We conclude by first noticing that $$\mathcal{T}_{\tilde{f}_{1/2}^2}^{\e^{\frac{\nu_j}{2}}\tilde{X},\e^{-\frac{\nu_j}{2}}\tilde{X}}\le \max_{j\in\mathcal{J}}\e^{|\nu_j-\omega_j|/2} \,\,\mathcal{T}_{\tilde{f}_{1/2}^2}^{\e^{\frac{\om_j}{2}}\tilde{X},\e^{-\frac{\om_j}{2}}\tilde{X}}$$
	and by invoking Theorem 5 of \cite{hiai2012quasi} in the same way as we did in Proposition \ref{propEPcontrol}.
	\qd

\section{Strong data processing inequality}\label{sec:SDPI}

 In this section, we adapt the proof of Section \ref{sec:qHolley} to the discrete time setting. Let $\Phi_*:\mathcal{T}_1(\cH)\to \mathcal{T}_1(\cH)$ be a quantum channel. Assume that $\Phi_*$ has a full-rank invariant state $\sigma$, and that the following detailed balance condition holds for its dual map $\Phi$: for all $X,Y\in\cB(\cH)$:
 \begin{align}\label{discretetimeDBC}
 \tr(\sigma \,\Phi(X)^*Y)=\tr(\sigma X^*\Phi(Y))\,.
 \end{align}
 It was shown in \cite{beigi2018quantum} that $\Phi$ has the following Kraus decomposition $ \Phi(X):=\sum_{j\in\mathcal{J}}\lambda_j\,K_jXK_j^*$ for some normalization constants $\lambda_j>0$, and where the Kraus operators $\{K_j\}=\{K_j^*\}$ satisfy $\sigma K_j=\e^{-\omega_j}\,K_j\sigma$, $\omega_j\in\mathbb{R}$. The normalization constants are defined in such a way that the map $\Phi_0$ defined by 
 \begin{align}\label{phi0}
 \Phi_0(X)=\sum_{j\in \mathcal{J}}K_j^*XK_j
 \end{align}
  is unital. It is easy to see that the choice $\lambda_j=\e^{\omega_j}$ works. Therefore, we have 
 \begin{align}
 \Phi(X)=\sum_{j\in\mathcal{J}}\e^{\omega_j}K_jXK_j^*\,.
 \end{align}
Now, in complete analogy with the continuous time setting, there exists a conditional expectation, call it $\mathcal{E}$, onto the fixed point algebra $\mathcal{F}(\Phi):=\{X\in\mathcal{B}(\cH):\Phi(X)=X\}$, such that 
\begin{align*}
 \Phi^n\underset{n\to\infty}{\to}\mathcal{E}\,.
 \end{align*}
In this section, we are interested in estimating the \textit{strong data processing inequality} (SDPI) constant $c(\Phi)$, defined as the largest constant $c$ such that. for any $\rho\ge 0$:
\begin{align*}
D_{\operatorname{Lin}}(\Phi_*(\rho)\|\mathcal{E}_*(\rho))\le cD_{\operatorname{Lin}}(\rho\|\mathcal{E}_*(\rho))\,.
\end{align*} 
By the data processing inequality, and since $\Phi_*\circ\mathcal{E}_*=\mathcal{E}_*$, $c\le 1$ necessarily. Moreover, when $\Phi_*$ is unital, the constant $c$ can be estimating in terms of the logarithmic Sobolev constant of a related quantum Markov semigroup, see \cite{[MHFW15]}. Now, a direct adaptation of the Holley-Stroock argument of Section \ref{sec:qHolley} allows us to reduce to this setting. For sake of simplicity, we state our result in the primitive case so that $\sigma$ is the unique invariant state of $\Phi$:
\begin{prop}\label{prop:sdpi}
Let $\Phi_*$ be a primitive quantum channel of unique invariant state $\sigma$ and satisfying Equation $\operatorname{(}$\ref{discretetimeDBC}$\operatorname{)}$, and let $\Phi_0$ be the corresponding unital channel defined as in Equation $\operatorname{(}$\ref{phi0}$\operatorname{)}$. Then, 
\begin{align*}
 c(\Phi)\le 
\operatorname{min}\{1,\, \|\sigma\|_\infty\,\|\sigma^{-1}\|_\infty\,c(\Phi_0)\}\,.
\end{align*}
	\end{prop}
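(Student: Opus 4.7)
The plan is to mirror the strategy of Theorem \ref{theocomp}, transporting $\rho$ through the map $\Gamma_\sigma$ to work with the unital channel $\Phi_0$ and paying the expected two constants $\|\sigma\|_\infty$ and $\|\sigma^{-1}\|_\infty$ for the round trip. The key structural fact is a discrete-time analog of the commuting diagram \reff{dia}: using the modular relation $\sigma^{1/2}K_j=\e^{-\omega_j/2}K_j\sigma^{1/2}$ together with its adjoint, a straightforward computation shows that
\begin{align*}
\Phi_*\circ\Gamma_\sigma=\Gamma_\sigma\circ\Phi_0\,.
\end{align*}
In the primitive setting this forces $\cE(Y)=\tr(\sigma Y)\Id$ and $\cE_*(\rho)=\tr(\rho)\sigma$, and the uniform bound $c(\Phi)\le 1$ is immediate from data processing together with $\cE_*\circ\Phi_*=\cE_*$.

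For the main estimate I would set $X:=\Gamma_\sigma^{-1}(\rho)\ge 0$. Using the intertwining relation above and the identities $\cE_*\circ\Phi_*=\cE_*$, $\cE_0\circ\Phi_0=\cE_0$, Proposition \ref{propequivrelent} (with trivial $\cK$) applied to the positive operator $\Phi_0(X)$ yields
\begin{align*}
D_{\operatorname{Lin}}(\Phi_*\rho\,\|\,\cE_*\rho)=D_{\operatorname{Lin}}(\Gamma_\sigma\Phi_0 X\,\|\,\cE_*\Gamma_\sigma\Phi_0 X)\le\|\sigma\|_\infty\,D_{\operatorname{Lin}}(\Phi_0 X\,\|\,\cE_0 X)\,,
\end{align*}
and invoking the SDPI of the unital channel $\Phi_0$ bounds the right-hand side by $\|\sigma\|_\infty\,c(\Phi_0)\,D_{\operatorname{Lin}}(X\,\|\,\cE_0 X)$.

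The missing ingredient is the ``reverse'' change of measure
\begin{align*}
D_{\operatorname{Lin}}(X\,\|\,\cE_0 X)\le\|\sigma^{-1}\|_\infty\,D_{\operatorname{Lin}}(\rho\,\|\,\cE_*\rho)\,,
\end{align*}
which I would prove by mimicking the argument of Proposition \ref{propequivrelent} with the auxiliary completely positive map $\Phi'(\eta):=\|\sigma^{-1}\|_\infty^{-1}\,\Gamma_\sigma^{-1}(\eta)$. The operator inequality $\sigma^{-1}\le\|\sigma^{-1}\|_\infty\Id$ makes $\Phi'$ trace non-increasing on positive inputs. Completing $\Phi'$ to a trace-preserving map by a block-diagonal scalar output, applying Lindblad's DPI, and rescaling by $1$-homogeneity yield $D_{\operatorname{Lin}}(X\,\|\,\tr(\rho)\Id)\le\|\sigma^{-1}\|_\infty\, D_{\operatorname{Lin}}(\rho\,\|\,\cE_*\rho)$. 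Since $\tr(\rho)\Id$ lies in the fixed-point algebra of $\cE_0$, the chain rule of Lemma \ref{chaintule} gives $D_{\operatorname{Lin}}(X\,\|\,\cE_0 X)\le D_{\operatorname{Lin}}(X\,\|\,\tr(\rho)\Id)$, and chaining the three displayed inequalities yields $c(\Phi)\le\|\sigma\|_\infty\|\sigma^{-1}\|_\infty c(\Phi_0)$ as claimed.

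The principal difficulty is organizational rather than technical: Proposition \ref{propequivrelent} is stated only in one direction, so the reverse estimate must be redone with the reciprocal auxiliary map $\|\sigma^{-1}\|_\infty^{-1}\Gamma_\sigma^{-1}$ in place of $\|\sigma\|_\infty^{-1}\Gamma_\sigma$. Everything else is routine bookkeeping once the intertwining $\Phi_*\Gamma_\sigma=\Gamma_\sigma\Phi_0$ is in hand.
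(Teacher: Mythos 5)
Your proposal is correct and follows essentially the same route as the paper: the intertwining $\Phi_*\circ\Gamma_\sigma=\Gamma_\sigma\circ\Phi_0$, the forward change of measure costing $\|\sigma\|_\infty$, the SDPI of $\Phi_0$, the chain rule of Lemma \ref{chaintule} to replace $\cE_0(X)$ by a multiple of the identity, and the reverse change of measure via the trace non-increasing map $\|\sigma^{-1}\|_\infty^{-1}\Gamma_\sigma^{-1}$ costing $\|\sigma^{-1}\|_\infty$. The only difference is cosmetic: the paper leaves the reverse estimate implicit ("follows the same way as in Proposition \ref{propequivrelent}"), whereas you spell it out, and you carry $\tr(\rho)\Id$ where the paper normalizes $\rho$ and writes $\Id$.
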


\begin{proof}
First of all, we notice that $\Phi_*\circ \Gamma_\sigma=\Gamma_\sigma\circ\Phi_0$. Indeed:
\begin{align*}
\Gamma_\sigma\circ\Phi_0(X)=\sigma^{\frac{1}{2}}\sum_{j\in\mathcal{J}}K_j^*XK_j\sigma^{\frac{1}{2}}=\sum_{j\in\mathcal{J}}\e^{\omega_j}K_j^*\Gamma_\sigma(X)K_j=\Phi_*(\Gamma_\sigma(X))
\,.
\end{align*}	
Then, given any $\rho:=\Gamma_\sigma(X)\ge 0$,
\begin{align*}
D_{\operatorname{Lin}}(\Phi_*(\rho)\|\sigma)&\le D_{\operatorname{Lin}}D(\Phi_*(\rho)\|\mathcal{E}_*(X)/d_\cH)\\
&=D_{\operatorname{Lin}}(\Gamma_\sigma\circ \Phi_0(X)\|\Gamma_\sigma( \cE_0(X)))\\
&\le \|\sigma\|_\infty\,D_{\operatorname{Lin}}(\Phi_0(X)   \|  \cE_0(X)   )\\
&\le \|\sigma\|_\infty\,c(\Phi_0)D_{\operatorname{Lin}}(X\|\cE_0(X))\\
&  \le \|\sigma\|_\infty\,c(\Phi_0)D_{\operatorname{Lin}}(X\|1) \\
&= \|\sigma\|_\infty\,c(\Phi_0)\,D_{\operatorname{Lin}}(\Gamma_\sigma^{-1}(\rho)\|\Gamma_\sigma^{-1}(\sigma))\\
&\le \|\sigma\|_\infty\,\|\sigma^{-1}\|_\infty \,c(\Phi_0)\,D_{\operatorname{Lin}}(\rho\|\sigma)\,.
\end{align*}
where the first and fourth inequalities follow from Lemma \ref{chaintule}, whereas the second and last inequalities follows the same way as in Proposition \ref{propequivrelent}.
	\end{proof}

In \cite{miclo1997remarques}, Miclo devised a technique to estimate the SDPI constant of a doubly stochastic, primitive Markov chain in terms of the logarithmic Sobolev inequality of a corresponding Markov semigroup. This result was later generalized in \cite{RaginskyMaxima}.  An extension to the tracial quantum setting was recently provided in 
\cite{[MHFW15]}. Combining their result with our Proposition \ref{prop:sdpi}, we arrive at the following corollary. Given a unital, self-adjoint quantum Markov semigroup $(\e^{-t\cL_0})_{t\ge 0}$, its \textit{logarithmic Sobolev constant} is defined as
\begin{align*}
\alpha_2(\cL_0):=\inf_{X> 0}\,\frac{\frac{1}{d}\,\langle X,\cL_0(X)\rangle_{\operatorname{HS}}}{\tr\Big(   \frac{X^2}{d}\ln X^2\Big)-\tr\Big(  \frac{X^2}{d}\Big)\ln\tr\Big(\frac{X^2}{d}\Big)}\,.
\end{align*}
\begin{cor}
	Let $\Phi, \Phi_0$ be defined as in (\ref{prop:sdpi}) and assume that $\Phi_0$ is primitive. Then,
	\begin{align*}
		c(\Phi)\le \min\{ \|\sigma\|_\infty\,\|\sigma^{-1}\|_\infty\,(1-\alpha_2(\Phi_0^2-\id)),\,1\}\,.
		\end{align*}
	\end{cor}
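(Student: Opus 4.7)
The corollary is obtained by chaining two already-developed estimates in sequence: the noncommutative Holley--Stroock bound for SDPI (Proposition \ref{prop:sdpi}), which handles the passage from $\Phi$ to the unital channel $\Phi_0$, and the quantum extension of Miclo's technique from \cite{[MHFW15]}, which controls the SDPI constant of a primitive, unital, self-adjoint quantum channel by a logarithmic Sobolev constant. Essentially no new analytic content is required; the work lies in checking the hypotheses.

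\emph{Step 1: Reduction to the unital channel.} Apply Proposition \ref{prop:sdpi} to obtain
\begin{align*}
c(\Phi) \;\le\; \min\bigl\{1,\; \|\sigma\|_\infty\,\|\sigma^{-1}\|_\infty\,c(\Phi_0)\bigr\}.
\end{align*}
It then suffices to show $c(\Phi_0) \le 1-\alpha_2(\Phi_0^2-\id)$ and substitute.

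\emph{Step 2: Verifying the hypotheses for the quantum Miclo inequality.} One checks that $\Phi_0$ from \eqref{phi0} satisfies the three properties required by \cite{[MHFW15]}: (i) \emph{unitality}, which holds by the choice of normalization $\lambda_j=\e^{\omega_j}$ that was made precisely so that $\Phi_0(\Id)=\Id$; (ii) \emph{self-adjointness with respect to the Hilbert--Schmidt inner product}, which follows from the fact that the Kraus family $\{K_j\}$ is closed under taking adjoints: indeed the HS-adjoint of $\Phi_0$ sends $X$ to $\sum_j K_j X K_j^*$, which after reindexing along the involution $j\mapsto j^*$ equals $\sum_j K_j^* X K_j=\Phi_0(X)$; (iii) \emph{primitivity}, which is the standing assumption. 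Thus $\Phi_0$ is a primitive, doubly stochastic, self-adjoint quantum channel and the Miclo-type bound of \cite{[MHFW15]} applies.

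\emph{Step 3: Applying the quantum Miclo bound and concluding.} The quantum Miclo inequality from \cite{[MHFW15]} (extending \cite{miclo1997remarques,RaginskyMaxima}) asserts
\begin{align*}
c(\Phi_0) \;\le\; 1-\alpha_2(\Phi_0^2-\id),
\end{align*}
where $\alpha_2$ is the logarithmic Sobolev constant of the QMS generator associated to $\Phi_0^2$ in the convention of the text. Combining this with Step 1 gives the required upper bound, and the general SDPI inequality $c(\Phi)\le 1$ accounts for the other entry of the minimum.

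\emph{Main obstacle.} There is no substantive analytic difficulty: the two deep ingredients are imported as black boxes. The only point demanding care is bookkeeping of sign and normalization conventions, namely matching the positive-generator convention used in the definition of $\alpha_2$ with the expression $\Phi_0^2-\id$ appearing in the statement, and verifying that the contraction coefficient for relative entropy in \cite{[MHFW15]} coincides with the definition of $c(\Phi_0)$ used here (which is straightforward once the conditional expectations onto the respective fixed-point algebras are identified).
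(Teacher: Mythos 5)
Your proposal is correct and follows essentially the same route as the paper, whose proof is simply the one-line combination of Proposition \ref{prop:sdpi} with Theorem 4.2 of \cite{[MHFW15]}. Your additional verification of unitality, Hilbert--Schmidt self-adjointness via the involution on the Kraus family, and primitivity is exactly the implicit hypothesis check the paper leaves to the reader.
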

\begin{proof}
	The bound follows from Proposition \ref{prop:sdpi} and Theorem 4.2 in \cite{[MHFW15]}.
	\end{proof}

\section{Preparation of mixed densities}\label{sec:stateprep}

 In this section our goal is to identify certain
 quantum Markov semigroups  which  can be used to prepare a given mixed state, and satisfies CLSI estimates. Clearly, the CLSI estimates will allow us to estimate the waiting time for a good approximation of the state.
 This complements  the results of \cite{kraus_preparation_2008} to the setting of mixed states.  For the rest of this section we will assume that
 \[ \sigma\lel \sum_{k=1}^m \si_k P_k \]
is a full-rank state on an $n$-dimensional Hilbert space, the $P_k$ are the projections onto eigenspaces.

\hz

\textbf{Eigenvalues of multiplicity one:}  Here we assume in addition that $\tr(P_k)=1$. Following \cite{Carlen20171810}, we know that the operators $A_j$ are eigenvectors of the modular operator $\Delta_\sigma$. In this particular case, these are given by the matrix units $E_{rs}:=|r\rangle\lan s|$ corresponding to a subset of edges  $E\subset \{1,...,m\}^2$. We may also choose
\[ A_{rs} \lel \chi_{rs} \,E_{rs}\pl~~~\text{ and }~~~ \pl A_{sr} \lel \chi_{rs} \,E_{sr}\,, \]
given some arbitrary constants $\chi_{rs}$
independent of the orientation of the edge.  The Bohr frequencies are given by $\Delta_{\si}(E_{rs})=\frac{\si_r}{\si_s}E_{rs}$ and hence $\om_{rs}=\ln \si_s-\ln \si_r$. Therefore the generator of the semigroup is given by
 \begin{equation} \label{linstate}
 \L_E(X) \lel \sum_{rs\in E}  |\chi_{rs}|^2 \kla  \Big(\frac{\si_r}{\si_s}\Big)^{1/2} (E_{ss}X -E_{sr}XE_{rs})+ \Big(\frac{\si_s}{\si_r}\Big)^{1/2}(XE_{rr}-E_{rs}XE_{sr}) \mer  \pl .
\end{equation}
Note that both terms are necessary for $\L_E$ to be the generator of a semigroup and we assume $\chi_{rs}\neq 0$. When the Bohr frequencies are all equal to $0$, the corresponding generator is denoted by $\cL_{E0}$.
\begin{defi} Let $E\subset \{(r,s)| 1\le r<s\le m\}$ be a subset of edges and $\tilde{E}=E\cup \{(s,r)|(r,s)\in E\}$. Then $E$ is said to be \emph{irreducible} if the graph with vertices $\{1,...,m\}$ and edges $\tilde{E}$ is connected.
\end{defi}

\begin{lemma}\label{erg2} $\L_E$ leaves the diagonal matrices $\ell_{\infty}^m\subset \Mz_m$ with respect to the basis $\{|r\rangle\}_r$ invariant.  Moreover, if $E$ is irreducible, then $\L_E$ is primitive.
\end{lemma}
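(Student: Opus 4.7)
The plan is to prove the two statements separately, each by a direct calculation followed, in the second case, by a short graph-theoretic argument.

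For the invariance of the diagonal subalgebra, I would simply evaluate the formula \eqref{linstate} on an arbitrary diagonal matrix $X = \sum_k x_k E_{kk}$ and verify that every term is diagonal. Using $E_{kk} E_{ll} = \delta_{kl} E_{kl}$, one computes $E_{ss}X = x_s E_{ss}$, $XE_{rr} = x_r E_{rr}$, $E_{sr}X E_{rs} = x_r E_{sr}E_{rr}E_{rs} = x_r E_{ss}$, and $E_{rs}X E_{sr} = x_s E_{rr}$. Substituting, one gets
\[
\L_E(X) = \sum_{(r,s) \in E} |\chi_{rs}|^2\!\left[\Big(\tfrac{\si_r}{\si_s}\Big)^{1/2}(x_s - x_r) E_{ss} + \Big(\tfrac{\si_s}{\si_r}\Big)^{1/2}(x_r - x_s) E_{rr}\right],
\]
which is diagonal. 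Thus $\L_E$ maps the diagonal subalgebra $\ell_\infty^m$ to itself; in fact its restriction is the generator of a classical Markov chain on $\{1,\dots,m\}$ whose edges are exactly $\tilde{E}$.

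For primitivity I would invoke Lemma \ref{ergodic}: it suffices to show that the commutant $\{A_{rs}, A_{sr} : (r,s) \in E\}'$ reduces to the scalars. Since each $\chi_{rs} \neq 0$, this commutant is the same as $\{E_{rs}, E_{sr} : (r,s) \in E\}'$. Let $Y = \sum_{k,\ell} y_{k\ell} E_{k\ell}$ lie in this commutant. A direct expansion gives $[Y, E_{rs}]_{pq} = y_{pr}\delta_{qs} - y_{sq}\delta_{pr}$, so $[Y,E_{rs}] = 0$ forces $y_{rr} = y_{ss}$, $y_{pr} = 0$ for $p \neq r$, and $y_{sq} = 0$ for $q \neq s$; commutation with $E_{sr}$ gives the symmetric conditions. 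Thus, for each edge $(r,s) \in E$, all off-diagonal entries in the $r$th and $s$th rows and columns of $Y$ vanish, and the two diagonal entries $y_{rr}$ and $y_{ss}$ coincide.

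Irreducibility of $E$ then finishes the argument: since the graph with vertex set $\{1,\dots,m\}$ and edge set $\tilde{E}$ is connected, every index appears as an endpoint of some edge, hence every row and column of $Y$ is diagonal, i.e., $Y$ is a diagonal matrix. Moreover, for any two indices $k,\ell$ there is a path $k = k_0, k_1, \dots, k_t = \ell$ in $\tilde{E}$, and along it the previous step yields $y_{k_0 k_0} = y_{k_1 k_1} = \cdots = y_{k_t k_t}$. Hence $Y$ is a scalar multiple of $\Id$, and by Lemma \ref{ergodic} $\L_E$ is primitive. No substantial obstacle is expected; the only point worth checking carefully is that one uses commutation with both orientations $E_{rs}$ and $E_{sr}$ on each edge, which is justified since both operators belong to the Lindblad family by construction.
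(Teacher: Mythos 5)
Your proof is correct and follows essentially the same route as the paper: a direct computation on diagonal matrices for the invariance claim, then Lemma \ref{ergodic} combined with connectivity of $\tilde{E}$ for primitivity. The only cosmetic difference is the final step: the paper shows the commutant is trivial by multiplying matrix units along a path to produce $E_{tv}$ for arbitrary $t,v$, whereas you compute the commutant of $\{E_{rs},E_{sr}\}$ entrywise and propagate the equality of diagonal entries along paths --- both are valid and of comparable length.
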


\begin{proof}
	 For a diagonal matrix $X$ we have
	\[ E_{ss}X -E_{sr}XE_{rs} \lel (X_{ss}-X_{rr})E_{ss} \quad , \quad
	XE_{rr}-E_{rs}XE_{sr} \lel (X_{rr}-X_{ss})E_{rr} \pl .\]
	Thus $\L_E(\ell_{\infty}^m)\subset \ell_{\infty}^m$.
	
	Next, let us define $\delta_{rs}(x)=[E_{rs},x]$. Thanks to Lemma \eqref{ergodic} we note that $\cL_E(X)=0$ if and only if $[E_{rs},X]=0$ for all $(r,s)\in \tilde{E}$. Since the graph is irreducible, we can find a chain $(t,t_1),(t_1,t_2),...,(t_k,v)$ connecting any $t$ and $v$ and write
 \[ E_{tv}=E_{tt_1}\,E_{t_1t_2}\,\cdots\, E_{t_kv} \,.\]
Thus $X$ commutes with all matrix units and hence is a multiple of the identity.
 \qd

In the following we will assume that $\chi_{rs}=1$ for all $r,s$.

\begin{rem} \label{2exam}\rm It will be shown in a forthcomming work that for every irreducible $E$, $\cL_{E0}$ satisfies the CLSI. For the complete graph, i.e., when all edges are included, we see that
  \[\cL_{E0} (X)\,=\,2m\Big( X-\frac{1}{m}\sum_{k=1}^m X_{kk}\,\Id\Big) .\]
Therefore we deduce from \cite{BarEID17} that $\al_{\operatorname{CLSI}}(\L_{E0})\gl 2m$. For the cyclic graph
$E=\{(j,j+1)\}$ we know that $\al_{\operatorname{CLSI}}(\L_{E0})\gl \frac{c}{m^2}$ for some universal constant $c$.
\end{rem}

\begin{cor}\label{mone} Let $E\subset\{1,...,m\}^2$ be an irreducible graph with $\operatorname{CLSI}$ constant $\al_{\operatorname{CLSI}}(\cL_{E0})$. Assume further that $\si=\sum_{k=1}^m \si_k |k\rangle\lan k|$ is nondegenerate.  Then the $\operatorname{CLSI}$ constant of $\L_E$ satisfies
 \[ \al_{\operatorname{CLSI}}(\cL_{E0})\kl \max_{kl} \frac{\si_k}{\si_l}\, \max_{(rs)\in \tilde{E}} \Big(\frac{\si_s}{\si_r}\Big)^{1/2}\, \al_{\operatorname{CLSI}}(\L_E) \pl .\]
\end{cor}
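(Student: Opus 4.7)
The plan is to recognize Corollary \ref{mone} as a direct specialization of Theorem \ref{theocomp} applied to the Lindbladians $\cL_E$ given in Equation \eqref{linstate}, and then translate the resulting constants into the combinatorial data of the graph $\tilde{E}$ and the spectrum of $\sigma$.

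First I would verify the hypotheses of Theorem \ref{theocomp}. By Lemma \ref{erg2}, irreducibility of $E$ implies that $\cL_E$ is primitive with full-rank invariant state $\sigma$. The Lindblad operators (with $\chi_{rs}=1$) are the matrix units $\{E_{rs}:(r,s)\in\tilde{E}\}$. Because $\Delta_\sigma(E_{rs})=(\sigma_r/\sigma_s)\,E_{rs}$, each $E_{rs}$ is an eigenvector of the modular operator, which both forces the $\operatorname{GNS}$-detailed balance condition and identifies the Bohr frequencies as $\omega_{rs}=\ln\sigma_s-\ln\sigma_r$. Setting every $\omega_{rs}=0$ in \eqref{linstate} recovers precisely $\cL_{E0}$, so $\cL_{E0}$ is the heat semigroup associated with $\cL_E$ in the sense of Section \ref{sec:qHolley}.

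With these identifications in place, Theorem \ref{theocomp} yields
\[
\alpha_{\operatorname{CLSI}}(\cL_{E0}) \leq \max_{k,l}\frac{\sigma_k}{\sigma_l}\,\max_{(rs)\in\tilde{E}} e^{|\omega_{rs}|/2}\,\alpha_{\operatorname{CLSI}}(\cL_E)\,.
\]
To match the stated form of the bound, I would observe that $e^{|\omega_{rs}|/2}=\max\{(\sigma_r/\sigma_s)^{1/2},(\sigma_s/\sigma_r)^{1/2}\}$, and that $\tilde{E}$ is closed under reversal of orientation. Hence the maximum of $e^{|\omega_{rs}|/2}$ over $\tilde{E}$ equals the maximum of $(\sigma_s/\sigma_r)^{1/2}$ over $\tilde{E}$, giving the expression in the statement. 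Because this is a direct specialization of a previously established theorem, there is no real obstacle; the only point requiring attention is the bookkeeping of the signs of the Bohr frequencies together with the symmetry of $\tilde{E}$, which makes the two forms of the middle factor coincide.
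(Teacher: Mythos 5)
Your proposal is correct and follows exactly the paper's route: the paper likewise deduces the corollary directly from Theorem \ref{theocomp} together with Lemma \ref{erg2}, with the identification of the Bohr frequencies $\omega_{rs}=\ln\sigma_s-\ln\sigma_r$ coming from Equation \eqref{linstate}. Your extra bookkeeping about $e^{|\omega_{rs}|/2}$ and the orientation-symmetry of $\tilde{E}$ is a correct and welcome elaboration of what the paper leaves implicit.
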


\begin{proof} This follows directly from Theorem \ref{theocomp} and Lemma \ref{erg2}.
\qd

\begin{rem} {\rm For Lindblad operators with coefficients $\chi_{rs}$, we obtain a similar result. Note however, we should expect a normalization requirement due to the geometry of the H\"ormander system introduced in \cite{gao2018fisher}.}
\end{rem}

\begin{rem} a) For simple multiplicities, the analogy with graph Laplacians on a finite set of vertices  goes very far. For this comparison we denote by $\mu$ the measure with probabilities $\mu(\{k\})=\si_k$.

\begin{enumerate}
\item[i)] The space $L_2(\mu)$ sits as a diagonal in the space $L_2(\Mz_m,\si)$ with inner product $\langle X,Y\rangle_{\si}=\tr(\Gamma_{\si}(X)Y)$.
\item[ii)] On $L_2(\mu)$ we may consider the derivations $\delta_{rs}:\ell_{\infty}^m\to \ell_{\infty}^2$, $\delta_{rs}(f)=(f(r),-f(s))$. The graph Laplacian is given by
   \[ A_{E} \lel \sum_{rs\in E} \delta_{rs}^*\delta_{rs}\]
Moreover, $\hat{\delta}_{rs}:\Mz_m\to \Mz_m$, $\hat{\delta}_{rs}(X)=[E_{rs},X]$ extends these derivations to $L_2(\Mz_m,\si)$ and
 \[ \cL_{E0} \lel \sum_{rs} \hat{\delta}_{rs}^*\hat{\delta}_{rs} \]
extends the operator $A_E$, i.e. $\L_{E0}|_{L_2(\mu)}=A_E$. In particular, $e^{-t\L_{E0}}(f)=e^{-tA_E}(f)$ for diagonal operators $f$.
\item[iii)] According to \cite[section 5]{Carlen20171810} the Bohr frequencies in \eqref{linstate} appear naturally by duality with respect to the inner product given by $\si$. Finally, transferring the semigroup to $\Mz_m$ or $(\Mz_m)_*$ then induces $\L_E$ or $\L_{E*}$.
\end{enumerate}

b) The easiest choice of a stabilizing semigroup from \cite{MSFW}, given by
 \[ \P_{t*}(\rho) \lel e^{-t}\rho+(1-e^{-t})\si \,,\]
corresponds to taking the complete graph, certainly a very convenient choice, which requires to `average over many edges'. It is easy to see that $\al_{\operatorname{CLSI}}(\L_{*})\ge1$ for the generator $\cP_{t*}=e^{-t\L_*}$ (see e.g. \cite{BarEID17}). Thus for more practical applications it remains to be seen which Lindblad operators can be `implemented' and then aim for an estimate of the corresponding $\operatorname{CLSI}$ constant of $\cL_0$, which is independent of $\si$.
\end{rem}
\hz
\textbf{Eigenvalues of larger multiplicity and locality:}  We will now extend Corollary \eqref{mone} to the general case following the same procedure. Recalling that $\sigma=\sum_{k}\,\sigma_kP_k$, let us define the subspaces $\cH_k=P_k\cH$ and write
 \[ \{1,...,n\} \lel \bigcup_k I_k \pl \,,\]
where each subset $I_k$ corresponds to the eigenspace $\cH_k$. We will also assume that the eigenvalues $\si_k$ of $\sigma$ are defined in decreasing order.  Then we choose edges $E\subset \{1,...,n\}^2$ and consider
 \begin{equation} \label{linstate2}
 \L_E(X) \lel \sum_{(r,s)\in E}   \Big( e^{-\om_{rs}/2}(E_{ss}X-E_{sr}XE_{rs})+
 e^{\om_{rs}/2}(XE_{rr} -E_{rs}XE_{sr})
 \Big) \pl ,
 \end{equation}
where
 \[ e^{-\om_{rs}/2}  \lel \begin{cases} 0& \mbox{if there exists a $k$ such that} \pl  r, s\in I_k \\
  \Big(\frac{\si_k}{\si_j}\Big)^{1/2} & \mbox{if there exists $k\neq j$ such that} \pl  r\in I_k, s\in I_j \pl .
  \end{cases}   \]

\begin{cor} Let $E$ be an irreducible  graph and $\sigma=\sum_k \si_k P_k$. Then the semigroup $\L_E$ satisfies $\operatorname{CLSI}$ and
 \[ \al_{\operatorname{CLSI}}(\cL_{E0}) \kl \max_{k,l} \frac{\si_k}{\si_l} \max_{I_k\times  I_j\cap \tilde{E}\neq 0} \Big(\frac{\si_k}{\si_l}\Big)^{1/2} \pl \al_{\operatorname{CLSI}}(\L_E) \pl . \]
\end{cor}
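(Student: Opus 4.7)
The plan is to deduce this degenerate case from Theorem \ref{theocomp}, in complete analogy with the proof of Corollary \ref{mone}. By construction, the generator $\L_E$ defined in (\ref{linstate2}) is a GNS-symmetric quantum Markov semigroup with respect to $\si$: the prescription that $e^{-\om_{rs}/2}=0$ for same-block pairs simply excludes those matrix units from the Lindblad operator set, leaving the collection $\{E_{rs} : (r,s)\in\tilde E,\, r\in I_k,\, s\in I_j,\, k\neq j\}$. For each such operator one checks $\Delta_{\si}(E_{rs}) = (\si_k/\si_j)\,E_{rs}$, so the associated Bohr frequency is $\om_{rs}=\ln(\si_j/\si_k)$ and the stated prefactor $(\si_k/\si_j)^{1/2}$ agrees with $e^{-\om_{rs}/2}$ as required by the standard GKLS form of a detailed-balance generator.

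Next I would establish the direct analog of Lemma \ref{erg2}: under an appropriate notion of irreducibility, $\L_E$ is primitive. By Lemma \ref{ergodic}, it suffices to show that the commutant of the inter-block matrix units reduces to $\cz\,\Id$. This holds whenever the graph on $\{1,\ldots,n\}$ with inter-block edges $\tilde E$ is connected, since then every matrix unit $E_{uv}$ arises as a product $E_{v_0v_1}E_{v_1v_2}\cdots E_{v_{k-1}v_k}$ along a path, and such products generate all of $\Mz_n$. If instead irreducibility is understood only at the block level (so that $\L_E$ may have a nontrivial fixed-point algebra consisting of the block-diagonal scalars), one can equally invoke Theorem \ref{theononpr} in place of Theorem \ref{theocomp} without altering the rest of the argument.

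With primitivity (or its non-primitive replacement) in hand, Theorem \ref{theocomp} immediately yields
\begin{align*}
\al_{\operatorname{CLSI}}(\L_{E0}) \;\le\; \max_{k,l}\frac{\si_k}{\si_l}\;\cdot\;\max_{j} e^{|\om_j|/2}\;\cdot\;\al_{\operatorname{CLSI}}(\L_E).
\end{align*}
The remaining step is purely bookkeeping: an edge with $r\in I_k,\, s\in I_j$ contributes $e^{|\om_{rs}|/2} = (\max\{\si_k,\si_j\}/\min\{\si_k,\si_j\})^{1/2}$, so, using the convention that the $\si_k$ are ordered decreasingly, the maximum over Lindblad operators coincides with $\max_{I_k\times I_j\cap \tilde E\neq \emptyset}(\si_k/\si_j)^{1/2}$ as written in the statement. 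The only genuine subtlety, and hence the main obstacle, is pinning down the correct notion of irreducibility in the degenerate setting so that either $\L_E$ is primitive or Theorem \ref{theononpr} applies with the very same multiplicative constants; once this is settled, the argument is a direct transcription of the proof of Corollary \ref{mone}.
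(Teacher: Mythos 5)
Your proposal is correct and matches the paper's intended argument: the paper gives no explicit proof of this corollary beyond asserting that it extends Corollary \ref{mone} ``following the same procedure,'' which is precisely the combination you carry out --- GNS-symmetry of $\L_E$ with respect to $\si$, the ergodicity/primitivity argument of Lemmas \ref{ergodic} and \ref{erg2}, an application of Theorem \ref{theocomp}, and the bookkeeping of the Bohr frequencies $\om_{rs}$. The irreducibility subtlety you flag is genuine but originates in the paper's own ambiguous prescription for $e^{-\om_{rs}/2}$ on intra-block edges (most plausibly a typo for $\om_{rs}=0$, under which the Lemma \ref{erg2} argument applies verbatim), and your fallback to Theorem \ref{theononpr} covers the alternative reading with the same multiplicative constants.
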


Note here that the graph structure is by no means necessary. In particular, we could use any nice set of generators to produce primitive semigroups in $\mathcal{B}(\cH_k)$. Moreover, once this is achieved we just need sufficiently many links $A_{kj}\in \mathcal{B}(\cH_k,\cH_j)$ to guarantee primitivity of $\L_0$.

\hz

We assume now that $\cH^{\ten d}$ is a $d$-partite system, and denote by $\cH_k$ the eigensubspaces of $\sigma\in\mathcal{D}(\cH^{\otimes d})$. We say that a subspace $\cK\subset \cH^{\ten d}$ is $l$-local if there exists a permutation $\pi:\{1,...,d\}\to \{1,...,d\}$ and a projection $Q\in \mathcal{B}(\cH^{\ten l})$ such that $P_\cK=\Si_{\pi}^{-1}(Q\ten 1_{\cH^{\ten (d-l)}})\Si_{\pi}$, where $\Si_{\pi}$ is obtained by permuting registers: $\Si_{\pi}(h_1\ten\cdots \ten h_d)=h_{\pi(1)}\ten \cdots h_{\pi(d)}$. Similarly we say that an operator $A$ is $l$-local if $A\cong B\ten 1_{\cH^{\ten {(d-l)}}}$ holds up to a permutation of registers. The same definition holds for superoperators.

\begin{lemma} Assume that for all $k$ the subspaces $\cH_k$  are $l$-local, and that for $I_k\times I_j\cap E$ the space $\cH_k+\cH_j$ is $l$-local. Then $\L_{E*}$ is $l$-local.
\end{lemma}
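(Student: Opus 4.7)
The plan is to exploit the edge-wise decomposition of $\L_E$ and show each single-edge contribution is $l$-local. Starting from \eqref{linstate2}, write
\[
\L_E \;=\; \sum_{(r,s)\in E}\L_{rs}, \qquad \L_{E*} \;=\; \sum_{(r,s)\in E}\L_{rs,*},
\]
where $\L_{rs,*}$ is the Schr\"odinger-picture generator obtained by applying the formula analogous to \eqref{adjoint} to the single-edge Lindbladian. Since the claim of $l$-locality for the generator is understood in the usual Lindbladian sense of being a sum of superoperators each acting nontrivially on at most $l$ registers, it suffices to prove that each $\L_{rs,*}$ is $l$-local.

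Fix an edge $(r,s)\in E$ and let $k,j$ be the unique indices with $r\in I_k$, $s\in I_j$. All four matrix units entering $\L_{rs,*}$, namely $E_{rr}$, $E_{ss}$, $E_{rs}$, and $E_{sr}$, are supported on the subspace $\cH_k+\cH_j$ (on $\cH_k$ alone when $k=j$). By hypothesis this subspace is $l$-local, so there exist a permutation $\pi=\pi_{kj}$ and a projection $Q\in \mathcal{B}(\cH^{\ten l})$ with
\[
P_{\cH_k+\cH_j}\;=\;\Si_\pi^{-1}\bigl(Q\ten 1_{\cH^{\ten(d-l)}}\bigr)\Si_\pi.
\]
Up to $\pi$, the subspace $\cH_k+\cH_j$ thus identifies with $\operatorname{range}(Q)\ten \cH^{\ten(d-l)}$. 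We are free to choose the eigenbasis $\{|r\rangle\}_{r\in I_k}$ of $\si$ within each eigenspace $\cH_k$ compatibly with this tensor factorization (a product basis of $\operatorname{range}(Q)\ten \cH^{\ten(d-l)}$), and similarly for $\cH_j$. With this choice, each of the matrix units $E_{rr},E_{ss},E_{rs},E_{sr}$ lies in the subalgebra $\mathcal{B}(\cH^{\ten l})\ten 1_{\cH^{\ten (d-l)}}$ after permuting the registers by $\pi$, and the superoperator $\L_{rs,*}$ inherits the tensor decomposition
\[
\L_{rs,*} \;\cong\; \widetilde{\L}_{rs,*}\ten \id_{\mathcal{B}(\cH^{\ten(d-l)})}
\]
up to $\Sigma_\pi$, proving that $\L_{rs,*}$ is $l$-local. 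Summing over edges then yields the lemma.

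The step I expect to demand the most care is the choice of eigenbases within the different eigenspaces $\cH_k$ in such a way that, simultaneously for every edge $(r,s)\in E$, the matrix units align with the $l$-local tensor decomposition of the relevant $\cH_k+\cH_j$. This is possible because each $\cH_k$ is itself $l$-local by hypothesis and the relevant sums $\cH_k+\cH_j$ are $l$-local as well, providing compatible product structures; but verifying that one can make a single basis choice that serves all edges needs to be checked edge by edge. Once this alignment is in place, the tensor factorization of each $\L_{rs,*}$ becomes a direct computation in matrix units, and $l$-locality of $\L_{E*}$ follows by linearity.
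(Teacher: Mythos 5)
There is a genuine gap, and it sits exactly at the step you flagged as the delicate one. A rank-one matrix unit $E_{rs}=|r\rangle\langle s|$ on $\cH^{\ten d}$ can never lie in $\Sigma_\pi^{-1}\bigl(\mathcal{B}(\cH^{\ten l})\ten 1_{\cH^{\ten(d-l)}}\bigr)\Sigma_\pi$ for $l<d$: every nonzero element of that subalgebra has rank at least $\dim(\cH)^{d-l}>1$. The most a product eigenbasis can give you is $E_{rs}=|a\rangle\langle b|\ten|u\rangle\langle v|$, whose second tensor factor is rank one rather than the identity; consequently $\rho\mapsto E_{sr}\rho E_{rs}$ acts nontrivially (and not as the identity) on the remaining $d-l$ registers, and the single-edge generator $\L_{rs,*}$ does \emph{not} factor as $\widetilde{\L}_{rs,*}\ten\id$. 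No choice of eigenbasis repairs this, so while the edge-by-edge reduction is the right skeleton, it cannot be closed the way you propose.

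What the paper actually does is \emph{replace} the rank-one matrix units by genuinely $l$-local Lindblad operators: using the hypotheses that $\cH_k$, $\cH_j$ and $\cH_k+\cH_j$ are $l$-local, it takes $A_{rs}\cong e_{rs}\ten 1_{\cH^{\ten(d-l)}}$ up to a permutation of registers, where $e_{rs}$ is a matrix unit between the ranges of the projections $Q_k,Q_j$ inside $\cH^{\ten l}$. This is a modification of the construction, not a consequence of a basis choice, and it carries a cost your argument also misses: the commutant of these operators contains $1\ten\mathcal{B}(\cH^{\ten(d-l)})$, so ergodicity is lost on the complementary registers. The paper restores it by adding a stabilizing Laplacian $\sum_{i=1}^{d-l}\Id\ten\cdots\ten\cL_H\ten\cdots\ten\Id$ built from single-register commutators, which is itself ($1$-)local and therefore harmless for the locality claim. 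With these two amendments every term of $\L_{E*}$ is a second-order differential expression $[A_{rs}\rho,A_{rs}^*]$ in $l$-local operators, and the lemma follows by linearity as you intended.
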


\begin{proof} We recall that
 \[ \L_{E*}(\rho)  \lel \sum_{(r,s)\in E}    e^{-\om_{rs}/2}(A_{rs}^*A_{rs}\rho-A_{rs}\rho A_{rs}^*)+
 e^{+\om_{rs}/2}(\rho A_{rs}A_{rs}^* -A_{rs}^*\rho A_{rs}) \pl . \]
Here we may replace the matrix units by $E_{rs}\ten 1$ for $r\in I_{l}$ and $s\in I_{l}$ up to permutation. Similarly, we can stabilize the space $1\ten \cH^{\otimes d-l}$ using a Laplacian
 \[ \L \lel \sum_{i=1}^{d-l} \Id\ten \cdots \otimes\underbrace{\cL_H}_{\mbox{$i$-th position}}\ten \cdots\otimes \Id \]
for a primitive Laplacian   on $\mathcal{B}(\cH)$ given by commutators. Then $\L_{E0}$ and $\L_{E*}$ will only use local operators $A_{rs}$ and second order differential operators of the form $[A_{rs}\rho,A_{rs}^*]$.
\qd

\begin{rem} 1) The semigroups $(\cP_t=e^{-t\L_0})_{t\ge 0}$ for unital $\L_0$ can be obtained in the form $\cP_t(X)=\ez[ \,U_t^*XU_t\,]$ for random unitaries.
The approximation of $e^{-t\L_{E*}}\approx 1-t\L_{E*}$ with unitary operations  will be investigated in a forthcomming publication.

2) Nevertheless in the local situation operators $A_{rs}$ do not really depend on the state per se, just on its eigen-projections. The Bohr-frequencies however, drive the QMS in the specified state $\si$.
\end{rem}

\hz
\textbf{State preparation using history states:} We will modify a construction from \cite{verstraete2009quantum}. Our starting point is a faithful density $\rho_0\in \cD_+(\cH^{\otimes d})$ and a Lindbladian $\L_{log}(\rho_0)=0$. A suitable choice is $\rho_0=\rho_{00}^{\otimes d}$, $\rho_{00}\in\cD_+(\cH)$, and $\L_{log}=\sum_{j=1}^d \pi_j(\L_{00})$ where $\pi_j$ refers to the $j$-th register. For a given discrete set of times parameters $0\le s\le S$, we fix unitaries $V_s=U_{s}\cdots U_{1}$ so that $U_{s}$ are local unitaries. Our goal is to prepare the state
  \[ \rho_{T} \lel U_T\rho_0U_T^* \pl .\]
For mathematical reasons, it is easier to start with $U_s=1$. Then we may use the nearest neighbour Linbladian
 \[ \L_{tim}(X) \lel \sum_{s=0}^T W_s^*W_sX+XW_sW_s^*+w_s^*w_sX+Xw_sw_s^* -2W_s^*XW_s-2w_s^*Xw_s \pl, \]
where $W_s=|s\rangle\lan s+1|$ and $w_s=|s+1\rangle\lan s|$ are the generator of the quantum version of the cyclic graph. Therefore, we have
 \[ \kappa\pl:=\pl \al_{\operatorname{CLSI}}(\L_{log}\ten id+id\ten \L_{tim})\gl \max\{
\al_{\operatorname{CLSI}}(\L_{log}),\frac{c_0}{T^2}\} \pl .\]
In the tensor product situation we may assume  $\al_{\operatorname{CLSI}}(\sum_j\pi_j(\L_{00}))\gl \al_{\operatorname{CLSI}}(\L_{00})\gl \al_0$. In order to prepare the actual state $\rho_T$ we use the unitary $U\lel \sum_{s} U_s\ten |s\rangle\lan s|$ and combined Lindlbadian
 \[ \L_U \lel  ad_{U}(\L_{log}\ten id +id\ten \L_{tim})ad_{U^*} \]
where $ad_U(X)=UXU^*$.

\begin{cor} Let $X$ be a density in $\mathcal{T}_1(\cH^{\otimes n})$. With probability $1/(T+1)$ the density  $X_T=T(id\ten |T\rangle\lan T|(e^{-s\L_U}(X \ten \frac{id}{T})(id\ten |T\rangle\lan T|)$
 \[ D_{\Lin}(X_T\|\rho_T)\kl (T+1) e^{-\kappa s} D(X\|\rho_0) \pl .\]
\end{cor}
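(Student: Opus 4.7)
The plan is to lift the complete logarithmic Sobolev inequality from the product generator $\L_{log}\otimes\id+\id\otimes\L_{tim}$ to its unitary conjugate $\L_U$, run the resulting entropy decay on the enlarged system, and then project back onto the ancilla eigenstate $|T\rangle\langle T|$ using the data processing inequality for Lindblad's relative entropy, paying the price $T+1$ that comes both from the renormalization of $X_T$ and from the equilibrium probability $1/(T+1)$ of measuring the ancilla in $|T\rangle$.

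First, since $\L_U=\operatorname{ad}_U\circ(\L_{log}\otimes\id+\id\otimes\L_{tim})\circ\operatorname{ad}_{U^*}$ is unitarily equivalent to the product generator, it inherits the same $\operatorname{CLSI}$ constant, so $\alpha_{\operatorname{CLSI}}(\L_U)\ge\kappa$. Its invariant state is
$$\sigma_U:=U\bigl(\rho_0\otimes\tfrac{\id}{T+1}\bigr)U^*=\tfrac{1}{T+1}\sum_s U_s\rho_0 U_s^*\otimes|s\rangle\langle s|,$$
whose $|T\rangle\langle T|$--block is precisely $\tfrac{1}{T+1}\rho_T$. Applying the CLSI to the initial density $Y_0:=X\otimes\tfrac{\id}{T+1}$ yields
$$D\bigl(e^{-s\L_U}(Y_0)\,\big\|\,\sigma_U\bigr)\le e^{-\kappa s}\,D(Y_0\|\sigma_U)\le e^{-\kappa s}\,D(X\|\rho_0),$$
where the last inequality uses unitary invariance of the relative entropy together with the fact that $Y_0$ and $\sigma_U$ share the same uniform ancilla marginal, so that after pulling back by $U^*$ the joint relative entropy reduces to (and is bounded by) $D(X\|\rho_0)$.

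For the projection step, let $\Phi(Y):=(\id\otimes\langle T|)Y(\id\otimes|T\rangle)$, which is completely positive and trace non-increasing, satisfies $\Phi(\sigma_U)=\tfrac{1}{T+1}\rho_T$, and sends $e^{-s\L_U}(Y_0)$ to $\tfrac{1}{T+1}X_T$ by the definition of $X_T$ in the statement. Completing $\Phi$ to a trace-preserving CP map by appending a leftover block as in the proof of Proposition~\ref{propequivrelent}, and combining the data processing inequality for $D_{\operatorname{Lin}}$ with its homogeneity $D_{\operatorname{Lin}}(\lambda A\|\lambda B)=\lambda D_{\operatorname{Lin}}(A\|B)$, one obtains
$$D_{\operatorname{Lin}}(X_T\|\rho_T)=(T+1)\,D_{\operatorname{Lin}}\bigl(\Phi(e^{-s\L_U}(Y_0))\,\big\|\,\Phi(\sigma_U)\bigr)\le (T+1)\,D\bigl(e^{-s\L_U}(Y_0)\,\big\|\,\sigma_U\bigr),$$
and combining with the previous display yields the claimed estimate. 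The main technical obstacles are the careful bookkeeping of the factor $T+1$ across the projection and renormalization (reflecting the equilibrium ancilla probability $\tr\Phi(\sigma_U)=1/(T+1)$, which is exactly the ``probability $1/(T+1)$'' in the statement), and the justification of $D(Y_0\|\sigma_U)\le D(X\|\rho_0)$, which rests on the matching uniform ancilla marginals of $Y_0$ and $\sigma_U$.
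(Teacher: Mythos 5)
Your overall strategy coincides with the paper's: identify $\kappa$ as a CLSI constant for $\L_U$ via unitary conjugation, run the entropic decay on the enlarged system, and then extract the $(T,T)$ time block with a trace-non-increasing completely positive map (the paper's $\phi_T$), invoking the data processing inequality and the homogeneity of $D_{\operatorname{Lin}}$ to account for the factor $T+1$. That skeleton is correct and is in fact spelled out in more detail than in the paper's own one-paragraph sketch.

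There is, however, a genuine gap in the step $D(Y_0\|\sigma_U)\le D(X\|\rho_0)$ for $Y_0=X\otimes\frac{\Id}{T+1}$. Pulling back by $U^*$ and using that both states are block diagonal in the time register with uniform weights, one obtains the exact identity
\[
D\Bigl(X\otimes\tfrac{\Id}{T+1}\,\Big\|\,U\bigl(\rho_0\otimes\tfrac{\Id}{T+1}\bigr)U^*\Bigr)=\frac{1}{T+1}\sum_{s=0}^{T}D\bigl(X\,\|\,U_s\rho_0U_s^*\bigr)\,,
\]
and this average is not bounded by $D(X\|\rho_0)$ in general: already for two time steps with $U_0=\Id$ and $U_1$ chosen so that $D(X\|U_1\rho_0U_1^*)>D(X\|\rho_0)$, the average strictly exceeds $D(X\|\rho_0)$. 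Matching uniform ancilla marginals only annihilates the classical term in the block decomposition of the relative entropy; it does not identify the conditional blocks, so your stated justification does not go through. The argument is rescued by taking the input to the semigroup to be $U\bigl(X\otimes\frac{\Id}{T+1}\bigr)U^*$ rather than $X\otimes\frac{\Id}{T+1}$ --- consistent with the paper's treatment of the subsequent corollary, where the input is $\operatorname{ad}_U(X\otimes|0\rangle\langle 0|)$ --- since then unitary invariance gives the initial relative entropy exactly equal to $D(X\otimes\frac{\Id}{T+1}\|\rho_0\otimes\frac{\Id}{T+1})=D(X\|\rho_0)$, after which the projection, data processing, and homogeneity steps of your argument yield the claimed bound unchanged.
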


Here $\phi_T:\mathcal{T}_1(\ell_2^{T+1})\to \cz$,
$\phi_T((x_{rs}))=x_{TT}$ is a trace reducing map and $\phi_T(e^{-s\L_U}(\rho_0\ten \frac{id}{T}))=\rho_T$, and hence we may apply data processing inequality, see Lemma 3.3 i).
 Note that semigroup $e^{-s\L_U}$ preserves the subalgebra of time diagonal operators. This means that any time $s$, the time can be measured precisely, as a classical parameter. Thanks to the product form, we also know that $\L_{tim}$ preserves the the measure $\mu(s)=\frac{1}{S}$, and hence $D_{\Lin}(X_T|\rho_T)=D(X_T|\rho_T)$. However, the same statement remains true, using $D_{\Lin}$, if we decide to use $X_0\lel X\ten |0\rangle\lan 0|$ as an input for the combined time-logical Lindbladian. Thanks Pinsker's inequality we know then
  \[ \|\|X_T\|_1-\frac{1}{T+1}\|_1\kl \|e^{-s\L}(X_0)-\lim_{s\to \infty}e^{-s\L}(X_0)\|_1
  \kl 2 \sqrt{e^{-ks}D(X_0|\rho_0)} \]
therefore we control both the original relative entropy and the trace.
\hz

The Lindblad operators in $\L_U$ are all local, at least as local as the $W_s$, $w_s$'s and  $ad_{U_s}(V_j)$, where $V_j$ runs through  the Lindblad generators of $\L_{00}$.

\hz

\begin{rem} {\rm Following the model in \cite{verstraete2009quantum} we could also consider the Lindbladian $\L=\L_{log}+L_{tim}$ defined as follows: Let $V_i$ be the logical Kraus operators for $\L_0$ with frequences $\om_j$ and define $A_i \lel V_i\ten |0\rangle \langle 0|$ and
 \[ W_t \lel U_{t+1}\ten |t+1\rangle \langle t+ U_{t+1}^* \ten |r\rangle \langle t+1| \]
Then the new Linbladian would be given by
 \begin{align*}
   \L(X) &=  -\sum_{j} (e^{-\om_j/2}[A_i^*[X,A_i]+e^{\om_j/2}[A_j,X]A_j^*)   \\
    &\pll + \sum_{t} W_t^*W_tX+XW_tW_t^*+W_tW_t^*X+XW_t^*W_t^*-2W_t^*XW_t-2W_t^*XW_t \pl .
    \end{align*}
The advantage of this form is that specific information on the gate only interferes in the time Laplacian.     Unfortunately, at the time of this writing, we have no CLSI-bound in the self-adjoint case $\om_j=0$, and hence no concrete bound on relative entropy.}
\end{rem}

We may easily improve on the factor $T$ by working with more classical resources by using $\Om=\{0,...,T\}^m$,
 \[ U_{s_1,...,s_m}\lel U_{\max_j s_j} \quad \mbox{and}\quad U\lel \sum_{\om} U_{\om}\ten |\om \rangle \langle \om|  \]
and
 \[ \L_U \lel ad_{U}(\L_0\ten id+id\ten (\sum_{j=1}^m \pi_j(\L_T))ad_{U^*} \pl .\]
Then standard probabilistic method is to use the classical stopping time $s_m(\om)=\inf\{k| s_k=T\}$. Note that error probability  for failure
 \[ \eps_m(T)\lel {\rm Prob}(s=\infty) \lel (1-1/T)^m \]
goes to $0$ exponentially fast. We use the notation $e_m$ for the projection onto successful events. Then we may use the trace reducing map
 \[ \Phi_m(Y) \lel
 tr\ten id((e_m\ten 1)Y(e_m\ten 1)) \pl .\]

\begin{cor} Let $m\in \nz$ and $X_0\in \mathcal{T}_1(\cH^{\otimes d})$. Let $X_{s,m}=(1-\eps_m)^{-1}\Phi_m((e^{-s\L_U})(ad_U(X\ten |0\rangle\langle 0|)))$. Then
 \[ D_{\Lin}(X_{s,m}\|\rho_T)\kl (1-\eps_m)^{-1} e^{-\kappa s} D(X\|\rho_0) \pl.
 \]
\end{cor}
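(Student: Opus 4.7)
The plan is to mirror the argument of the preceding corollary, with the trace-non-increasing map $\Phi_m$ taking the place of the single-register projection $\operatorname{id}\otimes |T\rangle\langle T|$. Three ingredients need to be assembled: (a) a CLSI constant of at least $\kappa$ for the combined generator $\L_U$, (b) the exponential entropy decay that CLSI provides, and (c) data processing for $D_{\operatorname{Lin}}$ applied to the post-selection map $\Phi_m$.

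For (a), the plan is to use unitary invariance of both the relative entropy and the entropy production to rewrite $\al_{\operatorname{CLSI}}(\L_U) = \al_{\operatorname{CLSI}}(\L_{log}\otimes\operatorname{id} + \operatorname{id}\otimes \sum_{j=1}^m \pi_j(\L_T))$. Since the $\pi_j(\L_T)$ act on disjoint time registers, iterating the tensorization inequality for CLSI from the introduction gives a lower bound of $\min\{\al_{\operatorname{CLSI}}(\L_{log}), \al_{\operatorname{CLSI}}(\L_T)\} \ge \kappa$, matching the constant fixed earlier.

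For (b) and (c), I would apply CLSI to the trajectory $Y_s := e^{-s\L_U}(Y_0)$ with $Y_0 := \operatorname{ad}_U(X\otimes |0\rangle\langle 0|)$ and steady state $\rho_\infty := \operatorname{ad}_U(\rho_0 \otimes \operatorname{id}/|\Omega|)$ to obtain $D_{\operatorname{Lin}}(Y_s\|\rho_\infty) \le e^{-\kappa s}\,D(X\|\rho_0)$, where the right-hand side uses unitary invariance of $D_{\operatorname{Lin}}$ and treats the time-register input as the equilibrium marker, exactly as $\operatorname{id}/T$ was used in the one-register corollary. A direct calculation pivoting on the defining identity $U_\omega = U_{\max_j s_j}$ shows $\Phi_m(\rho_\infty) = (1-\eps_m)\,\rho_T \otimes \mu_{\mathrm{succ}}$, since $U_\omega\rho_0U_\omega^* = \rho_T$ on every success string $\omega$. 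The data processing inequality for $D_{\operatorname{Lin}}$ applied to the trace-non-increasing map $\Phi_m$ (by completing it to a CPTP map with a ``failure'' output, exactly as in the proof of Proposition \ref{propequivrelent}), together with the homogeneity of $D_{\operatorname{Lin}}$ to pass the normalization factor $(1-\eps_m)^{-1}$ through, then yields the claimed bound.

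The main obstacle is the bookkeeping around the initialization of the $m$ time registers: a literal reading of $|0\rangle\langle 0|$ as a rank-one projector would contribute an additive $m\log(T+1)$ to $D_{\operatorname{Lin}}(Y_0\|\rho_\infty)$, so that term must either be absorbed into the estimate or $|0\rangle\langle 0|$ must be read as the appropriate equilibrium marker (in analogy with the $\operatorname{id}/T$ input in the preceding corollary). A secondary, easier verification is that the $\cH^{\otimes d}$-marginal of $\Phi_m(\rho_\infty)$ is genuinely $\rho_T$ rather than a mixture of intermediate $U_s\rho_0U_s^*$; this is exactly what the maximum-based choice $U_\omega = U_{\max_j s_j}$ enforces.
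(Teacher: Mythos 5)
Your route is the one the paper intends: the paper gives no separate proof of this corollary, and its one-register predecessor is itself only sketched via exactly the three ingredients you assemble (tensorization plus unitary invariance to get $\al_{\operatorname{CLSI}}(\L_U)\ge\kappa$, the CLSI decay estimate, and data processing for $D_{\operatorname{Lin}}$ under the trace-reducing post-selection map, with homogeneity handling the $(1-\eps_m)^{-1}$ normalization). Your verification that $\Phi_m(\rho_\infty)=(1-\eps_m)\,\rho_T$ via the defining choice $U_\omega=U_{\max_j s_j}$ is the right one, and your use of $\min$ in the tensorization bound silently corrects the paper's $\max$.

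The obstacle you flag is, however, a genuine gap that neither your proposal nor the paper closes. With input $\operatorname{ad}_U(X\otimes|0\rangle\langle 0|)$ and invariant state $\operatorname{ad}_U\bigl(\rho_0\otimes(\tfrac{\Id}{T+1})^{\otimes m}\bigr)$, the initial relative entropy is $D(X\|\rho_0)+m\ln(T+1)$ rather than $D(X\|\rho_0)$, so the argument as set up delivers $(1-\eps_m)^{-1}e^{-\kappa s}\bigl(D(X\|\rho_0)+m\ln(T+1)\bigr)$. Since the subsequent discussion requires $m\gtrsim T\max\{1,2|\ln(\kappa s)|\}$ for the post-selection to succeed with high probability, this additive term is not negligible and cannot simply be ``absorbed.'' The paper's parallel assertion after the first corollary (that the statement remains true for the input $X\otimes|0\rangle\langle 0|$) is likewise unproved and suffers from the same $\ln(T+1)$ discrepancy. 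Either the time registers must be initialized in the uniform state, as in the one-register corollary, so that the initial relative entropy really is $D(X\|\rho_0)$, or the bound must carry the extra $m\ln(T+1)$. Your proposal correctly isolates this as the missing step but does not resolve it, so as written the final inequality is not established.
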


This means by renormalizing  $\hat{X}_{s,m}=\frac{X_{s,m}}{tr(X_{s,m})}$, we control the relative entropy
 \[ D(\hat{X}_{s,m}\|\rho_T)\kl e^{-\frac{\kappa}{2}s} D(X_0\|\rho_0) \]
where $m\gl T \max\{1,2|\ln(\kappa s)|\}$. This provides example of dissipative state preparation for the output the gate $U_T$ applied to $\rho_0$. The stopping time procedure, also allows for a recursive algorithm to produce $U^{k+1}_{\om}\lel \begin{cases} U^k_{\om}& s_{k+1}\le \max_{j\le k} s_j \pl \\
V_{s_{k+1}}\cdots V_{s_k(\om)}U^k_{\om} & s_{k+1}>\max_{j\le k} s_j\pl  \end{cases}$ so that $U \lel U^m_{\om}$.

\section{Decay Toward Thermal Equilibrium} \label{sec:therm}
A system in contact with a heat bath at fixed temperature will decay toward a Gibbs state given by
\begin{equation} \label{eq:thermalstate}
\sigma_\beta = \frac{e^{-\beta H}}{Z_\beta}=\frac{1}{Z_\beta} \sum_{k=1}^m e^{-\beta E_k} \ketbra{k} = \frac{1}{Z_\beta} \sum_{{ E\in \spec{(H)}}} e^{-\beta {E}} \sum_{k : E_k = E} \ketbra{k} \pl,
\end{equation}
where $H$ is the corresponding Hamiltonian, $\beta$ is the unitless inverse temperature, and energies are indexed in increasing order. Here the last expression explicitly separates the sum over possibly degenerate energies. The partition function $Z_\beta = \sum_k \exp(-\beta E_k)$ normalizes the probabilities.

As an important example of thermal state decay, usual decoherence processes in quantum information experiments do not necessarily decay to white noise. The commonly reported ``$T_1$" or longitudinal coherence time theoretically assumes decay toward a ground state \cite{chuang_lecture_2003}. While our usual formalism does not apply to pure states, we may consider the simple quantum Markov semigroup,
\begin{equation*}
P^{\text{relax}}_{t*}(\rho) = e^{- t / T_1} \rho + (1 - e^{- t / T_1}) \ketbra{0} \pl,
\end{equation*}
which models a process containing only irreversible transitions to a fixed pure state. By convexity of relative entropy,
\begin{equation*}
D((P^{\text{relax}}_{t*} \otimes \id_B)(\rho) \p \| \p  \ketbra{0} \otimes \rho_B) \leq e^{-t / T_1} D(\rho \p \| \p \ketbra{0} \otimes \rho_B)
\end{equation*}
for a bipartite state $\rho_{AB}$. As expected, this process has $1/T_1$-CLSI. Note however that the relative entropy of this state is usually infinite, so this comparison is of very limited practical value. In practice, we often expect matter qubits to decay toward a low-temperature thermal state. On the preparation side, we may wish to heat or cool a system to a desired temperature. Directly following Theorem \ref{theocomp},
\begin{cor} \label{cor:therm1}
Let $\omega_\beta$ be a thermal state as in Equation \eqref{eq:thermalstate} and the fixed point of QMS $(\cP_t)_{t\ge 0}$ generated by Lindbladian $\L$ with $m$ energy levels $E_1< E_2<...<E_m$. Let $\L_0$ be the corresponding self-adjoint Lindbladian. Then
\begin{equation}
\alpha_{\operatorname{CLSI}}(\L_0) \leq e^{\beta E_m} \max_j e^{\beta (E_{j+1} - E_j) / 2} \,\alpha_{\operatorname{CLSI}}(\L)\pl.
\end{equation}
\end{cor}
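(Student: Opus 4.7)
The plan is to realize this corollary as an essentially immediate specialization of Theorem \ref{theocomp} to the thermal fixed point $\sigma_\beta$ of Equation \eqref{eq:thermalstate}. I would first identify the spectrum of $\sigma_\beta$ as $\{\sigma_k = e^{-\beta E_k}/Z_\beta\}_{k=1}^m$ and compute the ratio factor appearing in that theorem as
\[
\max_{k,l} \frac{\sigma_k}{\sigma_l} \;=\; \frac{e^{-\beta E_1}}{e^{-\beta E_m}} \;=\; e^{\beta(E_m-E_1)} \;\le\; e^{\beta E_m},
\]
where the last inequality uses the standard convention that the ground-state energy is nonnegative (i.e.\ $E_1 \ge 0$). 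This produces the prefactor $e^{\beta E_m}$ in the statement.

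Next I would work out the Bohr frequencies of $\L$ with respect to $\sigma_\beta$. Since the corollary's bound involves only the nearest-neighbour gaps $E_{j+1}-E_j$, I would restrict attention (as implicit in the statement and consistent with the graph-Laplacian construction in Section \ref{sec:stateprep} surrounding Equation \eqref{linstate}) to jump operators of the form $|j\rangle\langle j{+}1|$ and $|j{+}1\rangle\langle j|$ connecting adjacent energy eigenspaces. For such operators the modular identity
\[
\Delta_{\sigma_\beta}\bigl(|j\rangle\langle j{+}1|\bigr) \;=\; \frac{\sigma_{j}}{\sigma_{j+1}}\,|j\rangle\langle j{+}1| \;=\; e^{\beta(E_{j+1}-E_j)}\,|j\rangle\langle j{+}1|
\]
forces the associated Bohr frequency to be $\omega_j = \pm\,\beta(E_{j+1}-E_j)$, so that $\max_j e^{|\omega_j|/2} = \max_j e^{\beta(E_{j+1}-E_j)/2}$, which is precisely the second factor in the desired bound.

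With both factors identified, the conclusion follows by direct substitution into Theorem \ref{theocomp}, which asserts the completely analogous inequality with $\max_{k,l}\sigma_k/\sigma_l$ and $\max_j e^{|\omega_j|/2}$. There is no genuine obstacle here, since all the analytic content — the comparison of relative entropies via Proposition \ref{propequivrelent} and of entropy productions via Proposition \ref{EPequiv} — has already been carried out in the previous section; the corollary is really a dictionary translation between the abstract GNS-symmetric Holley--Stroock estimate and the physically natural thermal-state parameters $\beta$ and $\{E_k\}$. The only mild subtlety worth flagging is that if one allows Lindblad operators implementing transitions between non-adjacent energy levels, the Bohr-frequency factor would have to be enlarged to $\max_{k<\ell} e^{\beta(E_\ell-E_k)/2}$; the bound as stated implicitly presumes the nearest-neighbour (or, more generally, energy-gap--minimizing) choice of jump operators.
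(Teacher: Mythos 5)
Your proposal is correct and matches the paper exactly: the paper offers no separate proof, stating only that the corollary follows ``directly following Theorem \ref{theocomp}'', and your substitution of $\sigma_k = e^{-\beta E_k}/Z_\beta$ into that theorem is precisely the intended dictionary translation. Your two flagged caveats --- that $e^{\beta(E_m-E_1)}\le e^{\beta E_m}$ requires $E_1\ge 0$, and that the nearest-neighbour form $\max_j e^{\beta(E_{j+1}-E_j)/2}$ of the Bohr-frequency factor presumes jump operators connecting only adjacent energy levels --- are genuine implicit assumptions that the paper leaves unstated, so spelling them out is a point in your favour rather than a deviation.
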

\noindent The completely mixed state is equivalent to the infinite temperature Gibbs state $\sigma_0$, so we might think of the CLSI constant comparison as perturbing the infinite-temperature limit. With a finite maximum energy and for temperatures substantially above that scale, this CLSI constant is close to that for decay toward complete mixture.

In general, our CLSI constant estimate depends exponentially on the largest energy scale and becomes trivial with an infinite spectrum. This appears to be not a flaw of the estimate, but a property of relative entropy. When high-energy elements of the Gibbs state are close to 0, even small fluctuations into the high-energy regime can result in enormous relative entropy. In thermodynamics, a usual solution to relative entropy blowup on rare states is to work with smoothed relative entropies \cite{faist_fundamental_2018}, which discount contributions from highly unlikely configurations. While it is beyond the scope of this paper to fully formulate log Sobolev inequalities for smoothed relative entropy, we may nonetheless consider an analogous approach for states that rarely occur. 

\begin{rem}
A simple strategy is to replace the Gibbs state by
\begin{equation}
\tilde{\sigma} = \frac{1}{\tilde{Z}_\beta} \Big ( \sum_{k=1}^{l-1} e^{-\beta E_k} \ketbra{k} + \sum_{k=l}^m e^{-\beta E_l} \ketbra{k} \Big ) \pl,
\end{equation}
for which the associated Lindbladian $\tilde{\cL}$ has
\begin{equation*}
\alpha_{\operatorname{CLSI}}(\L_0) \leq e^{\beta E_l} \max_{j \leq l} e^{\beta (E_{j+1} - E_j) / 2} \,\alpha_{\operatorname{CLSI}}(\tilde{\L})\pl.
\end{equation*}
Physically, this is equivalent to artificially compressing high energies to a single, degenerate level. We do not claim that this accurately represents the high-energy parts of the thermal state or decay of states with substantial support above $E_l$. Rather, $\tilde{\L}$ is an example of a Lindbladian with the same transitions as $\L$ and similar low-energy behavior at short timescales. It hence naturally has the same locality properties.

The distance $\| \sigma - \tilde{\sigma}\|_1$ increases with the value of $E_{j+1}$ and higher levels. We can overestimate it by assuming $E_{l+1} = \infty$, as though $\sigma$ had no support in the high-energy space. Let $\sigma_0 = \sum_{k=1}^l e^{-\beta E_k}$, and $\sigma_1 = \tilde{\sigma} - \sigma_0$. We can easily check that $0 \leq \tilde{Z}_\beta - Z_\beta \leq (m-l) e^{- \beta E_k}$, and similarly, $\| \tilde{\sigma} - \sigma_0 \| \leq (m-l) e^{- \beta E_k}$. Hence
\begin{equation*}
\|\tilde{\sigma}_\beta - \sigma_\beta\|_1 \leq \frac{m-l}{\tilde{Z}_\beta} e^{- \beta E_k} \Big | \frac{1}{\tilde{Z}_\beta} - 1 \Big | + O \big ( (m-l)^2 e^{- 2 \beta E_k} \big ) \pl,
\end{equation*}
which decreases exponentially with $E_l$.
\end{rem}
This simple solution lacks two desirable properties. First, we have no quantitative notion of this $\tilde{\L}$ being close to the original physical process, other than that it involves the same jump operators. Second, the rate of convergence still depends linearly on the number of energies deemed high, which prevents this trick from approximating problems with infinite maximum energy, such as in the quantum harmonic oscillator. We can partially mitigate these problems by considering an alternate model. By the semigroup property and for any $m \in \NN$,
\begin{equation*}
P_{t*}(\rho) = P_{t/m*}^{m}(\rho) \pl.
\end{equation*}
For large $m$, we may approximate
\begin{equation*}
P_{t/m*}(\rho) = \rho - \frac{t}{m}\L(\rho) + O \Big (\frac{t^2}{m^2}\Big ) \pl.
\end{equation*}
Let us define the measurement $M_E: \mathbb{B}(\cH) \rightarrow \mathbb{B}(\cH) \otimes l_\infty^2$ by
\[ M_E(\rho) = P_{E \leq E_0} \rho P_{E \leq E_0} \otimes \ketbra{0} + P_{E > E_0} \rho P_{E > E_0} \otimes \ketbra{1} \pl, \]
where $P_{E\le E_0}$ denotes the projector onto the eigensubspace of $H$ of eigenvalues less than $E_0$, $P_{E>E_0}=\Id-P_{E\le E_0}$, and $l_\infty^2$ holds a classical, binary flag. We then construct
\[ \tilde{P}_{t^*}^m : \cH \rightarrow \cH \otimes (l_\infty^2)^{\otimes (m + 1)}, \pl \tilde{P}_{t*}^m(\rho) = (M_E \circ P_{t/m*})^{m} \circ M_E(\rho) \pl. \]
Let $p_{low}(\L, \rho, t, m)$ be the probability that none of the $m+1$ classical flags in $\tilde{P}_{t*}^m$ are 1. Let $p_{low}(\L, \rho, t) \equiv \lim_{m \rightarrow \infty} p_{low}(\L, \rho, t, m)$, and $\tilde{P}_{t^*}^\infty \equiv \lim_{m \rightarrow \infty} \tilde{P}_{t^*}^m$. In the limit as $m \rightarrow \infty$, we may assume that only one jump if any occurs between any measurement pair. Hence $p_{low}(\L, \rho, t)$ is the probability that the state never transitions through the high-energy subspace.

To obtain an approximation result, we need two assumptions: first, that $p_{low}$ is non-zero, and second, that any passage through the high-energy space would be recorded in some environment even if not available to the experimenter. The latter implies that we could add the measurements without disturbing the system, so we may consider $P_{t*}(\rho)$ to be the channel obtained by applying $\tilde{P}_{t^*}^\infty$ and tracing out the high-energy flags.
\begin{prop} \label{prop:lowenergy}
Let a density $\rho$ and primitive  Lindbladian $\L$ with detailed balance and thermal fixpoint state $\sigma$ be given. Let $E_0$ be a fixed energy cutoff defining the low-energy subspace. Let $\tilde{\sigma} = \frac{1}{\tilde{Z_\sigma}} P_{E \leq E_0} P_{t/m*}(\sigma) P_{E \leq E_0}$, where $\tilde{Z}_\sigma$ is a factor assuring the state is normalized. Let $\tilde{P}^\infty_{t*}$ be defined with respect to the energy cutoff $E_0$. Assuming any passage through the high-energy space in which $E > E_0$ would be recorded in the environment and that any jump operators through the low-energy subspace are within the subalgebra given by $P_{E \leq E_0} \mathbb{B}(H) P_{E \leq E_0}$,
\[D(\tilde{P}^\infty_{t*}(\rho) \| \tilde{\sigma}) \leq e^{-\alpha t} D(\tilde{\rho} \| \tilde{\sigma}) \]
with probability at least $p_{low}(\L, \rho, t)$, where
\begin{equation*}
\alpha_{\operatorname{CLSI}}(\L_0) \leq e^{\beta E_0} \max_{E_j \leq E_0} e^{\beta (E_{j+1} - E_j) / 2} \,\alpha\pl,
\end{equation*}
and $\tilde{\rho} = P_{E \leq E_0} \rho P_{E \leq E_0} / \tilde{Z}_\rho$.
\end{prop}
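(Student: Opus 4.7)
The strategy is to reduce the claim to an application of Theorem~\ref{theocomp} for a Lindbladian $\tilde{\cL}$ obtained by restricting $\cL$ to the low-energy subalgebra, and then to propagate that CLSI estimate into the promised relative-entropy decay via the conditioning interpretation of $\tilde{P}^\infty_{t*}$.

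First, I would make precise the restricted dynamics. Under the hypothesis that every jump operator of $\cL$ that acts within the low-energy subspace belongs to $P_{E\le E_0}\mathbb{B}(\cH)P_{E\le E_0}$, the collection of Lindblad operators $\{A_j\}$ of \eqref{eqlindbladsigma} splits into those that stay inside the low-energy block and those that couple it to the high-energy block. Discarding the latter defines a Lindbladian $\tilde{\cL}$ on the finite-dimensional algebra $\tilde{\cB}:=P_{E\le E_0}\mathbb{B}(\cH)P_{E\le E_0}$ whose Bohr frequencies are a subset of the original ones with $|\omega_j|\le \beta\max_{E_j\le E_0}(E_{j+1}-E_j)$. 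The detailed balance condition for $\cL$ with respect to $\sigma$ restricts to a detailed balance condition for $\tilde{\cL}$ with respect to the (normalised) state $\tilde{\sigma}$, since conjugation by $P_{E\le E_0}$ commutes with the modular group of $\sigma$ (the projector is spectral for $H$).

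Next, I would identify the conditioned channel with the restricted semigroup. Between any two consecutive energy measurements in the definition of $\tilde{P}^\infty_{t*}$, applying $P_{E\le E_0}\,\cdot\,P_{E\le E_0}$ after an infinitesimal $\cP_{t/m*}$ kills precisely the contributions from jump operators that leave the low-energy subspace; this is because, on the event that all classical flags are $0$, the evolution is exactly the sub-Markovian semigroup obtained by truncating $\cL_*$ to $\tilde{\cB}$. Taking $m\to\infty$ and renormalising by the survival probability $p_{\mathrm{low}}(\cL,\rho,t)$, this yields
\[
\tilde{P}^\infty_{t*}(\rho)\big|_{\text{all flags }=0}\;=\;p_{\mathrm{low}}(\cL,\rho,t)\,e^{-t\tilde{\cL}_*}(\tilde{\rho}),
\]
up to overall normalisation by $\tilde{Z}_\rho$. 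This is the step where I expect the main technical obstacle: turning the heuristic ``condition on all flags being $0$'' into a rigorous identification of a $C_0$-semigroup generated by $\tilde{\cL}$ on $\tilde{\cB}$, together with verifying that $\tilde{\sigma}$ is precisely its fixed state, not merely an approximate one. The hypothesis that any escape to the high-energy sector is recorded in the environment is what allows one to replace the original CPTP evolution by its compression to $\tilde{\cB}$ without further error.

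With $\tilde{\cL}$ in hand, Theorem~\ref{theocomp} applied to the primitive, GNS-symmetric QMS on $\tilde{\cB}$ with invariant state $\tilde{\sigma}$ yields
\[
\alpha_{\operatorname{CLSI}}(\cL_0)\;\le\;\max_{k,l:E_k,E_l\le E_0}\frac{\tilde{\sigma}_k}{\tilde{\sigma}_l}\,\max_{j:\,E_j\le E_0}e^{|\omega_j|/2}\,\alpha_{\operatorname{CLSI}}(\tilde{\cL}),
\]
where $\cL_0$ is the corresponding heat semigroup on $\tilde{\cB}$. Using that the eigenvalues of $\tilde{\sigma}$ are proportional to $e^{-\beta E_k}$ for $E_k\le E_0$, one has $\tilde{\sigma}_k/\tilde{\sigma}_l\le e^{\beta E_0}$ (taking the ground energy to be non-negative), while the Bohr-frequency factor is bounded by $\max_{E_j\le E_0}e^{\beta(E_{j+1}-E_j)/2}$. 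Denoting the resulting lower bound on $\alpha_{\operatorname{CLSI}}(\tilde{\cL})$ by $\alpha$, the definition of CLSI applied to $\tilde{\rho}\in\cD(\tilde{\cB})$ gives
\[
D\!\left(e^{-t\tilde{\cL}_*}(\tilde{\rho})\,\big\|\,\tilde{\sigma}\right)\;\le\;e^{-\alpha t}\,D(\tilde{\rho}\,\|\,\tilde{\sigma}).
\]
Conditioning on the all-zero flag event, which has probability $p_{\mathrm{low}}(\cL,\rho,t)$, and translating back via the identification of the previous paragraph, yields the claimed inequality. The final bookkeeping step is to check that the normalisation of $\tilde{\sigma}$ used in the proposition (via $\tilde{Z}_\sigma$) coincides with the invariant state of $\tilde{\cL}_*$; this follows from $\tilde{\cL}_*$ being obtained by truncation of $\cL_*$ and the fact that $P_{E\le E_0}\sigma P_{E\le E_0}$ lies in the fixed-point set of the truncation.
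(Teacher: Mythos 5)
Your proposal is correct and follows essentially the same route as the paper: condition on the all-flags-zero event to obtain (via the first-order expansion of $P_{t/m*}$ and resummation as $m\to\infty$) an effective Lindbladian containing only the low-energy jump operators, then apply the Holley--Stroock comparison of Theorem~\ref{theocomp} to that restricted, GNS-symmetric generator with invariant state $\tilde{\sigma}$. The paper's own proof is terser---it stops after deriving the effective Lindbladian and leaves the CLSI application and the fixed-point/detailed-balance checks implicit---whereas you spell these out, but the underlying argument is the same.
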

\begin{proof}
The assumption that any jump through the high-energy subspace would be recorded in the environment means that 
%allows us to consider
\begin{equation*}
P_{t/m*}(\eta) = \tr_{l_2^2}((M_E \circ P_{t/m*})(\eta)) = P_{E \leq E_0} P_{t/m*}(\eta) P_{E \leq E_0} + P_{E > E_0} P_{t/m*}(\eta) P_{E > E_0}
\end{equation*}
for any $m$ and density $\eta$ that starts entirely in the low-energy subspace. If we assume that the environment recorded no such jump, this is equivalent to projecting onto the low-energy subspace every time. Here
\[P_{E \leq E_0} P_{t/m*}(\eta) P_{E \leq E_0} = \eta - \frac{t}{m} \sum_{j : E_j \leq E_0} \Big(e^{-\om_j/2}[A_j\eta,A_j^*]+e^{\om_j/2}[A_j^*,\eta A_j] \Big ) + O \Big ( \frac{t^2}{m^2} \Big ) \pl. \]
Resumming this in the limit as $m \rightarrow \infty$, we arrive at an effective Lindbladian that includes only the low-energy jumps.
\end{proof}
\begin{rem}
The assumption of jump operators being contained in the low-energy subalgebra in Proposition \ref{prop:lowenergy} is satisfied by jump operators of the form $A_j = \ket{r_j}\bra{s_j}$, such that $E_j \leq E_0$. Equivalently, $r_j, s_j \leq d_0$, where $d_0^2$ is the dimension of the low-energy subspace, and states are ordered by increasing energy.
\end{rem}

Proposition \ref{prop:lowenergy} confirms the intuition that low-temperature processes may see little to no effect from high-energy transitions. E.g. a cold atom experiment is unlikely to encounter consequences of nuclear physics. Were one to have a way of measuring whether or not the state transitioned into the high-energy subspace, one could use this with post-selection to probabalistically prepare exact copies of the low-energy projection of the fixpoint state.

There are likely physical situations for which the assumption of no transitions to the high-energy space could be replaced by one that they are sufficiently rare as not to substantially disrupt the process. As the purpose of this manuscript is not to study specific physical systems in detail, we do not study these cases here. For similar reasons, we will not attempt to compute $p_{low}(\L, \rho, t, m)$ for specific Lindbladians. In many physical study systems, it should be clear how to compute or at least estimate this probability.

\bibliographystyle{alpha}
\bibliography{library}

\newcommand{\etalchar}[1]{$^{#1}$}
\begin{thebibliography}{MHSFW16b}

\bibitem[Bar17]{BarEID17}
Ivan Bardet.
\newblock Estimating the decoherence time using non-commutative functional
  inequalities.
\newblock {\em arXiv preprint arXiv:1710.01039}, 2017.

\bibitem[BCL{\etalchar{+}}19]{bardet2019modified}
Ivan Bardet, Angela Capel, Angelo Lucia, David P{\'e}rez-Garc{\'\i}a, and
  Cambyse Rouz{\'e}.
\newblock On the modified logarithmic sobolev inequality for the heat-bath
  dynamics for 1d systems.
\newblock {\em arXiv preprint arXiv:1908.09004}, 2019.

\bibitem[BDR18]{beigi2018quantum}
Salman Beigi, Nilanjana Datta, and Cambyse Rouz{\'e}.
\newblock Quantum reverse hypercontractivity: its tensorization and application
  to strong converses.
\newblock {\em arXiv preprint arXiv:1804.10100}, 2018.

\bibitem[BGL13]{bakry2013analysis}
Dominique Bakry, Ivan Gentil, and Michel Ledoux.
\newblock {\em Analysis and geometry of Markov diffusion operators}, volume
  348.
\newblock Springer Science \& Business Media, 2013.

\bibitem[BJL{\etalchar{+}}19]{bardet2019group}
Ivan Bardet, Marius Junge, Nicholas LaRacuente, Cambyse Rouz{\'e}, and
  Daniel~Stilck Fran{\c{c}}a.
\newblock Group transference techniques for the estimation of the decoherence
  times and capacities of quantum {M}arkov semigroups.
\newblock {\em arXiv preprint arXiv:1904.11043}, 2019.

\bibitem[BK16]{[BK16a]}
Salman Beigi and Christopher King.
\newblock Hypercontractivity and the logarithmic {S}obolev inequality for the
  completely bounded norm.
\newblock {\em Journal of Mathematical Physics}, 57(1):015206, 2016.

\bibitem[BR18]{bardet2018hypercontractivity}
Ivan Bardet and Cambyse Rouz{\'e}.
\newblock Hypercontractivity and logarithmic {S}obolev inequality for
  non-primitive quantum {M}arkov semigroups and estimation of decoherence
  rates.
\newblock {\em arXiv preprint arXiv:1803.05379}, 2018.

\bibitem[BS67]{Birman1967}
M.~Sh. Birman and M.~Z. Solomyak.
\newblock {\em Stieltjes Double-Integral Operators}, pages 25--54.
\newblock Springer US, Boston, MA, 1967.

\bibitem[BS93]{Birman1993}
M.~Sh. Birman and M.~Z. Solomyak.
\newblock Operator integration, perturbations, and commutators.
\newblock {\em Journal of Soviet Mathematics}, 63(2):129--148, Jan 1993.

\bibitem[BT03]{Bobkov2003}
Sergey Bobkov and Prasad Tetali.
\newblock {Modified log-{S}obolev inequalities, mixing and hypercontractivity}.
\newblock {\em Proceedings of the thirty-fifth ACM symposium on Theory of
  computing - STOC '03}, page 287, 2003.

\bibitem[CLPG18]{capel2018quantum}
{\'A}ngela Capel, Angelo Lucia, and David P{\'e}rez-Garc{\'\i}a.
\newblock Quantum conditional relative entropy and quasi-factorization of the
  relative entropy.
\newblock {\em Journal of Physics A: Mathematical and Theoretical},
  51(48):484001, 2018.

\bibitem[CM17]{Carlen20171810}
Eric~A. Carlen and Jan Maas.
\newblock Gradient flow and entropy inequalities for quantum {M}arkov
  semigroups with detailed balance.
\newblock {\em Journal of Functional Analysis}, 273(5):1810 -- 1869, 2017.

\bibitem[CM18]{carlen2018non}
Eric~A Carlen and Jan Maas.
\newblock Non-commutative calculus, optimal transport and functional
  inequalities in dissipative quantum systems.
\newblock {\em arXiv preprint arXiv:1811.04572}, 2018.

\bibitem[CSU13]{carbone2013decoherence}
Raffaella Carbone, Emanuela Sasso, and Veronica Umanit{\`a}.
\newblock Decoherence for quantum markov semi-groups on matrix algebras.
\newblock In {\em Annales Henri Poincar{\'e}}, volume~14, pages 681--697.
  Springer, 2013.

\bibitem[CZ03]{chuang_lecture_2003}
Isaac Chuang and Fen Zhao.
\newblock Lecture 19: {How} to {Build} {Your} {Own} {Quantum} {Computer}.
\newblock page~4, November 2003.

\bibitem[DK51]{daleckii1951formulas}
Yu~L Daleckii and SG~Krein.
\newblock Formulas of differentiation according to a parameter of functions of
  {H}ermitian operators.
\newblock In {\em Dokl. Akad. Nauk SSSR}, volume~76, pages 13--16, 1951.

\bibitem[DK65]{daletskii1965integration}
Ju~L Daletskii and SG~Krein.
\newblock Integration and differentiation of functions of {H}ermitian operators
  and applications to the theory of perturbations.
\newblock {\em AMS Translations (2)}, 47(1-30):10--1090, 1965.

\bibitem[DPPP02]{[D02]}
Paolo Dai~Pra, Anna~Maria Paganoni, and Gustavo Posta.
\newblock Entropy inequalities for unbounded spin systems.
\newblock {\em Ann. Probab.}, 30(4):1959--1976, 10 2002.

\bibitem[dPS04]{DEPAGTER200428}
B.~de~Pagter and F.A. Sukochev.
\newblock Differentiation of operator functions in non-commutative
  ${L}_p$-spaces.
\newblock {\em Journal of Functional Analysis}, 212(1):28 -- 75, 2004.

\bibitem[DPWS02]{de2002double}
B~De~Pagter, H~Witvliet, and FA~Sukochev.
\newblock Double operator integrals.
\newblock {\em Journal of Functional Analysis}, 192(1):52--111, 2002.

\bibitem[FR18]{faist_fundamental_2018}
Philippe Faist and Renato Renner.
\newblock Fundamental {Work} {Cost} of {Quantum} {Processes}.
\newblock {\em Physical Review X}, 8(2):021011, April 2018.

\bibitem[FU07]{fagnola2007generators}
Franco Fagnola and Veronica Umanita.
\newblock Generators of detailed balance quantum markov semigroups.
\newblock {\em Infinite Dimensional Analysis, Quantum Probability and Related
  Topics}, 10(03):335--363, 2007.

\bibitem[FV82]{frigerio1982long}
Alberto Frigerio and Maurizio Verri.
\newblock Long-time asymptotic properties of dynamical semigroups
  onw*-algebras.
\newblock {\em Mathematische Zeitschrift}, 180(3):275--286, 1982.

\bibitem[GJL18]{gao2018fisher}
Li~Gao, Marius Junge, and Nicolas LaRacuente.
\newblock Fisher information and logarithmic {S}obolev inequality for matrix
  valued functions.
\newblock {\em arXiv preprint arXiv:1807.08838}, 2018.

\bibitem[GKS76]{[GKS76]}
V~Gorini, A~Kossakowski, and E~C~G Sudarshan.
\newblock {Complete positive dynamical semigroups of N-level systems}.
\newblock {\em Journal of Mathematical Physics}, 17(1976):821, 1976.

\bibitem[HP12]{hiai2012quasi}
Fumio Hiai and D{\'e}nes Petz.
\newblock From quasi-entropy to various quantum information quantities.
\newblock {\em Publications of the Research Institute for Mathematical
  Sciences}, 48(3):525--542, 2012.

\bibitem[HS87]{Holley1987}
Richard Holley and Daniel Stroock.
\newblock Logarithmic {S}obolev inequalities and stochastic {I}sing models.
\newblock {\em Journal of Statistical Physics}, 46(5):1159--1194, Mar 1987.

\bibitem[JZK{\etalchar{+}}13]{joos2013decoherence}
Erich Joos, H~Dieter Zeh, Claus Kiefer, Domenico~JW Giulini, Joachim Kupsch,
  and Ion-Olimpiu Stamatescu.
\newblock {\em Decoherence and the appearance of a classical world in quantum
  theory}.
\newblock Springer Science \& Business Media, 2013.

\bibitem[KB14]{Kastoryano2014}
Michael~J. Kastoryano and Fernando G. S.~L. Brandao.
\newblock {Quantum {G}ibbs Samplers: the commuting case}.
\newblock page~42, sep 2014.

\bibitem[KBD{\etalchar{+}}08]{kraus_preparation_2008}
B.~Kraus, H.~P. Büchler, S.~Diehl, A.~Kantian, A.~Micheli, and P.~Zoller.
\newblock Preparation of {Entangled} {States} by {Quantum} {Markov}
  {Processes}.
\newblock {\em Physical Review A}, 78(4):042307, October 2008.
\newblock arXiv: 0803.1463.

\bibitem[KT13]{[KT13]}
Michael~J. Kastoryano and Kristan Temme.
\newblock {Quantum logarithmic {S}obolev inequalities and rapid mixing}.
\newblock {\em Journal of Mathematical Physics}, 54(5), 2013.

\bibitem[Led01]{ledoux2001logarithmic}
Michel Ledoux.
\newblock Logarithmic sobolev inequalities for unbounded spin systems
  revisited.
\newblock In {\em S{\'e}minaire de Probabilit{\'e}s XXXV}, pages 167--194.
  Springer, 2001.

\bibitem[Lin74]{lindblad1974expectations}
G{\"o}ran Lindblad.
\newblock Expectations and entropy inequalities for finite quantum systems.
\newblock {\em Communications in Mathematical Physics}, 39(2):111--119, 1974.

\bibitem[Lin76]{Lind}
G.~Lindblad.
\newblock {On the generators of quantum dynamical semigroups}.
\newblock {\em Comm. Math. Phys.}, 48(2):119--130, 1976.

\bibitem[MHF16]{[MF16]}
Alexander M{\"u}ller-Hermes and Daniel~Stilck Franca.
\newblock Sandwiched {R}\'{e}nyi convergence for quantum evolutions.
\newblock {\em arXiv:1607.00041}, 2016.

\bibitem[MHSFW16a]{MSFW}
Alexander M\"{u}ller-Hermes, Daniel Stilck~França, and Michael~M. Wolf.
\newblock Entropy production of doubly stochastic quantum channels.
\newblock {\em Journal of Mathematical Physics}, 57(2):022203, 2016.

\bibitem[MHSFW16b]{muller2016relative}
Alexander M{\"u}ller-Hermes, Daniel Stilck~Fran{\c{c}}a, and Michael~M Wolf.
\newblock Relative entropy convergence for depolarizing channels.
\newblock {\em Journal of Mathematical Physics}, 57(2):022202, 2016.

\bibitem[MHSW16]{[MHFW15]}
Alexander M{\"{u}}ller-Hermes, Daniel {Stilck Fran{\c{c}}a}, and Michael~M.
  Wolf.
\newblock {Entropy production of doubly stochastic quantum channels}.
\newblock {\em Journal of Mathematical Physics}, 57(2):022203, feb 2016.

\bibitem[Mic97]{miclo1997remarques}
Laurent Miclo.
\newblock Remarques sur l’hypercontractivit{\'e} et l’{\'e}volution de
  l’entropie pour des cha{\^\i}nes de {M}arkov finies.
\newblock In {\em S{\'e}minaire de Probabilit{\'e}s XXXI}, pages 136--167.
  Springer, 1997.

\bibitem[OZ99]{[OZ99]}
Robert Olkiewicz and Bogusław Zegarlinski.
\newblock Hypercontractivity in noncommutative {L}p spaces.
\newblock {\em Journal of Functional Analysis}, 161(1):246 -- 285, 1999.

\bibitem[PS08]{Potapovsudo}
Denis Ptapov and Fyodor Sukochev.
\newblock Lipschitz and commutator estimates in symmetric operator spaces.
\newblock {\em Journal of Operator Theory}, 59(1):211--234, 2008.

\bibitem[PS10]{Potapov}
{Potapov, D.} and {Sukochev, F.}
\newblock Double operator integrals and submajorization.
\newblock {\em Mathematical Modelling of Natural Phenomena}, 5(4):317--339,
  2010.

\bibitem[Rag16]{RaginskyMaxima}
Maxim Raginsky.
\newblock Strong data processing inequalities and $phi $-{S}obolev inequalities
  for discrete channels.
\newblock {\em IEEE Transactions on Information Theory}, 62(6):3355--3389,
  2016.

\bibitem[Spo78]{S78}
Herbert Spohn.
\newblock Entropy production for quantum dynamical semigroups.
\newblock {\em Journal of Mathematical Physics}, 19(5):1227--1230, 1978.

\bibitem[Tem14]{Temme2014}
Kristan Temme.
\newblock {Thermalization time bounds for Pauli stabilizer Hamiltonians}.
\newblock dec 2014.

\bibitem[TPK14]{[TPK14]}
Kristan Temme, Fernando Pastawski, and Michael~J Kastoryano.
\newblock {Hypercontractivity of quasi-free quantum semigroups}.
\newblock {\em Journal of Physics A: Mathematical and Theoretical},
  47(40):405303, 2014.

\bibitem[VWC09]{verstraete2009quantum}
Frank Verstraete, Michael~M Wolf, and J~Ignacio Cirac.
\newblock Quantum computation and quantum-state engineering driven by
  dissipation.
\newblock {\em Nature physics}, 5(9):633, 2009.

\end{thebibliography}
\end{document}